\documentclass[11pt]{article}
\usepackage[margin=1in]{geometry}

\usepackage{amssymb,amsmath,amsthm,amsopn}
\usepackage{mathrsfs}

\usepackage[usenames,dvipsnames]{color}
\usepackage{xcolor}
\usepackage[colorlinks]{hyperref}

\usepackage{tikz}
\usepackage{pgfplots}
\usepackage{graphicx}
\usepackage{mathtools}
\usepackage{cite}

\usepackage{thmtools, thm-restate}
\usepackage{xparse}
\usepackage{everysel}

\allowdisplaybreaks


\newcommand{\defn}[1]{\textcolor{blue}{#1}}

\newtheorem{lemma}{Lemma}[section]
\newtheorem{theorem}[lemma]{Theorem}
\newtheorem{proposition}[lemma]{Proposition}
\newtheorem{corollary}[lemma]{Corollary}
\numberwithin{equation}{section}

\declaretheoremstyle[
    headfont=\bfseries,
    notefont=\itshape,
    bodyfont=\normalfont,
    headpunct={},
    headformat={\defn{ \NAME\ \NUMBER}},
]{definitionsty}

\EverySelectfont{%
\fontdimen2\font=0.4em
\fontdimen3\font=0.2em
\fontdimen4\font=0.1em
\fontdimen7\font=0.1em
\hyphenchar\font=`\-
}

\declaretheoremstyle[
    headfont=\ttfamily\bfseries,
    notefont=\itshape,
    bodyfont=\ttfamily,
    headpunct={\textcolor{red}{.}},
    headformat={\textcolor{red}{ \NAME}},
]{specialremarksty}

\theoremstyle{definition}
\newtheorem*{remark}{Remark}
\newtheorem*{remarks}{Remarks}

\theoremstyle{specialremarksty}


\newcommand{\numset}[1]{\mathbb{#1}}
	\newcommand{\C}{\numset{C}}	
	\newcommand{\R}{\numset{R}}
	
	\newcommand{\Z}{\numset{Z}}
	\newcommand{\N}{\numset{N}}
	\def\cc{\numset{C}}

	\def\nn{\numset{N}}
	\def\infspec{\inf\mathrm{sp}\,}
	
\newcommand{\one}{\mathrm{Id}}
\def\id{\mathrm{Id}}
\def\i{{\rm i}}
\renewcommand{\d}{\operatorname{d}\!}
\newcommand{\inv}{^{-1}}


	\DeclareMathOperator{\spr}{spr}
	\renewcommand{\sp}{\operatorname{sp}}

	
	\newcommand{\tr}{\operatorname{Tr}}
	\DeclareMathOperator{\ran}{Ran}

\newcommand{\spacefont}[1]{\mathcal{#1}}
	\renewcommand{\H}{\spacefont{H}}
	\newcommand{\B}{\spacefont{B}}
	\newcommand{\I}{\spacefont{I}}

\newcommand{\sys}{\mathcal{S}}
\newcommand{\env}{\mathcal{E}}
\renewcommand{\L}{\mathcal{L}}
	\newcommand{\kat}[1]{W_{#1}}
	\newcommand{\Kat}[1]{K_{#1}}
	\newcommand{\Phase}[1]{\Phi_{#1}}

\newcommand{\invar}{^\textnormal{inv}}
\newcommand{\init}{^\textnormal{i}}
\newcommand{\fin}{^\textnormal{f}}
\DeclareDocumentCommand \envstate { o } {%
  \IfNoValueTF {#1} {%
    \xi%
  }{%
    \xi_{#1}%
  }%
}
\DeclareDocumentCommand \sysstate { o } {%
  \IfNoValueTF {#1} {%
    \rho%
  }{%
    \rho_{#1}%
  }%
}

 \newcommand{\Udiff}{X}

\newcommand{\ham}{h}
	\newcommand{\hsys}{\ham_{\sys}}
	\DeclareDocumentCommand \henv { o } {%
	  \IfNoValueTF {#1} {%
	    \ham_{\mathcal{E}}%
	  }{%
	    \ham_{\mathcal{E}_{#1}}%
	  }%
	}


\begin{document}

\author{ Eric Hanson$^{1}$, Alain Joye$^{2}$, Yan Pautrat$^{3}$, Renaud Raqu\'epas$^{1}$
}

\title{Landauer's Principle in Repeated Interaction Systems}

\maketitle

\thispagestyle{empty}

\begin{center}\small
$^1$ Department of Mathematics and Statistics, \\
McGill University\\
1005-805 rue Sherbrooke Ouest \\
Montr\'eal~QC, H3A 0B9, Canada

\medskip
$^2$ Universit\'e Grenoble Alpes \\ CNRS, Institut Fourier \\ F-38000 Grenoble, France

\medskip
$^3$ Laboratoire de Math\'ematiques d'Orsay,\\
Univ. Paris-Sud, CNRS, Universit\'e
Paris-Saclay\\  91405 Orsay, France
\end{center}

\begin{abstract}
We study Landauer's Principle for Repeated Interaction Systems (RIS) consisting of a reference quantum system $\mathcal{S}$ in contact with a structured environment $\mathcal{E}$ made of a chain of independent quantum probes; $\mathcal{S}$ interacts with each probe, for a fixed duration, in sequence. We first adapt  Landauer's lower bound, which relates the energy variation of the environment $\mathcal{E}$  to a decrease of entropy of the system $\mathcal{S}$  during the evolution, to the peculiar discrete time dynamics of RIS. Then we consider RIS with a structured environment $\mathcal{E}$ displaying small variations of order $T^{-1}$ between the successive probes encountered by $\mathcal{S}$, after $n\simeq T$ interactions, in keeping with adiabatic scaling. We establish a discrete time non-unitary adiabatic theorem to approximate the reduced dynamics of $\mathcal{S}$ in this regime, in order to tackle the adiabatic limit of Landauer's bound. We find that saturation of Landauer's bound is related to a detailed balance condition on the repeated interaction system, reflecting the non-equilibrium nature of the repeated interaction system dynamics.
This is to be contrasted with the generic saturation of Landauer's bound known to hold for continuous time evolution of an open quantum system interacting with a single thermal reservoir in the adiabatic regime. 

\end{abstract}

\section{Introduction} \label{sec:Introduction}

Landauer's Principle deals with the variations of thermodynamical quantities of a reference system $\sys$ interacting with an environment $\env$ at \emph{inverse temperature} $\beta$.\footnote{The inverse temperature is $\beta := (k_B T)^{-1}$, where $k_B$ is Boltzmann's constant and $T$ is the temperature. We will set $k_B=1$ so that the environment is at temperature $T = \beta^{-1}$. We reserve the symbol $T$ for adiabatic scaling.} It can be stated as a general lower bound on the variation of energy $\Delta Q_\env$ of the environment necessary for the system~$\sys$ to undergo some entropy decrease $\Delta S_\sys$ under the unitary evolution of the coupled system $\sys+\env$, in terms of this entropy variation. As we recall in Section\nobreakspace \ref {sec:Landauer}, this bound is a direct consequence of the fact that the difference between the energy and entropy variations considered is a non-negative relative entropy, or entropy production, $\sigma$: one has $\Delta S_\sys+\sigma=\beta \Delta Q_\env$, \cite{La,RW13}. See \cite{JP14} for generalisations and a thorough discussion of this principle. 
A typical application of this principle arises in state engineering processes, when starting from some initial state $\sysstate\init$ of $\sys$, one aims for a given  target state $\sysstate\fin$ for $\sys$, by a suitable, possibly time dependent, coupling with an environment.
While Landauer's lower bound is known not to be optimal, a natural question in this framework is to ask under which circumstances or in which regimes it can be saturated, i.e. when the entropy production $\sigma\simeq 0$. The derivation of Landauer's Principle and the laws of thermodynamics suggest that for quasi static processes saturation should occur. Indeed, the case where the Hamiltonian of the coupled system $\sys+\env$ is time-dependent and slowly varying is one such regime. This regime describes, for example, the adiabatic switching of the interaction between the reference system and the environment. That Landauer's bound is saturated in the adiabatic limit is proven in great generality in~\cite{JP14}. The authors show that the time evolution of the entropy production $\sigma$ vanishes in the adiabatic limit, by means of an adiabatic theorem suitable for generators without gap in their spectrum,~\cite{AE99}.

In this paper, we reconsider Landauer's Principle and some of its properties for quantum dynamical systems called Repeated Interaction Systems (RIS), and described in Section\nobreakspace \ref {sec:RIS}. They consist of a reference system~$\sys$ interacting with a structured environment $\env$ defined as a chain of independent quantum systems, the probes, which the reference system~$\sys$ interacts with, in sequence, for a fixed duration. After~$\sys$ and the $k$-th probe interact by means of a unitary operator, the probe is discarded, together with the energy it has exchanged with $\sys$. The fact that a used probe never interacts with $\sys$ again provides the chain of probes the status of an effective environment. 
When the probes are all identical and, say, in a thermal state as they start interacting with the system~$\sys$, the latter is driven for large times to a non equilibrium steady state, the characteristics of which depend on the state of the probes, not on the initial state $\sysstate\init$ of $\sys$. See  \cite{BJM06,BJM08}, and the review \cite{BJM14}. 
The physical archetype of RIS is the one atom maser. The reference system~$\sys$ in this case is the electro-magnetic field in a laser cavity which interacts with a beam of atoms entering the cavity one by one. The sequence of atoms arriving in the cavity in a thermal state provides a structured environment $\env$ for the reference system $\sys$. 

As the fate of the probes after they interact with the reference system is irrelevant for our purpose, we focus on the effective dynamics of the state of $\sys$ given by tracing out the probes degrees of freedom after each unitary interaction step. Thanks to the nature of the dynamics of RIS,  the reduced dynamics on $\sys$ is given by a linear map ${\L}_k$, that describes the evolution of the state of $\sys$ after interaction with the $k$-th probe, and encodes the properties of the state of this probe; see \cite{BJM14}.
Each map ${\L}_k$ is completely positive and trace preserving (CPTP), possesses an invariant state at least (by e.g. Brouwer's theorem) and, as an operator acting on the set of states on $\sys$ endowed with the trace norm, is a contraction; see, e.g. \cite{wolftour}. Therefore, generalising the setup to include variable interactions between $\sys$ and the probes as well as variable incoming probe states, an initial state~$\sysstate\init$ of the reference system undergoes a discrete time evolution to reach $\sysstate\fin={\L}_n {\L}_{n-1} \cdots {\L}_1\sysstate\init$ after interacting with the first $n$ probes of the environment. Landauer's Principle for RIS is then formulated in a similar way as for continuous quantum systems in Section\nobreakspace \ref {ssec:LanRIS}: after interacting with the $k$-th probe, the difference between the energy variation of that $k$-th probe and the entropy variation of the reference system is given by a non-negative relative entropy, or entropy production, $\sigma_k$, depending on $k$. Summing over all involved probes, we derive a version of Landauer's lower bound on the energy variation of the structured environment $\env$ in terms of the entropy variation between  $\sysstate\init$  and $\sysstate\fin={\L}_n {\L}_{n-1} \cdots {\L}_1\sysstate\init$ for RIS.

Our next goal is to study Landauer's bound in case the RIS undergoes quasi static transformations, in the spirit of \cite{JP14}. We consider environments $\env$ consisting of probes whose states may slightly vary between successive elements, and interactions between the system $\sys$ and the successive probes which may slightly vary from one probe to the other. The sizes of these slight variations are of order $T^{-1}\ll 1$, and thus give rise to $T$-dependent CPTP contractive maps ${\L}_{k,T}$, $k\in \N$, whose successive differences are of order $T^{-1}$, where $T$ is the maximum number of steps considered and the \emph{adiabatic parameter} of the process.  We then consider the evolution of the coupled system $\sys$+$\env$ after $\sys$ has interacted with $n\simeq T$ probes, in keeping with the adiabatic scaling. To analyse the saturation of the RIS Landauer bound for $T\gg 1$, we need to assess the large $T$ behaviour of $\sum_{k=1}^{T}\sigma_{k, T}$, where~$\sigma_{k, T}$ denotes the entropy production at the $k$-th step of the evolution. 
To do this, we formulate and prove an adiabatic theorem for contractive non-unitary discrete time evolutions in Section\nobreakspace \ref {sec:Adiabatic}, allowing us to get the asymptotics of the $T$-dependent states $\sysstate\fin_{k, T}={\L}_{k,T} {\L}_{k-1, T} \cdots {\L}_{1,T}\sysstate\init$ for all~$k \leq  T$, in the limit $T\rightarrow \infty$. 

This technical result, a mathematical result in its own right, extends the range of situations in which adiabatic approximations for linear evolution operators are available. We note that adiabatic approximations for unitary evolution operators generated by slowly varying time dependent self-adjoint generators with gaps in their spectrum can be found in \cite{KatoPaper,Nen,ASY} and, without gap assumptions, in \cite{AE99,Teufel}. Extensions to non-unitary semigroups of contractions with or without gap condition can be found in \cite{ASF,Joye,AFGG,Schmid}. On the other hand, discrete time adiabatic theorems have been proven in \cite{DKS98,Tan11} for unitary groups only. Our contribution thus provides an adiabatic approximation in the discrete time non-unitary case.

As a consequence of this result, we obtain the asymptotics of the relative entropy $\sigma_{k, T}$  as~$T\rightarrow \infty$. As can be expected, for suitable initial states $\sysstate\init$, the instantaneous invariant states of the CPTP maps ${\L}_{k,T}$, $k\in \N$, provide the leading term of $\sysstate\fin_{k, T}$ in the adiabatic scaling. We establish in Section\nobreakspace \ref {sec:Relative-entropy} an efficient perturbation theory of the relative entropy of two states that are close to a given fixed state (the invariant state of ${\L}_{k,T}$ in our case) and this allows us to use the leading term of $\sysstate\fin_{k, T}$ to compute the non-zero leading order of the relative entropy $\sigma_{k, T}$  for $T$ large. In turn, 
this makes it possible to analyse the total entropy production $\sum_{k=1}^{T}\sigma_{k, T}$ as $T\rightarrow \infty$ as a function of exterior parameters, in particular, the coupling strength between the probes and the reference system. This analysis identifies a key quantity $X_{k,T}$ which controls the vanishing of the entropy production; see Corollary\nobreakspace \ref {cor:sup_inf_special_RIS_Xs}. The case $X_{k,T} = 0$ is shown to relate to a detailed balance condition in Lemma\nobreakspace \ref {lemma_caracX}, and implies vanishing of the total entropy production. We have an explicit example demonstrating $X_{k,T} = 0$; another, $\|X_{k,T}\|> 0$, which yields divergent total entropy production (with an explicit rate).  The second case is expected generically, due to the nature of the RIS dynamics which imposes a change of probe at each time step (see Section\nobreakspace \ref {sec_concludingremarks} for a more detailed discussion).

\paragraph{Acknowledgements.} The research of Y.P. was supported by ANR contract ANR-14-CE25-0003-0. Y.P. also wishes to thank UMI-CRM for financial support, and McGill University for its hospitality. The research of E.H. and R.R. was partly supported by ANR contracts ANR-13-BS01-0007 and ANR-14-CE25-0003-0. The research of R.R. was also partly supported by NSERC. E.H. and R.R. wish to thank the Institut Fourier, where part of this research was carried out, for its support and hospitality. We would like to thank V. Jak\v{s}i\'{c} for informative discussions and suggestions about this project, and S. Andr\'eys for discussions that led to the formulation of Lemma \ref{lemma_caracX}.

\section{Landauer's Principle} \label{sec:Landauer}

We will state and derive Landauer's Principle in the simple case where both the small system $\sys$ and the reservoir $\env$ are described by finite dimensional Hilbert spaces (such a reservoir is called \textit{confined}). We also assume that the total system $\sys+\env$ is closed, and its evolution is therefore given by a unitary operator. This derivation is given in \cite{RW13} and extended in \cite{JP14}.

\subsection{Definitions and statement of Landauer's Principle}

The system of interest $\sys$ is described by a finite dimensional Hilbert space $\H_\sys$ with self-adjoint Hamiltonian $\hsys$. The environment $\env$ is described by a finite dimensional Hilbert space $\H_\env$ with Hamiltonian $\henv$. In this finite dimensional framework, the physical state of $\sys$ (respectively $\env$, $\sys+\env$) is described by a non-negative trace-class operator on $\H_\sys$ (resp. $\H_\env$, $\H_\sys \otimes \H_\env$) with trace equal to one. 
We say that the state is faithful if it is positive definite.

Let $\sysstate\init$ be the initial state of the system. We assume that the environment is initially at thermal equilibrium at inverse temperature $\beta$, that is in the Gibbs state $\envstate\init = \exp[-\beta \henv]/Z$, where $Z = \tr(\exp[-\beta \henv])$. 

The two parts $\sys$ and $\env$ are initially uncoupled and the initial state of the joint system is therefore $\sysstate\init \otimes \envstate\init$. The evolution of the full system is described by a unitary $U \in \B(\H_\sys \otimes \H_\env)$, bringing the state to $U \sysstate\init \otimes \envstate\init U^*$. The two systems are then decoupled to obtain final states 
\[\sysstate\fin = \tr_\env(U \sysstate\init \otimes \envstate\init U^*), \qquad \envstate\fin = \tr_\sys(U \sysstate\init \otimes \envstate\init U^*)\]
(the \textit{partial traces} $ \tr_{\env}$ and $\tr_\sys$ are defined by $\tr_{\env}(A\otimes B):= \tr(B) \, A$, $\tr_{\sys}(A\otimes B):= \tr(A) \, B$ when $A\otimes B$ is an operator on~$\H_\sys\otimes \H_\env$).

We define the decrease of entropy of the system and the increase of energy of the environment
\[
	\Delta S_\sys := S(\sysstate\init) - S(\sysstate\fin), \qquad
	\Delta Q_\env := \tr(\henv\envstate\fin) - \tr(\henv\envstate\init),
\]
where $S(\eta)$ is the von Neumann entropy $S(\eta):= -
\tr(\eta \log\eta)$. 
We will drop the $\env$ and $\sys$ subscripts of $\Delta Q_\env$ and $\Delta S_\sys$ for notational simplicity.
\smallskip

Let the relative entropy of two faithful states $\eta$, $\nu$ be given by $S(\eta|\nu):= 
\tr\big(\eta (\log \eta - \log \nu)\big)$. Then $S(\eta|\nu)\geq 0$, with equality if and only if $\eta=\nu$ (see Section 2.6 in \cite{JOPP12}). Let
\begin{equation}
\sigma = S(U \sysstate\init \otimes \envstate\init \,U^* | \sysstate\fin \otimes \envstate\init). \label{eq:production}
\end{equation}
A straightforward computation gives
\begin{align}	
\sigma&= - S(U \sysstate\init\otimes \envstate\init U^*) -
\tr\big( U \sysstate\init\otimes \envstate\init U^* \, (\log \sysstate\fin \otimes \one)\big) - 
\tr\big( U \sysstate\init\otimes \envstate\init U^* \,  ( \one \otimes \log \envstate\init)\big) \notag \\
&= -S(\sysstate\init\otimes \envstate\init ) + S(\sysstate\fin) -
\tr(\envstate\fin \log \envstate\init) \notag \\
&= -S(\sysstate\init) - S(\envstate\init ) + S(\sysstate\fin)  -
\tr(\envstate\fin \log \envstate\init) \notag \\
&=-\Delta S +  
\beta\Delta Q. \label{eq:balance}
\end{align}
This \textit{entropy balance equation} implies the \emph{Landauer bound}
\begin{align}
	\Delta Q \geq  
	\beta^{-1}\Delta S. \label{eq:Landauer}
\end{align}
by nonnegativity of relative entropies. 
More careful examination shows that equality holds in~\eqref{eq:Landauer} if and only if $\Delta S = \Delta Q = 0$ (see \cite{JP14}), in which case $\xi\fin=\xi\init$, and $\rho\init$, $\rho\fin$ are unitarily equivalent. Therefore, the saturation of inequality~\eqref{eq:Landauer} in the finite time setting holds only in trivial cases. 

Landauer's Principle in the context of more general $C^*$-dynamical systems, allowing the treatment of infinite dimensional environment Hilbert space $\H_\env$, is discussed in \cite{JP14}. The dynamics is still assumed to be conservative (and therefore given by an automorphism). For repeated interaction systems, however, a description by conservative dynamics is impractical (see the discussion at the end of Section\nobreakspace \ref {ssec:LanRIS}). We therefore revert to non-unitary dynamics, which cannot be treated by the results of~\cite{JP14}.

\subsection{The adiabatic limit} \label{subsec_adiabatic}

In this subsection, we recall that the saturation  of inequality \eqref{eq:Landauer} also holds in the infinite time regime, in the so-called \textit{adiabatic limit}. We start by recalling this limit.

For a system with state space $\H$ and whose time evolution for $s \in [0,1]$ is described by the Schr\"odinger equation with time-dependent Hamiltonian $\ham(s) \in \B(\H)$,
\[ i\frac{\d}{\d s} U(s) = \ham(s)U(s), \ s \in [0,1], \mbox{ with }U(0) = \one,\] 
the \emph{adiabatic limit} concerns the solution $U_T(s)$ of the rescaled Schr\"odinger equation
\begin{equation} \label{eq:rescaled} i\frac{\d}{\d t} U_T(t)= \ham(t/T)\,U_T(t), \ t \in [0,T], \mbox{ with }U_T(0) = \one,\end{equation}
in the limit $T \to \infty$. In this context, $T$ represents the physical time scale over which the process takes place and taking the limit $T\to\infty$ corresponds to the process being ``infinitely slow'', or quasi static.

Since the seminal work of Born and Fock \cite{BornFock}, a variety of adiabatic theorems have been formulated and proven, of which Kato's remains one of the most representative. In his~1950 paper \cite{KatoPaper}, Kato considers for $s\in[0,1]$ an eigenvalue parametrized by a continuous function~$e_1(s)$ of $\ham(s)$, with twice continuously differentiable spectral projector $P_1(s)$, and assumes that~$e_1(s)$ is separated from the rest of the spectrum by a gap. He then constructs a family $W(t)$, $t\in[0,T]$, of unitary operators satisfying
\begin{equation} W(t)P_1(0) = P_1(t/T)\,W(t)\label{eq:intertwine}\end{equation}
\begin{equation} \label{eq:KatoAdiabaticEstimate}
\Big( U_T(t) - \exp\big(-\i T \int_{0}^{t/T} \!\! e(s) \d s \big)\, W(t)\Big)\,  P_1(0) = O(T^{-1}), \ \mbox{uniformly in }t\in [0,T].
\end{equation}
In particular, 
  \eqref{eq:KatoAdiabaticEstimate}  shows that, in the adiabatic limit, the unitary dynamics $U_T$ essentially maps continuously every spectral subspace of $h(0)$ to the corresponding subspace of $h(s)$.

Fix $T>0$. Then, considering the unitary $U_T=U_T(1)$, the preceding subsection relates the quantities $\Delta S_T$ and $\Delta Q_T$ by
$ 	
\Delta S_T + \sigma_T = \beta \Delta Q_T 
$
with 
$
\sigma_{T} =  S\big(U_T (\sysstate\init \otimes \envstate\init) U_T^*| \sysstate_T\fin \otimes \envstate\init\big).
$

Another adiabatic theorem, that of Avron--Elgart (which does not require the eigenvalue~$e_1(s)$ to be isolated in the spectrum of $h(s)$~\cite{AE99}), is used in \cite{JP14}, together with Araki's perturbation theory of KMS states, to prove that (under ergodic assumptions on the dynamics of the system corresponding to different values of $t$), the map $\eta\mapsto U_T \eta U_T^*$ satisfies a relation similar to \eqref{eq:KatoAdiabaticEstimate} and maps to order $o(1)$ the state $\rho\init\otimes\xi\init$ to $\rho_T\fin\otimes\xi\init$. This implies 
$\lim_{T \to \infty} \sigma_{T}= 0$
and shows that Landauer's inequality is typically saturated in the adiabatic limit, at least in systems undergoing a (time-dependent) Hamiltonian evolution.

\section{Repeated interaction systems} \label{sec:RIS}

 Repeated interaction systems (RIS) form a special class of open quantum systems introduced in various forms (see \cite{AttalPautrat, Kummerer, BJM06, BruPil} and the review \cite{BJM14}; also see \cite{OQS1,OQS2,OQS3} for a discussion of the Hamiltonian and Markovian approaches). They consist of an open quantum system where the environment is a chain of outer systems called \textit{probes}, and where the small system $\sys$ interacts sequentially with each probe, one at a time and e.g. for a duration $\tau$, and the degrees of freedom of the probe are traced out before a new probe is brought in. In a RIS, the dynamics of the system are Hamiltonian during the interaction with a fixed probe; they are Markovian in that the environment degrees of freedom are traced out after a probe has been used up, so that (if all probes are identical) the evolution at times $(k\tau)_{k}$ forms a semigroup. In addition, the generators of the Markovian evolution can be directly expressed in terms of the Hamiltonian evolution in the corresponding time interval.

An important example of RIS is the \emph{one-atom maser}, in which the system $\sys$ consists of modes of an electromagnetic field inside a cavity and the probes $\env_k$ are atoms from a beam that interact with the field as they pass through the cavity. We will present two other examples in the next section.

\subsection{Mathematical description} \label{ssec:RISmathdesc}

To model the RIS, we describe the system $\sys$ by a finite-dimensional Hilbert space $\H_\sys$ with internal dynamics generated by a self-adjoint Hamiltonian  $\hsys \in \B(\H_\sys)$. Likewise, each probe is described by a finite-dimensional Hilbert space $\H_{\env\!,k}$, and a self-adjoint Hamiltonian $\henv[k] \in \B(\H_{\env\!,k})$. We assume that all of the probes' Hilbert spaces are identical: $\H_{\env\!,k} \equiv \H_\env$. We specify the initial state of the $k$-th probe, $\envstate[k]\init$, to be the Gibbs state at inverse temperature~$\beta_k$, that is
\begin{equation}\label{eq:thermal}
\envstate[k]\init := \frac{\exp-\beta_k \henv[k]}{\tr(\exp-\beta_k  \henv[k])}.
\end{equation}
	
We note that this is not the most general description of repeated interaction systems. For example, the Hilbert space of the ``small'' system in the one-atom maser setup is $\H_\sys \cong \Gamma_+(\C)$, the bosonic Fock space over $\C$, which is not finite dimensional. However, it is reasonable to approximate the system by restricting the description of the electromagnetic field to a finite number of energy levels (see \cite{BruPil} for a full treatment of the one-atom maser).
\smallskip
	
The evolution of the state of the system $\sys$ can be described in the following way: it starts from an initial state $\sysstate\init$, and assuming that it has evolved to $\sysstate[k-1]$ after interacting with the first~$k-1$ probes, the system interacts with the $k$-th probe as pictured in Figure\nobreakspace \ref {fig:RIS}.

The system and $k$-th chain element, which is initially in the state $\envstate[k]\init$, evolve for a  time $\tau_k$ via a potential $v_k$ with coupling constant $\lambda_k$, according to the unitary operator
\begin{equation} \label{eq_defUk}
U_k := \exp\big(-i\tau_k(\hsys \otimes \one + \one \otimes \henv[k] + \lambda_k v_k)\big),
\end{equation}
that is,
$\sysstate[k-1] \otimes \envstate[k]\init$ evolves to $U_k (\sysstate[k-1] \otimes \envstate[k]\init) U_k^*$.

Then, we trace out the chain element to obtain the system state
$$
	\sysstate[k] = \tr_{\env}\! \big(U_k (\sysstate[k-1] \otimes \envstate[k]\init) U_k^*\big).
$$
	
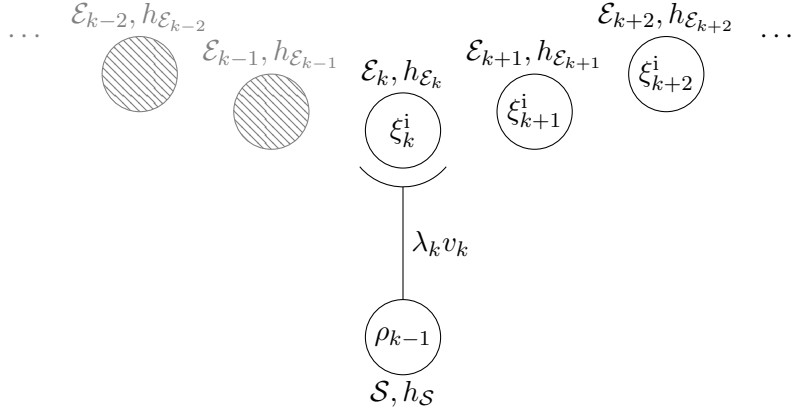
\begin{figure}[ht]
\begin{center}
\usetikzlibrary{patterns}
\begin{tikzpicture}[scale=0.5]
\draw  (0,0) ellipse (1 and 1);
\node at (0,0) {$\rho_{k-1}$};
\node at (0,-1.5) {$\mathcal{S},h_\sys$};

\draw (0,4) -- (0,1);
\node at (1,2.5) {$\lambda_k v_k$};

\draw  (0,5.5) ellipse (1 and 1);
\node at (0,7) {$\mathcal{E}_{k}, h_{\env_k}$};
\node at (0,5.5) {$\envstate[k]\init$};

\draw[gray,pattern=north west lines, pattern color=gray]  (-3.5,6) ellipse (1 and 1);
\node at (-3.5,7.5) {$\color{gray}\mathcal{E}_{k-1}, h_{\env_{k-1}}$};

\draw[gray,pattern=north west lines, pattern color=gray]  (-7,7) ellipse (1 and 1);
\node at (-7,8.5) {$\color{gray}\mathcal{E}_{k-2}, h_{\env_{k-2}}$};

\draw  (7,7) ellipse (1 and 1);
\node at (7,8.5) {$\mathcal{E}_{k+2}, h_{\env_{k+2}}$};
\node at (7,7) {${\envstate[k+2]\init}$};

\draw  (3.5,6) ellipse (1 and 1);
\node at (3.5,7.5) {$\mathcal{E}_{k+1}, h_{\env_{k+1}}$};
\node at (3.5,6) {${\envstate[k+1]\init}$};

\node at (-10,8) {$\color{gray}\cdots$};
\node at (10,8) {$\cdots$};

\draw (-1.1491,4.5358) arc (-140.0003:-40:1.5);
\end{tikzpicture}
	\caption{Schematic representation of a repeated interaction system at the beginning of the $k$-th step, that is at time $\sum_{n=1}^{k-1} \tau_n$. }
	\label{fig:RIS}
\end{center}
\end{figure}
This procedure defines a family of maps on $\I_1(\H_\sys)$, the trace-class operators on $\H_\sys$,
\begin{equation} \label{eq_defLk}
\begin{array}{cccc}
\L_k : & \I_1(\H_\sys) &\to &\I_1(\H_\sys) \\
& \eta &\mapsto &\tr_{\env} \big(U_k (\eta \otimes \envstate[k]\init) U_k^*\big)
\end{array}
\end{equation}
called the \emph{reduced dynamics}. In this language, the state of the system after step $k$ is therefore given by \[\sysstate[k] = \L_k \L_{k-1}\dotsm \L_1 \sysstate\init.\]
We therefore recover a Markovian form for the sequence of states $(\sysstate[k])_k$, and our definition \eqref{eq_defLk} shows that the expression for $\L_k$ can be derived in terms of the full Hamiltonian described by~$\hsys$,~$\henv[k]$, $v_k$, $\lambda_k$, $\tau_k$, and of the initial state $\xi_k\init$ of the probe.
\begin{remark}
We assume without loss of generality that $\lambda_k=\lambda$ for all $k\in \N$. The coupling constant~$\lambda$ will play a distinguished role below. 
\end{remark}
\smallskip

Let us recall the properties of the maps $\L_k$: each $\L_k$ is trace-preserving (i.e. $\tr \L_k(\eta)=\tr \eta$ for all $\eta$) and completely positive (i.e. for any $n\in \nn$, the map $\L_k \otimes \id$ on $\mathcal I_1(\H_\sys)\otimes \mathcal B(\cc^n)$ maps nonnegative operators to nonnegative operators, see \cite{OQS2} for more details). It is therefore called a \textit{CPTP map}, or a \textit{quantum channel}, on the ideal $\I_1(\H_\sys)$ of trace-class operators on $\H_\sys$ equipped with the trace-norm $\|\eta\|_1=\tr\big((\eta^* \eta)^{1/2}\big)$. Denote by $\|\L_k\|$ its uniform norm as an operator on $\I_1(\H_\sys)$. We recall that the topological dual $\I_1(\H_\sys)^*$ can be identified with $\B(\H_\sys)$, equipped with the operator norm~$\|A\|_\infty=\sup_{\psi\in \H_\sys, \|\psi\|\leq 1}\|A\psi\|$ through the duality
\[(\rho,A)\mapsto \tr(\rho A).\]
The trace-preservation of $\L_k$ is equivalent to $\L_k^*(\id)=\id$. The adjoint $\L_k^*$ is then a positive, unital linear map on the Banach space $\B (\H_\sys)$, and by the Russo-Dye theorem (\hspace{1sp}\cite{RD}), its operator norm $\|\L_k^*\|$ satisfies $\|\L_k^*\|=\|\L_k^*(\id)\|_{\infty}$ so that $\|\L_k\|=\|\L_k^*\|=1$. However, $\L_k$ is in general not a contraction when we equip the set $\B(\H_\sys)$ with the Hilbert-Schmidt norm $\|\cdot\|_2$ induced by the inner product $(A,B)\mapsto \tr(A^* B)$: it is immediate that a necessary condition is~$\L_k(\id)=\id$ (and this condition can be seen to be sufficient using the operator Schwarz inequality, see Theorem~5.3 in \cite{wolftour}). Note also the property $\L_k(\rho)^*=\L_k(\rho^*)$, 
from which we get that the spectrum of~$\L_k$ on $\I_1(\H_\sys)$ is symmetric with respect to the real axis: $\sp (\L_k)=\overline{\sp (\L_k)}$.

\paragraph{Example}
We consider the simplest non-trivial RIS, in which both the system and probes are 2-level systems, i.e. $\H_\sys = \H_\env =\C^2$. Moreover, we specify Hamiltonians $\hsys = E \,a^* a$ and $\henv = E_0\, b^* b$, where $a$/$a^*$, resp. $b$/$b^*$, are the annihilation/creation operators for $\sys$, resp. $\env$. As matrices expressed in the (ground state, excited state) bases of each system,
\begin{align*}
	a = b &= \begin{pmatrix} 0& 1 \\ 0 & 0 \end{pmatrix},
		& a^* = b^* &= \begin{pmatrix} 0& 0 \\ 1 & 0 \end{pmatrix}, &
	a^* a = b^* b &= \begin{pmatrix} 0& 0 \\ 0 & 1 \end{pmatrix}.
\end{align*}
Our first example will be when the two systems are coupled through their dipoles, in the \textit{rotating wave approximation}. This means that the system and chain elements interact via a constant potential $\lambda v_\text{RW}$, where 
\[ v_\text{RW} = \frac{u_1}{2}(a^* \otimes b + a \otimes b^*),\]
where $u_1$ is a constant, which we take equal to $1$ with units of energy.
This is a common approximation in the regime $|E-E_0| \ll \min\{E,E_0\}$ and $\lambda \ll |E_0|$.
To discuss the application of our results, we will also consider the full dipole interaction without the rotating wave approximation, that is, we assume a constant potential $\lambda v_\text{FD}$ with
\[v_\text{FD} = \frac{u_1}{2} (a^* + a) \otimes (b^* + b) = \frac{u_1}{2}(a \otimes b + a^* \otimes b + a \otimes b^* + a^* \otimes b^*).\]
We do not discuss the relative merits of these two models of interaction, but use them to describe two simple examples of repeated interaction systems that have different features with respect to Landauer's Principle in the adiabatic limit (see  Sections\nobreakspace \ref {sec:rotwave} and\nobreakspace  \ref {sec:fulldip}).

Before we move on to the next section, let us introduce a variant of a RIS described by the constant elements $\H_\sys$, $\hsys$, $\H_\env$ and the collection of variables $(\henv[k])_{k=1,\ldots,T}$, $(v_{k})_{k=1,\ldots,T}$,  $(\beta_k)_{k=1,\ldots,T}$. We define the $m$-\emph{repeated version} of this RIS ($m\in\nn$), which we call a $m$-RIS, to be the one associated with $\H_\sys$, $\hsys$, $\H_\env$ and $(\henv[[(k'-1)/m]+1])_{k'=1,\ldots,mT}$, $(v_{[(k'-1)/m]+1})_{k'=1,\ldots,mT}$,  $(\beta_{[(k'-1)/m]+1})_{k'=1,\ldots,T}$. The $m$-repeated version is simply obtained from the original RIS by considering $m$ identical copies of the $k$-th probe before moving on to another probe. This repeated version of an RIS will be useful in later applications.

\subsection{Landauer Principle in RIS} \label{ssec:LanRIS}
Each step of the repeated interaction system consists initially of a product state between the system $\sys$ and a thermal state $\envstate$, which are then time evolved unitarily. This is the framework of Landauer's Principle, as stated in~Section\nobreakspace \ref {sec:Landauer},  so the balance equation, Equation\nobreakspace \textup {(\ref {eq:balance})}, holds at each step. That is, if $\Delta S_k$ is the decrease in entropy of the system during step $k$, and $\Delta Q_k$ is the increase in energy of the probe $\env_k$, i.e.
\begin{align}
	\Delta S_k &:= S(\sysstate[k-1]) - S(\sysstate[k]) = S(\sysstate[k-1]) - S(\L_k(\sysstate[k-1])),\\
	\Delta Q_k &:= \tr\big( \henv[k] \tr_\sys\!\big(U_k (\sysstate[k-1] \otimes \envstate[k]\init) U_k^*\big)\big) - \tr(\henv[k] \envstate[k]\init),
\end{align}
then from Section\nobreakspace \ref {sec:Landauer} we have the entropy balance equation:
\begin{equation}\label{eq:RISbal}
\Delta S_k +\sigma_k = \beta_k \Delta Q_k
\quad \text{
with
} \quad
\sigma_k := S\big(U_k (\sysstate[k-1] \otimes \envstate[k]\init) U_k^* | \L_k(\sysstate[k-1])\otimes \envstate[k]\init \big).
\end{equation}
We will simplify notation and write $\envstate[k]$ for $\envstate[k]\init$. If we consider the first $T$ steps of the repeated interaction process, we sum over $k$ to obtain
\begin{equation}\label{eq_LandauerRIS}	
\sum_{k=1}^T \Delta S_k + \sum_{k=1}^T \sigma_k = \sum_{k=1}^T \beta_k \Delta Q_k.
\end{equation}
In particular we have the repeated interaction system Landauer inequality
\begin{equation}\label{eq_LandauerInequalityRIS}	
S(\sysstate[0]) - S(\sysstate[T]) = \sum_{k=1}^T \Delta S_k\leq  \sum_{k=1}^T \beta_k \Delta Q_k.
\end{equation}
This inequality will be saturated if and only if $\sigma_k=0$ for all $k$. Our main interest in the rest of this paper will be to discuss the behaviour of $\sum_{k=1}^T \sigma_k$ in the adiabatic limit, and in particular to characterize the saturation. Note that, by adiabatic limit, we mean here that parameters $h_{\env_k}, \beta_k, v_k$ will change slowly between probes; the model, however, remains a discrete time model with instantaneous changes.
\begin{remark}
As an alternative approach, one can consider the larger Hilbert space consisting of the  chain up to step $T$ and the small system. If we assume that a probe $\env_j$ not currently interacting with the small system evolves following its free Hamiltonian $\henv[j]$, the time evolution during step $k$ is given by 
\begin{align*}	
\tilde{U}_k &= e^{-i\tau_1 \henv[1]}\otimes \dotsm\otimes e^{-i\tau_{k-1}\henv[k-1]}\otimes U_k \otimes e^{-i\tau_{k+1}\henv[k+1]}\otimes \dotsm \otimes e^{-i\tau_T \henv[T]}
\end{align*}
where $U_k$ is the evolution between the system $\sys$ and $\env_k$, defined by \eqref{eq_defUk}.
Denote $\tilde U _{\mathrm{tot}} = \tilde U_T \ldots \tilde U_1$
and for $\rho$ a state on $\mathcal H_\sys$ let
\[\mathcal L_{\mathrm{tot}}(\rho)= \tr_{\env_1,\ldots,\env_T} \big(\tilde U_{\mathrm{tot}}(\rho\otimes \xi_{\env_1,\ldots,\env_T})\tilde U_{\mathrm{tot}}^*\big)\]
where the partial trace is now defined by $\tr_{\env_1,\ldots,\env_T}(A\otimes B_1\otimes \ldots\otimes B_T):=\tr(B_1\otimes \ldots\otimes B_T)\, A$,  and where we use the shorthand $\xi_{\env_1,\ldots,\env_T}=\bigotimes_{j=1,\ldots,T} \xi_j$. Then by a direct computation we have
\[ \sum_{k=1}^T \sigma_k = S\big(\tilde U_{\mathrm{tot}}(\rho\init\otimes\xi_{\env_1,\ldots,\env_T}) \, \tilde U_{\mathrm{tot}}^*\, | \,\mathcal L_{\mathrm{tot}}(\rho\init)\otimes \xi_{\env_1,\ldots,\env_T}\big).\]

Assume now that, instead of a repeated interaction system, we have a single step with unitary evolution given by $\tilde U_{\mathrm{tot}}$. Then we are in the situation described by Section\nobreakspace \ref {sec:Landauer} and the final state of the small system is
\[\sysstate^{\textnormal{1-step}}_T:=\mathcal L_{\mathrm{tot}}(\rho\init) = \sysstate[T].
\]
A slight variation on the proof of the entropy balance equation given in Section\nobreakspace \ref {sec:Landauer} gives
\[ \Delta S^{\textnormal{1-step}} + \sigma^{\textnormal{1-step}}= \sum_k \beta_k \Delta Q_k,\]
where we have, as one could expect,
\begin{gather*}
\Delta S^{\textnormal{1-step}}:=S(\sysstate\init) - S(\sysstate^{\textnormal{1-step}}_T) =\sum_{k=1}^T \Delta S_k,\\
\sigma^{\textnormal{1-step}}_{\mathrm{tot}} = S\big(\tilde U_{\mathrm{tot}}(\rho\init\otimes\xi_{\env_1,\ldots,\env_T}) \, \tilde U_{\mathrm{tot}}^*\, | \,\mathcal L_{\mathrm{tot}}(\rho\init)\otimes \xi_{\env_1,\ldots,\env_T}\big) = \sum_k \sigma_k.
\end{gather*}
One could therefore hope, in analogy with the situation described in Section\nobreakspace \ref {subsec_adiabatic}, to derive the vanishing of $\sum_k \sigma_j$ from an adiabatic-type result showing the convergence of $\tilde{U}_{\mathrm{tot}}(\rho\init\otimes\xi_{\env_1,\ldots,\env_T}) \, \tilde U_{\mathrm{tot}}^*$ to $\mathcal L_{\mathrm{tot}}(\rho\init)\otimes \xi_{\env_1,\ldots,\env_T}$. However, $\tilde{U}_{\mathrm{tot}}$ involves the first $T$ environments, so that controlling its spectral properties as $T\to\infty$ is difficult. In addition, the state of the joint system consisting of~$T$ environments is in general not a KMS state, so that we cannot use spectral characterizations derived from Araki theory.
Note that these difficulties in adapting a proof strategy from the continuous time, single reservoir case reflect the intrinsic differences of the systems. The discrete time dynamics of interacting with a chain of probes with possibly varying parameters involves fundamentally different physics than the return to equilibrium behaviour of interactions with a single thermodynamic reservoir. 
Our approach will therefore be to use an adiabatic result for the reduced dynamics, acting on $\mathcal H_\sys$ alone.
 \end{remark}

Last, to a $m$-repeated version of a RIS we associate for $k\in\nn$, $j=1,\ldots,m$ the quantities

\begin{equation} \label{eq_mrepeatedobs}
	\Delta^{(j)} S_{k,T} = \Delta S_{(k-1)m + j},\quad \Delta^{(j)} Q_{k,T} = \Delta Q_{(k-1)m + j}, \quad \sigma^{(j)}_{k,T} = \sigma_{(k-1)m + j},
\end{equation}	
which are the changes in entropy, energy, and the entropy production, during the interaction with the $j$-th copy of probe $k$.

\subsection{Strategy} \label{sec:strat}
We finish this section with some remarks regarding our strategy to study the saturation of Landauer's inequality \eqref{eq_LandauerInequalityRIS}. 
We need to estimate the entropy production of the $k$-th step, $\sigma_k$. Define 
\begin{equation} \label{eq_expsigmak}
\omega_{U,k}:=U_k (\sysstate[k-1] \otimes \envstate[k]) U_k^*, \quad \omega_{\L,k}= \L_k(\sysstate[k-1])\otimes \envstate[k],\quad \mbox{so that} \quad \sigma_k = S(\omega_{U,k}\,|\, \omega_{\L,k}).
\end{equation}
To estimate $\sigma_k$, we need estimates on the initial and final system states for step $k$, that is $\sysstate[k-1] = \L_{k-1} \dotsm \L_1 (\sysstate\init)$ and $\sysstate[k]=\L_k(\sysstate[k-1])$. According to the general picture of Landauer's Principle (see Section\nobreakspace \ref {sec:Landauer}), the relevant regime is the adiabatic limit. In that regime, $\sysstate[k]$ and $\sysstate[k-1]$ should be close to one another, and we investigate the difference $\sysstate[k]-\sysstate[k-1]$. A relevant adiabatic theorem allowing us to control this difference is developed in Section\nobreakspace \ref {sec:Adiabatic}. We then need perturbative estimates for the relative entropy $S(\rho|\nu)$  of two states with $\rho-\nu$ small. Such estimates are given in Section\nobreakspace \ref {sec:Relative-entropy}. Last, as we discussed in Section\nobreakspace \ref {sec:Landauer}, the results of \cite{JP14} showing saturation in the adiabatic limit use an ergodic assumption on the different dynamics of the system. In the current framework the assumption of ergodicity for a system with dynamics induced by the CPTP maps $\L_k$ will be irreducibility. This is  consistent with the standard definition of ergodicity for CPTP maps, which is essentially that the eigenvalue $1$ is simple and the associated eigenvector has full rank (see the review paper \cite{Sch}), but not with the ergodicity assumptions used in the RIS review \cite{BJM14}, which is only that the eigenvalue 1 is simple. We discuss this in more detail when we study the application of the results from Section\nobreakspace \ref {sec:Adiabatic} and Section\nobreakspace \ref {sec:Relative-entropy} to Repeated Interaction Systems in Section\nobreakspace \ref {sec:Application}. We also address there the small coupling limit $\lambda\to 0$ of our results, in order to get asymptotics of entropy production. This requires some technical considerations treated in Section\nobreakspace \ref {sec:scl}.

\section{The discrete non unitary adiabatic theorem} \label{sec:Adiabatic}

In this section, we look for an adiabatic approximation of a repeated interaction system. More precisely, we consider a RIS from time $0$ to time $T$, where the $\L_k$ change slowly; that is, when each $\L_k$ is $\L_{k,T}=\L(k/T)$ for $[0,1]\ni s\mapsto \L(s)$ a smooth enough function. According to the typical adiabatic framework described in Section\nobreakspace \ref {subsec_adiabatic}, we would like to find a family of operators $(A_{k,T})_{k=1,\ldots,T}$ such that each $A_{k,T}$ maps a given spectral subspace of $\L_{0,T}$ to the corresponding spectral subspace of $\L_{k,T}$, and $A_{k,T}$ is a good approximation for $\L_{k,T}\ldots \L_{1,T}$.

However, most adiabatic results (for example \cite{BornFock,KatoPaper,AE99,Teufel} in continuous time, or \cite{DKS98,Tan11} in discrete time) apply to unitary dynamics only, whereas here the $\L_{k,T}$ are not unitary. On the other hand, adiabatic results for non unitary dynamics (see \cite{ASF,Joye,AFGG,Schmid}) are proven only in continuous time, whereas we work here in discrete time. We will therefore need to derive a specific form of adiabatic theorem for products of slowly changing CPTP maps. This will prevent us from finding unitary $A_{k,T}$, and we will only have approximately unitary $A_{k,T}$, in a sense to be made precise below.
 
In this section, we first formulate the minimal hypothesis for the theorem in an abstract setup, before discussing possible relaxation of these assumptions and immediate consequences of these hypotheses.  We state our main theorem in Theorem\nobreakspace \ref {theo_adiabatic}. The strategy of our proof and related considerations are discussed after the statement of the theorem. Note that in the following abstract setup, we simply require the maps to be contractions instead of CPTP, though we do investigate the consequences of those properties in~Section\nobreakspace \ref {sec:Application}.

\subsection{Setup, hypotheses and statement}
\label{sec:setup}

Let $X$ be a finite-dimensional Banach space equipped with a norm $\|\cdot\|$. See the remark at the end of the section for infinite dimensional $X$, though. We will denote similarly the induced operator norm on $\B(X)$. We recall that, for any bounded linear map $A$ on $\B(X)$, and any subset~$S$ of $\sp\,A$, one defines the \textit{spectral projector of $A$ on $S$} by the operator-valued complex integral
\begin{equation} \label{eq_defspectralprojector}
P_{S,A} = \frac1{2\i \pi}\int_{\Gamma} (z-A)^{-1} \, \d z
\end{equation}
with $\Gamma$ any simple closed contour with positive orientation, encircling $S$ and none of $\sp\,A\setminus S$.

We are now ready to state our main assumptions. Consider an operator valued function $[0,1] \ni s \mapsto \L(s) \in \B(X)$, and let $S^1$ be the unit circle in $\C$. 
\begin{remark}
	Here and in what follows, we say a function $f$ on $[0,1]$ is $C^2$ on $[0,1]$ if $f$  is continuous  on $[0,1]$, twice continuously differentiable on $(0,1)$, and $\lim_{s \downarrow 0} f'(s)$, $\lim_{s \uparrow 1} f'(s)$, $\lim_{s \downarrow 0} f''(s)$ and $\lim_{s \uparrow 1} f''(s)$ exist and are finite.
\end{remark}
 We consider the following assumptions:
\begin{itemize}
	\item[\textbf{\textup{H1}.}] For each $s \in [0,1]$, $\L(s)$ is a contraction, i.e. $\|\L(s)\|\leq 1$. \label{Hcontraction}
	
	\item[\textbf{\textup{H2}.}] There is a uniform gap $\epsilon>0$ such that, for $s \in [0,1]$, each \textit{peripheral eigenvalue} $e^j(s)\in \sp\L(s)\cap S^1$ is simple, and $|e^j(s)-e^i(s)| > 2\epsilon$ for any $e^j(s)\neq e^i(s)$ in $\sp\L(s)\cap S^1$.
	
	\item[\textbf{\textup{H3}.}] Let $P^j(s)$ be the spectral projector associated with $e^j(s) \in \sp\L(s) \cap S^1$, and $P(s)=\sum_j P^j(s)$ the \textit{peripheral spectral projector}. The map $s\mapsto \L^P(s) := \L(s) P(s)$ is $C^2$ on $[0,1]$. 
	
	\item[\textbf{\textup{H4}.}]  There is a uniform bound on the strictly contracting part of $\L(s)$, i.e. if $Q(s) := \one - P(s)$, 
	\begin{align*}	
	\ell:=\sup_{s\in[0,1]}\|\L(s)Q(s)\| < 1.
	\end{align*}
\end{itemize}
\begin{remark} As is well known, the eigennilpotents associated to finite dimensional peripheral eigenvalues of contractions are equal to zero. Hence \textup{H1} implies that for each $s$, $\L^P(s)$ is simple in the sense that $\L^P(s) = \sum_j e^j(s) P^j(s)$. We denote by $N(s)$ the number of distinct peripheral eigenvalues of $\L(s)$. Observe that, by assumption \textup{H2} one has
\begin{equation} \label{eq_unifboundNs}
N_{\max}:=\sup_{s\in[0,1]}N(s) \leq \min(\frac{2\pi}\epsilon, \dim\,X).
\end{equation}
\end{remark}

In the applications, $X$ will be the Banach space of trace-class operators on a Hilbert space $\H$, equipped with the trace norm $\|A\|_1 := \tr \sqrt{A^* A}$. We recall that we denote by $\|\cdot\|$ the operator norm on $\mathcal B(X)$, as e.g. in \textup{H1} and \textup{H4}. We will call $s$ the \textit{dimensionless time parameter}. For~$T \in \N$ and $0 \leq k \leq T$, we set \begin{equation} \label{eq_defLkT}
\L_{k,T} := \L(k/T), \qquad P_{k,T} := P(k/T), \qquad e_{k,T}^j := e^j(k/T). 
\end{equation}
We denote simply $\L_0=\L(0)$, $P_0=P(0)$, $e_{0}^j=e^j(0)$ to emphasize the fact that these quantities do not depend on $T$. The parameter $T$ plays the role of a time scale, which we will call the \textit{adiabatic parameter}. Our goal will be to prove that,
 for $s\in [0,1]$ the evolution $\L_{[sT],T}\ldots \L_{0,T}$ maps every $\ran P^j(0)$ to $\ran P^j(s)$ in the adiabatic limit $T \to \infty$.
\medskip

Assume now that the map $s\mapsto \L(s)$ satisfies \textup{H1}, \textup{H2}, \textup{H3}, \textup{H4}, and define $\L_{k,T}$ by \eqref{eq_defLkT}. For the rest of this section, we neglect the subscript $T$ and simply denote e.g. $\L_k= \L_{k,T}$. We now discuss immediate consequences of the above framework.

The spectral projectors of $\L_k$ commute with $\L_k$, so that we have a decomposition
\begin{equation*}
	\L_k = \L^P_k + \L^Q_k \mbox{ where } \L^P_k := \L_kP_k \mbox{ and } \L^Q_k:=\L_k Q_k.
\end{equation*}
Note that if $\L_k$ has $\sp(\L_k)\cap S^1 = \{1\}$, then $ \L_k P_k = P_k$, the eigenprojection corresponding to 1. We have two brief lemmas about the eigenprojectors and eigenvalues of $\L^P$.

\begin{lemma} \label{lemma_UnifBoundLPn}
Assume that $s\mapsto \L(s)$ satisfies \textup{H1}, \textup{H2} and \textup{H4}, and let $P_k^j$ be the eigenprojector corresponding to a peripheral eigenvalue $e^j_{k}$ of $\L_k$ and $P_k = \sum_j P^j_k$. Then
\begin{enumerate}
\item for each $j$, $\|P_k^j\|=1$, and $\|P_k \|=1$, so that $\|P_k\L_k\|\leq 1$ and $\|Q_k\|\leq 2$,
\item if in addition $\L_k$ is CPTP, then both $P_k$ and $P_k \L_k$ are CPTP.
\end{enumerate}
\end{lemma}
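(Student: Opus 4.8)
\textbf{Proof plan for Lemma \ref{lemma_UnifBoundLPn}.}

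The plan is to establish the two items in turn, relying on the general fact that for a contraction on a finite-dimensional space the peripheral eigenvalues carry no eigennilpotent (already noted in the remark following H4).

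\medskip
\noindent\textit{Item 1.} First I would fix $s$ (hence $k$) and consider a single peripheral eigenvalue $e^j_k$ with eigenprojector $P^j_k$. Since $|e^j_k|=1$ and the corresponding eigennilpotent vanishes, we have $\L_k P^j_k = e^j_k P^j_k$, so $\L_k^n P^j_k = (e^j_k)^n P^j_k$ for all $n\in\N$. Taking norms and using H1 (the contraction property $\|\L_k^n\|\le\|\L_k\|^n\le 1$) gives $\|P^j_k\| = |e^j_k|^n\,\|P^j_k\| = \|\L_k^n P^j_k\| \le \|P^j_k\|$, which by itself only says $\|P^j_k\|\le\|P^j_k\|$; the useful direction comes from a Cesàro/mean-ergodic argument. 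Concretely, form the averages $M_N := \frac1N\sum_{n=0}^{N-1}(e^j_k)^{-n}\L_k^n$; each satisfies $\|M_N\|\le 1$, and since $e^j_k$ is a simple eigenvalue one checks $M_N \to P^j_k$ as $N\to\infty$ (the contributions from the other eigenvalues $z$ with $z\ne e^j_k$ average to zero because $|(e^j_k)^{-1}z|$ is either $<1$ or a root of unity $\ne 1$, and there are no eigennilpotents to spoil this on the peripheral part by H1; on the strictly contracting part the powers decay by H4). Hence $\|P^j_k\|\le\liminf_N\|M_N\|\le 1$. The reverse inequality $\|P^j_k\|\ge 1$ is immediate since $P^j_k$ is a nonzero idempotent ($\|P^j_k\| = \|(P^j_k)^2\|\le\|P^j_k\|^2$ forces $\|P^j_k\|\ge 1$). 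So $\|P^j_k\|=1$. For $\|P_k\|=1$: note $P_k = \sum_j P^j_k$ commutes with $\L_k$ and $\L_k P_k = \sum_j e^j_k P^j_k$ has the $e^j_k$ as eigenvalues, so the same mean-ergodic trick applied with $M_N := \frac1N\sum_{n=0}^{N-1}\L_k^n$ — or rather, directly using that $P_k$ is the spectral projector onto the full peripheral subspace — gives $\|P_k\|\le 1$ via $\|P_k\| \le \liminf_N \big\|\frac1N\sum_{n=0}^{N-1}\L_k^n\big\|$ once one verifies that these Cesàro averages converge to $P_k$ when all peripheral eigenvalues are roots of unity; in the general case (peripheral eigenvalues not roots of unity) one instead observes $P_k$ is a limit of polynomials in $\L_k$ of norm $\le 1$ obtained from the Riesz projection written as a suitable average, or simply that $P_k$ equals the spectral projector and, since the restriction of $\L_k$ to $\ran P_k$ is a contraction that is diagonalizable with unimodular eigenvalues hence an invertible isometry in a suitable sense, $\|P_k\|=1$. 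Then $\|P_k\L_k\| = \|\L_k P_k\|\le\|\L_k\|\,\|P_k\|\le 1$, and $\|Q_k\| = \|\one - P_k\|\le 1 + \|P_k\| = 2$.

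\medskip
\noindent\textit{Item 2.} Assume $\L_k$ is CPTP. I would argue that $P_k$ is the limit of a sequence of CPTP maps, and that the set of CPTP maps is closed, hence $P_k$ is CPTP; similarly for $P_k\L_k = \L_k P_k$. The natural candidate is again a Cesàro average: if the peripheral eigenvalues of $\L_k$ are all roots of unity with common period $p$ (this is automatic if one first passes to $\L_k^{p}$, or one can treat the general non-root-of-unity case by a Kronecker/equidistribution argument), then $\frac1N\sum_{n=0}^{N-1}\L_k^{n}\to P_k$. Each partial average $\frac1N\sum_{n=0}^{N-1}\L_k^n$ is a convex combination of the CPTP maps $\L_k^n$, hence CPTP, and the limit of CPTP maps is CPTP (complete positivity and trace preservation are preserved under pointwise limits in finite dimension), giving that $P_k$ is CPTP. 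For $P_k\L_k$: since $\L_k$ commutes with $P_k$, we have $P_k\L_k = P_k \L_k P_k + P_k\L_k Q_k = P_k\L_k P_k$ (as $P_kQ_k=0$), and one can realize it as $\lim_N \frac1N\sum_{n=1}^{N}\L_k^n = \lim_N \L_k\cdot\frac1N\sum_{n=0}^{N-1}\L_k^n = \L_k P_k = P_k\L_k$; again each average is a convex combination of CPTP maps, so the limit $P_k\L_k$ is CPTP. (Alternatively, $P_k\L_k$ is the composition of the two CPTP maps $\L_k$ and $P_k$.)

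\medskip
\noindent\textbf{Main obstacle.} The delicate point is handling peripheral eigenvalues of $\L(s)$ that are \emph{not} roots of unity, for which the naive Cesàro average $\frac1N\sum_{n=0}^{N-1}\L_k^n$ does not converge to $P_k$. The clean fix is either (i) to run the single-eigenvalue argument $M_N := \frac1N\sum_{n=0}^{N-1}(e^j_k)^{-n}\L_k^n \to P^j_k$ for each $j$ separately — which works verbatim since after the phase twist the relevant eigenvalue becomes exactly $1$ and all others average out — then sum; this handles Item 1 directly and, for Item 2, one notes $(e^j_k)^{-n}\L_k^n$ is no longer completely positive, so the convex-combination argument fails, and instead one should use that $P_k = \lim_N \frac1N\sum_{n=0}^{N-1}\L_k^{n!}$ or appeal to simultaneous Diophantine approximation (Kronecker's theorem) to find $n_m\to\infty$ with $(e^j_k)^{n_m}\to 1$ for all $j$ at once, so that $\L_k^{n_m}\to P_k$ along this subsequence — and each $\L_k^{n_m}$ is CPTP, whence so is the limit $P_k$. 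I would adopt this last route (Kronecker approximation) as it simultaneously yields $\|P_k\|\le 1$ and the CPTP property of $P_k$ and $P_k\L_k = \L_k P_k = \lim_m \L_k^{n_m+1}$ with minimal fuss.
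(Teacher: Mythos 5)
Your final strategy is essentially the paper's: twisted Cesàro averages $\frac1N\sum_{n=0}^{N-1}(\overline{e^j_k})^n\L_k^n\to P^j_k$ give $\|P^j_k\|=1$, and a simultaneous Diophantine approximation argument produces $n_m\to\infty$ with $\L_k^{n_m}\to P_k$, from which $\|P_k\|=1$ and the CPTP property of $P_k$ and $P_k\L_k=\L_kP_k$ follow at once; your intermediate false starts (plain Cesàro average, the ``invertible isometry'' heuristic) are correctly abandoned. One small correction: the tool the paper uses, and the one you actually need, is Dirichlet's simultaneous approximation theorem, not Kronecker's — Kronecker's density theorem carries a $\Q$-linear-independence hypothesis on the $x_j$ (where $e^j_k=e^{2\pi\i x_j}$) that you have not and cannot in general verify, whereas Dirichlet applies unconditionally.
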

The proof is given in Appendix\nobreakspace \ref {App:Tanaka}.

\begin{remarks}\hfill \vspace{-0.6em}
\begin{itemize}
\item Since the norm 
on $X$ is not necessarily induced by an inner product, the projectors $P^j$ are not necessarily self-adjoint. See Section\nobreakspace \ref {sec:Application} for an example. 
\item The arguments for part 2 of Lemma\nobreakspace \ref {lemma_UnifBoundLPn} are borrowed from \cite{wolftour}.
\end{itemize}
\end{remarks}

\begin{lemma} \label{lem:eigC2}\label{lem:projC2}
Assume that $s\mapsto \L(s)$ satisfies \textup{H1}--\textup{H4}. The eigenvalues $e^j(s)$ and eigenprojectors $P^j(s)$ of $\L^P(s) = \L(s)P(s)$ are $C^2$ as functions of $s$ on $[0,1]$. In particular, there exists a constant $c^P$ such that for all $s\in[0,1]$,
\begin{equation} \label{eq_defCPCPp}
\max_{\alpha=1,2} \max_{j} \big(\Big\| \frac{\d^\alpha P^j(s)}{\d s^\alpha} \Big\|, \Big|\frac{\d^\alpha e^j(s)}{\d s^\alpha}\Big|\big) \leq c^P. 
\end{equation}
\end{lemma}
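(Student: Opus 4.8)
The plan is to reduce everything to standard analytic perturbation theory applied to the regularized family $\L^P(s)=\L(s)P(s)$, exploiting the fact that by \textup{H3} this family is $C^2$ on $[0,1]$ and that by the remark following \textup{H1}--\textup{H4} it is semisimple with all eigenvalues on the unit circle, separated by the uniform gap $\epsilon$ of \textup{H2}. The point is that although $\L(s)$ itself may behave badly (its contracting part can be merely continuous, or not even that, as only $\L(s)Q(s)$ being norm-bounded away from $1$ is assumed in \textup{H4}), the operator $\L^P(s)$ carries \emph{only} the peripheral spectrum, and on it the spectral data is well-behaved.

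First I would fix $s_0\in[0,1]$ and a peripheral eigenvalue $e^j(s_0)$, and choose a fixed small circle $\gamma_j$ in $\C$ centered at $e^j(s_0)$ of radius $\epsilon$ (using \textup{H2}) that encircles $e^j(s_0)$ and no other point of $\sp\L^P(s_0)=\{e^i(s_0)\}_i$; note $\sp\L^P(s)$ consists exactly of the peripheral eigenvalues of $\L(s)$ together possibly with $0$, and by \textup{H2} the peripheral ones stay, for $s$ near $s_0$, within $\epsilon/2$ of their values at $s_0$, hence $\gamma_j$ continues to isolate exactly one of them. The resolvent $(z-\L^P(s))^{-1}$ is then well-defined and jointly continuous for $z\in\gamma_j$ and $s$ near $s_0$; since $s\mapsto\L^P(s)$ is $C^2$, differentiating the resolvent identity shows $s\mapsto(z-\L^P(s))^{-1}$ is $C^2$ in operator norm, uniformly for $z\in\gamma_j$. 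Plugging this into the Riesz formula
\[
P^j(s)=\frac{1}{2\i\pi}\int_{\gamma_j}(z-\L^P(s))^{-1}\,\d z,\qquad
e^j(s)=\frac{1}{2\i\pi}\,\tr_X\!\Big(\int_{\gamma_j}z\,(z-\L^P(s))^{-1}\,\d z\Big)\Big/\dots
\]
and differentiating under the integral sign gives that $P^j$ is $C^2$ near $s_0$; for the eigenvalue one uses that $e^j(s)$ is simple, so $\L^P(s)P^j(s)=e^j(s)P^j(s)$ and hence $e^j(s)=\tr_X(\L^P(s)P^j(s))/\tr_X(P^j(s))$ with $\tr_X(P^j(s))$ the (constant, integer) rank of $P^j(s)$, which is $C^2$ as a composition/quotient of $C^2$ maps with nonvanishing denominator. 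Since $s_0$ was arbitrary and $[0,1]$ is covered by finitely many such neighbourhoods, $e^j$ and $P^j$ are $C^2$ on all of $[0,1]$ (with the one-sided conventions at the endpoints as in the preceding remark, inherited from \textup{H3}).

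For the uniform bound \eqref{eq_defCPCPp}, I would argue by compactness: the map $(s,z)\mapsto(z-\L^P(s))^{-1}$ is continuous on the compact set $\{(s,z):s\in[0,1],\ z\in\gamma_{j(s)}\}$ — here one should note that while the circle can be chosen $s$-dependent, one can equally well fix, for each of the finitely many ``branches'' of peripheral eigenvalues (there are at most $N_{\max}$ of them by \eqref{eq_unifboundNs}), contours that work on an entire closed subinterval — so the resolvent, and its first two $s$-derivatives (obtained from the resolvent identity and the $C^2$ bound on $\L^P$ from \textup{H3}), are uniformly bounded in operator norm. Carrying these bounds through the contour integrals, whose lengths are uniformly bounded ($2\pi\epsilon$), yields a single constant $c^P$ dominating $\|\d^\alpha P^j(s)/\d s^\alpha\|$ and $|\d^\alpha e^j(s)/\d s^\alpha|$ for $\alpha=1,2$, all $j$, all $s$; here one also uses the uniform bound $\|P^j(s)\|=1$ from Lemma~\ref{lemma_UnifBoundLPn} to control the eigenvalue formula.

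The main obstacle — though it is more a bookkeeping subtlety than a deep difficulty — is making the contour choice uniform in $s$: the number $N(s)$ of peripheral eigenvalues and even the grouping can a priori vary, and one must check that \textup{H2} (the $2\epsilon$ separation together with each being simple) genuinely prevents eigenvalue collisions or splittings, so that the ranks $\operatorname{rank}P^j(s)$ are locally constant and the circles $\gamma_j$ of radius $\epsilon$ can be used consistently on overlapping neighbourhoods that cover $[0,1]$. Once this is pinned down, everything else is the standard Kato analytic-perturbation argument applied to a $C^2$ (rather than analytic) family, with all estimates made quantitative by compactness.
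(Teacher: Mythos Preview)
Your proposal is correct and follows essentially the same route as the paper: both argue that the resolvent $(z-\L^P(s))^{-1}$ is $C^2$ in $s$ by differentiating the resolvent identity, then obtain $P^j(s)$ as a Riesz contour integral over a fixed circle isolating $e^j$ (using \textup{H2} and \textup{H4} to keep the contour valid on a neighbourhood of each $s_0$), with the uniform bound \eqref{eq_defCPCPp} following by compactness. The only minor difference is that for the eigenvalues the paper invokes the smooth dependence of simple roots of a polynomial on its $C^2$ coefficients, whereas you use the trace formula $e^j(s)=\tr_X\!\big(\L^P(s)P^j(s)\big)/\operatorname{rank}P^j(s)$; both are valid and equally direct.
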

See Appendix\nobreakspace \ref {App:Tanaka} for the proof. Note that as a consequence, $N_{\max} \equiv N(s)$ in \eqref{eq_unifboundNs}.

\begin{remark}
The above conditions \textup{H1}--\textup{H4} can usually be relaxed if we allow for a modification of the framework. In particular, the norm condition \textup{H4} can be replaced by a spectral radius condition, at the cost of replacing $\L$ by $\L^m$ for some power $m$. This is necessary for applications, and in the small coupling limit discussed in Section\nobreakspace \ref {sec:scl}. More precisely, define the following weaker version of \textup{H4}:
\smallskip

\noindent \textup{\textbf{\,\textup{wH4}.}} We have the uniform spectral bound $\ell':=\sup_{s\in [0,1]}\, \spr \L(s) Q(s) < 1$.
\smallskip

We can then prove the following easy result:
\begin{lemma} \label{lemma_wH4H4}
Assume that $s\mapsto \L(s)$ is continuous on $[0,1]$ and satisfies \textup{H1}, \textup{H2}, \textup{H3} and \textup{wH4}. For any $\ell\in {]\ell',1[}$, there exists $m_0$ in $\N$ such that, for any $m\geq m_0$, the map $[0,1]\ni s \mapsto \L(s)^m$ satisfies \textup{H1}, \textup{H3} and \textup{H4}.
\end{lemma}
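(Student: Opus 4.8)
The plan is to show that passing from $\L(s)$ to $\L(s)^m$ leaves the entire peripheral structure untouched, so that \textup{H1} and \textup{H3} are essentially automatic, while \textup{H4} is obtained by upgrading the pointwise spectral-radius bound \textup{wH4} to a \emph{uniform} norm bound on $\L^Q(s)^m$ once $m$ is large.

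First I would record the structural facts. Since $\L(s)$ is a contraction, $\sp\L(s)$ lies in the closed unit disk, so by the spectral mapping theorem $\sp\L(s)^m=\{\mu^m:\mu\in\sp\L(s)\}$, and a point of this set lies on $S^1$ exactly when it equals $(e^j(s))^m$ for a peripheral eigenvalue $e^j(s)$ of $\L(s)$. Using that a finite-dimensional contraction carries no eigennilpotent at a peripheral eigenvalue (the remark after \textup{H4}), write $\L(s)=\L^P(s)+\L^Q(s)$ with $\L^P(s)=\L(s)P(s)$, $\L^Q(s)=\L(s)Q(s)$ and $\L^P(s)\L^Q(s)=\L^Q(s)\L^P(s)=0$; then $\L(s)^m=\L^P(s)^m+\L^Q(s)^m$, $\L(s)^mP(s)=\L^P(s)^m$, $\L(s)^mQ(s)=\L^Q(s)^m$, and the peripheral spectral projector of $\L(s)^m$ is again $P(s)$. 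Consequently \textup{H1} for $\L(s)^m$ is immediate ($\|\L(s)^m\|\le\|\L(s)\|^m\le1$), and \textup{H3} reduces to $C^2$-smoothness of $s\mapsto\L^P(s)^m$, which follows from \textup{H3} for $\L$ together with smoothness of $A\mapsto A^m$.

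The real work is \textup{H4}: I need $m_0$ such that $\sup_{s\in[0,1]}\|\L^Q(s)^m\|\le\ell$ for all $m\ge m_0$. I would first note that $\L^Q(s)=\L(s)-\L^P(s)$ is continuous on $[0,1]$ (continuity of $\L$ plus \textup{H3}), and that $\spr\L^Q(s)=\spr\bigl(\L(s)Q(s)\bigr)\le\ell'<1$ for every $s$ by \textup{wH4}. Fixing any $\ell''\in{]\ell',\ell[}$ (nonempty since $\ell>\ell'$), the circle $\{|z|=\ell''\}$ avoids $\sp\L^Q(s)$ for all $s$; by compactness of $[0,1]\times\{|z|=\ell''\}$ and joint continuity of the resolvent, $R:=\sup_{s,\,|z|=\ell''}\|(z-\L^Q(s))^{-1}\|<\infty$. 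The holomorphic functional calculus then gives $\L^Q(s)^m=\frac1{2\pi\i}\oint_{|z|=\ell''}z^m(z-\L^Q(s))^{-1}\,\d z$, whence $\|\L^Q(s)^m\|\le R\,(\ell'')^{m+1}$ uniformly in $s$, and since $\ell''<1$ one takes $m_0$ with $R(\ell'')^{m_0+1}\le\ell$, which yields $\sup_s\|\L(s)^mQ(s)\|\le\ell<1$, i.e. \textup{H4}.

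The only genuine obstacle is this uniformity step: a spectral-radius bound at each $s$ says nothing about the norm of a \emph{fixed} power of an operator, so the transition from spectral radius to norm must be carried out simultaneously in $s$ — this is exactly why continuity of $\L^Q$ and compactness of $[0,1]$ enter, and why \textup{wH4} alone, without the smoothness supplied by \textup{H3}, would not run the argument. I would also point out explicitly that \textup{H2} is \emph{not} claimed for $\L(s)^m$, and can indeed fail, since distinct peripheral eigenvalues of $\L(s)$ may collide under $z\mapsto z^m$; what survives — and all that is needed downstream — is the peripheral projector $P(s)$ and the separation between $S^1$ and the rest of $\sp\L(s)$.
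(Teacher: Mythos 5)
Your proof is correct, but it takes a genuinely different route from the paper's. Where you obtain the uniform norm bound on $\L^Q(s)^m$ via the Dunford--Taylor integral $\L^Q(s)^m=\frac1{2\pi\i}\oint_{|z|=\ell''}z^m(z-\L^Q(s))^{-1}\d z$ and a compactness bound $R$ on the resolvent over $[0,1]\times\{|z|=\ell''\}$, the paper argues pointwise: it fixes $\epsilon_0$ with $\ell=\ell'+2\epsilon_0$, uses the spectral radius formula to find, for each $s$, an $m(s)$ with $\|\L(s)^{m(s)}Q(s)\|\le\ell'+\epsilon_0$, then by continuity gets $\|\L(s')^{m(s)}Q(s')\|\le\ell$ on a neighbourhood $I_s$, covers $[0,1]$ by finitely many such $I_{s_i}$ and sets $m_0=\max m(s_i)$. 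The paper then needs an extra step to pass from the fixed exponent $m(s_i)$ to arbitrary $m\ge m_0$, which it does by factoring $\L(s')^m Q(s')=\L(s')^{m-m(s_i)}\bigl(\L(s')^{m(s_i)}Q(s')\bigr)$ and invoking $\|\L(s')^{m-m(s_i)}\|\le1$ from \textup{H1}; your resolvent bound $\|\L^Q(s)^m\|\le R(\ell'')^{m+1}$ is already monotone in $m$, so that step disappears automatically. Your route is slightly less elementary but buys an explicit geometric decay rate for $\|\L^Q(s)^m\|$; the paper's is purely topological but relies more heavily on the contraction property. Your closing observation that \textup{H2} can genuinely fail for $\L^m$ (distinct peripheral eigenvalues colliding under $z\mapsto z^m$) is correct, and is consistent with the paper's omission of \textup{H2} from the conclusion -- the paper later restores \textup{H2} for $\L^m$ only under the additional rigidity supplied by irreducibility in Proposition~\ref{prop:normTospr}, by choosing $m$ coprime to the order $z$ of the peripheral group.
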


\begin{proof}
Let $\epsilon_0>0$ such that $\ell=\ell'+ 2\epsilon_0$ and let $s\in[0,1]$. Since \[\spr(\L(s)Q(s)) = \lim_{m\to\infty} \|  \L(s)^m Q(s)\|^{1/m} \leq \ell',\] there is some $m(s)$ such that $\| \L(s)^m Q(s)\| \leq (\ell'+\epsilon_0)^m \leq \ell' + \epsilon_0$ for $m\geq m(s)$. Since each projector onto eigenvalues of $\L(s)$ on the unit circle is $C^2$, so are projector $P(s)$ and $\one-P(s)=Q(s)$, and therefore $s\mapsto  \L(s)^m Q(s)$ is continuous. Then there exists an open interval~$I_s\ni s$ such that if $s'\in I_s$, $\| \L(s')^{m(s)} Q(s')\| \leq \ell' + 2 \epsilon_0$. For $m\geq m(s)$ we still have for~$s'\in I_s$, \[\| \L(s')^{m} Q(s')\| \leq \|\L(s')^{m-m(s)}\| \, \| \L(s')^{m(s)} Q(s')\|\leq \ell' + 2 \epsilon_0=\ell\]
and considering a finite cover $I_{s_1}$, \ldots $I_{s_p}$ of $[0,1]$ we can take $m_0=\max \{m(s_1),\ldots, m(s_p)\}$. In addition, $\| \L^m \| \leq \| \L\|^m$ and $(\L^m)^P= (\L^P)^m$, so we have \textup{H1} and \textup{H3}.
\end{proof}
However, the operator $\L(s)^m$ may have degenerate eigenvalues on the unit circle for arbitrarily large $m$, and the way such $m$ depends on $s$ is non-trivial. We will see, however, that when the map $s\mapsto \L(s)$ derives from a RIS and satisfies an ergodicity property, we have additional information about the peripheral spectrum of $\L(s)$, and if $\L$ satisfies \textup{H1}, \textup{H2}, \textup{H3} and \textup{wH4} then we can find $m$ such that $\L^m$ satisfies \textup{H1}--\textup{H4}. We discuss this in Section\nobreakspace \ref {sec:Application}.
\end{remark}
\medskip

We are now ready to state our adiabatic theorem.
\begin{theorem} \label{theo_adiabatic}
Under \textup{H1}--\textup{H4}, there exist constants $T_0>0$ and $C^P>0$, depending only on $c^P$ defined by \eqref{eq_defCPCPp}, and on $N_{\mathrm{max}}$ defined by \eqref{eq_unifboundNs} such that
for  all $T\geq T_0$, there exist two families of maps $(A_{k,T})_{k=1,\ldots,T}$ and $(A^\dagger_{k,T})_{k=1,\ldots,T}$, with uniform bounds 
\begin{equation} \label{eq_unifboundA}
\sup_{k=0,\ldots,T}\max(\|A_{k,T}\|, \|A_{k,T}^\dagger\|)\leq 
N_{\mathrm{max}} (1-\frac{(c^P)^2}{T_0^2})^{-T_0/2}
\end{equation}
satisfying
\begin{gather*}
A_{k,T}^\dagger \, A_{k,T}= P_0,\qquad  A_{k,T}\, A_{k,T}^\dagger = P_{k,T}, \\
A_{k,T}\, P_0^j = P_{k,T}^j \, A_{k,T}, \qquad A_{k,T}^\dagger\, P_{k,T}^j = P_0^j \, A_{k,T}^\dagger, \qquad A_{k,T} Q_0=Q_0 A^\dagger_{k,T}=0,
\end{gather*}
such that, for all $k\leq T$,
\begin{equation} \label{eq_adiabatic1}
	\big\|\L_{k,T}\L_{k-1,T}\ldots \L_{1,T} - A_{k,T}\big\| \leq \frac {C^P}{T(1-\ell)} + 2\ell^k.
\end{equation}
\end{theorem}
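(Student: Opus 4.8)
The plan is to transpose Kato's adiabatic scheme to discrete time and to contractions. Write $\Phi_k := \L_{k,T}\cdots\L_{1,T}$ and suppress $T$, so $\|\Phi_k\|\le 1$ by \textup{H1}; split each step as $\L_k = \L_k^P + \L_k^Q$, with $\L_k^P = \L_k P_k$ living on $\ran P_k$ (spectrum on $S^1$) and $\L_k^Q = \L_k Q_k$ of norm $\le\ell<1$ by \textup{H4}. Since $A_k = A_k P_0$, decompose
\[
\Phi_k - A_k = \big(\Phi_k P_0 - A_k\big) + \Phi_k Q_0 ;
\]
the second summand will produce the $2\ell^k$ term and the first the $C^P/(T(1-\ell))$ term of \eqref{eq_adiabatic1}. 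The engine for the first is the telescoping identity
\[
\Phi_k P_0 - A_k \;=\; \sum_{l=1}^{k} \Phi_k^{\,l}\,\big(\L_l A_{l-1} - A_l\big), \qquad \Phi_k^{\,l} := \L_k\cdots\L_{l+1}\ \ (\Phi_k^{\,k}:=\one,\ \Phi_k^{\,0}=\Phi_k),
\]
which I would process by a discrete integration by parts to recover the power of $T$ that a term-by-term bound loses.

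For the construction, fix a peripheral index $j$; using the gap \textup{H2} and the $C^2$-regularity of $s\mapsto P^j(s)$ from Lemma~\ref{lem:projC2} (whence $\|P_l^j - P_{l-1}^j\| = O(1/T)$), build a one-step transport $U_l^j$ — the discrete analogue of Kato's adiabatic transport restricted to the rank-one spectral spaces — with $U_l^j P_{l-1}^j = P_l^j U_l^j$ and $\max\!\big(\|U_l^j\|,\|(U_l^j)^{-1}\|\big)\le (1-(c^P/T)^2)^{-1/2}$; the \emph{quadratic} smallness of the one-step defect (a symmetrised transport rather than a bare product of projectors) is what yields the factor in \eqref{eq_unifboundA}. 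Composing along $l$ and summing over $j$, put $V_k := \sum_j U_k^j\cdots U_1^j P_0^j$, $V_k^\dagger := \sum_j P_0^j (U_1^j)^{-1}\cdots (U_k^j)^{-1}P_k^j$, and $D_k := \sum_j\big(\prod_{m=1}^{k} e_m^j\big)P_0^j$ (the accumulated phase), and set $A_k := V_k D_k$, $A_k^\dagger := D_k^{-1}V_k^\dagger$. Since $P_l^i P_l^j = \delta_{ij}P_l^i$ the cross terms drop, giving $V_k^\dagger V_k = P_0$, $V_k V_k^\dagger = P_k$, from which all the algebraic relations in the statement follow, while $\|A_k\|,\|A_k^\dagger\| \le N_{\mathrm{max}}(1-(c^P/T)^2)^{-k/2}$; this is bounded by the right-hand side of \eqref{eq_unifboundA} as soon as $T\ge T_0>c^P$, because $x\mapsto(1-(c^P/x)^2)^{-x/2}$ is non-increasing.

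For $\Phi_k Q_0$, use that $Q_l$ commutes with $\L_l$, that $\|Q_l\|\le 2$ (Lemma~\ref{lemma_UnifBoundLPn}), and that $\|Q_l P_{l-1}\| = \|Q_l(P_l - P_{l-1})\| = O(1/T)$: with $q_l := \|Q_l\Phi_l\|$ one gets
\[
q_l \;=\; \big\|\L_l^Q\,(Q_l\Phi_{l-1})\big\| \;\le\; \ell\,\|Q_l\Phi_{l-1}\| \;\le\; \ell\Big(q_{l-1} + \tfrac{C}{T}\Big),
\]
hence $q_k \le 2\ell^k + \tfrac{C}{T(1-\ell)}$, which is the $2\ell^k$ in \eqref{eq_adiabatic1}. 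The same recursion, with $Q_l$ in place of $Q_0$, shows that $\|\Phi_k^{\,l}\eta\| \lesssim \big(\ell^{\,k-l} + \tfrac1{T(1-\ell)}\big)\|\eta\|$ for any $\eta$ supported on $\ran Q_l$ — a fact used below.

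It remains to bound $\|\Phi_k P_0 - A_k\|$ by $O\!\big(\tfrac1{T(1-\ell)}\big)$ via the telescoping identity. Since $A_l$ has range in $\ran P_l$, one has $\L_l A_{l-1} - A_l = A_l R_l + \L_l Q_l A_{l-1}$ with $R_l := A_l^\dagger\L_l A_{l-1} - P_0$; the second term is supported on $\ran Q_l$ and of norm $O(1/T)$, so its contribution to the sum is $O\!\big(\tfrac1{T(1-\ell)}\big)$ by the fact just stated. For the main term, conjugation shows $A_l^\dagger\L_l A_{l-1}$ equals $D_l^{-1}\big(\sum_j e_l^j P_0^j\big)D_{l-1} = P_0$ up to $O(1/T)$ (the telescoping of the phases), so $\|R_l\|=O(1/T)$; the real work is to exhibit a decomposition $R_l = (G_l - G_{l-1}) + R_l^{\flat}$, where $G_l = T^{-1}g(l/T)$ for a $C^1$ operator-valued $g$ built from the derivative of $s\mapsto\L^P(s)$ — the discrete analogue of Kato's first-order correction — so that $\|G_l - G_{l-1}\| = O(1/T^2)$ via the second derivative (this is where the full strength of \textup{H3} enters), and $R_l^{\flat}$ is a remainder whose contribution is $O\!\big(\tfrac1{T(1-\ell)}\big)$, either because it is $O(1/T^2)$ per step or because it carries a factor annihilated by a strictly contracting projector. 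Then $\sum_l \Phi_k^{\,l}A_l(G_l - G_{l-1})$ is resummed by parts into a boundary term of size $O(1/T)$ plus $\sum_l (\Phi_k^{\,l+1}A_{l+1} - \Phi_k^{\,l}A_l)G_l$, and $\Phi_k^{\,l+1}A_{l+1} - \Phi_k^{\,l}A_l = \Phi_k^{\,l+1}(A_{l+1} - \L_{l+1}A_l)$ is itself $O(1/T)$ in norm, so this sum is $O(k/T^2)=O(1/T)$. The hard part is exactly this decomposition of the per-step defect $R_l$ together with the bookkeeping of the summation by parts: every $O(1/T)$ contribution that is not visibly summable must be recognised either as a genuine discrete derivative (controlled by the $C^2$ bound \eqref{eq_defCPCPp}) or as something killed after one more contracting step (controlled by \textup{H4}), and keeping $C^P$ dependent only on $c^P$ and $N_{\mathrm{max}}$ forces the eigenvalue phases to be handled through the conjugation by the $D_k$ rather than through oscillatory cancellation. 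Collecting the peripheral and transient estimates then gives \eqref{eq_adiabatic1}.
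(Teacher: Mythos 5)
Your construction of $A_{k,T}$ — rank-one Kato transports $U_l^j$ composed along $l$ and dressed by the accumulated phases $D_k$ — is essentially identical to the paper's $W_{k,T}$ and $\Phi_{k,T}$ of \eqref{eq:kat} and \eqref{eq:Phase}, and the norm bound \eqref{eq_unifboundA} comes out the same way. The overall split $\Phi_k - A_k = (\Phi_k P_0 - A_k) + \Phi_k Q_0$ and the use of a telescoping identity plus summation by parts are also in the spirit of the paper's Propositions~\ref{prop:Qkilled} and \ref{prop:general-ad}. But there are two genuine gaps.

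First, the recursion controls the wrong quantity. You set $q_l := \|Q_l\Phi_l\|$ and correctly derive $q_l \leq \ell(q_{l-1} + C/T)$, hence $q_k \leq 2\ell^k + C/(T(1-\ell))$. But the decomposition you wrote requires a bound on $\|\Phi_k Q_0\|$, and $\|\Phi_k Q_0\|\neq\|Q_k\Phi_k\|$. An initial vector in $\ran Q_0$ can leak into $\ran P_l$ at any step (with one-step amplitude $\|P_l Q_{l-1}\| = O(1/T)$), and once it is in a peripheral subspace it no longer contracts; that leaked component sits in $\ran P_k$ at the end and is invisible to $\|Q_k\Phi_k\|$. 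The total leaked mass is of order $\sum_l \ell^{l}/T \sim 1/(T(1-\ell))$, so the final answer is of the claimed size, but one needs a coupled recursion tracking both $\|Q_l\Phi_l Q_0\|$ and $\|P_l\Phi_l Q_0\|$ (or, as the paper does, the full combinatorial expansion of $\L_k\cdots\L_1$ in the $P/Q$ splitting in Proposition~\ref{prop:Qkilled}) to see it. The same gap affects your claimed bound $\|\Phi_k^{\,l}\eta\|\lesssim(\ell^{k-l}+1/(T(1-\ell)))\|\eta\|$ for $\eta\in\ran Q_l$: the "same recursion" controls $\|Q_k\Phi_k^{\,l}Q_l\|$, not $\|\Phi_k^{\,l}Q_l\|$.

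Second, the proposed decomposition $R_l = (G_l - G_{l-1}) + R_l^\flat$ with $G_l = T^{-1}g(l/T)$ for a $C^1$ operator-valued $g$ does not match the structure of the per-step defect. Projecting $R_l = A_l^\dagger\L_l A_{l-1} - P_0$ onto the $P_0^j$-blocks, one finds (cf. Lemma~\ref{lemma_boundThetas}) that the diagonal blocks $P_0^j R_l P_0^j$ are $O(1/T^2)$ and hence summable directly without any summation by parts, whereas the off-diagonal blocks $P_0^j R_l P_0^\ell$ ($j\neq\ell$) are $O(1/T)$ and carry the accumulated phase $Z_{l-1}^{j\ell}=\prod_{m<l}\overline{e}_m^j e_m^\ell$; since the one-step ratio $\overline{e}^j e^\ell$ is bounded away from $1$ by the gap \textup{H2}, this is a fast oscillation in $l$ and cannot be absorbed into a smooth $g(l/T)$. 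The required extra factor of $1/T$ on the off-diagonal blocks must come from Abel summation against this oscillating phase, which is precisely what the paper's Lemma~\ref{lemma_boundVs} does. So your parenthetical remark that the phases should be handled by conjugation by $D_k$ "rather than through oscillatory cancellation" is backwards: the conjugation removes the diagonal phase, but creates the oscillating off-diagonal ones, and the gain there \emph{is} oscillatory cancellation.
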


Theorem\nobreakspace \ref {theo_adiabatic} follows from two distinct results that are Propositions\nobreakspace \ref {prop:Qkilled} and\nobreakspace  \ref {prop:general-ad}. Proposition\nobreakspace \ref {prop:Qkilled} shows that in the adiabatic regime, the $\L^Q$ terms in the expansion of $\L_{k,T}\ldots \L_{1,T} P_0$ can be neglected, more precisely:
\begin{equation} \label{eq_adiabatic2}
	\|\L_{k,T}\ldots \L_{1,T}P_0 - \L_{k,T}^P\ldots \L_{1,T}^P P_0\|\leq  \frac{C^P}{T(1-\ell)}, \text{ and } \|\L_{k,T} \dotsm \L_{1,T} Q_0\|\leq  \frac{C^P}{T(1-\ell)} + 2\ell^k.
\end{equation}
Then Proposition\nobreakspace \ref {prop:general-ad} proves that $\L_{k,T}^P\ldots \L_{1,T}^P P_0$ is well approximated by the operator $A_{k,T}$, which is described in Section\nobreakspace \ref {sec:unitary}. Note that the bounds \eqref{eq_adiabatic2} give more information than \eqref{eq_adiabatic1}; in the sequel we will need \eqref{eq_adiabatic2}.

\subsection{Bounding the strictly contracting part of $\L$} \label{sec:strict}

We first have a proposition which shows that the strictly contracting part of $\L$ does not contribute to adiabatic evolution when the initial state is in $P_0 X$. 

\begin{proposition} \label{prop:Qkilled}
If $\L(s)$ satisfies \textup{H1}--\textup{H4}, then there exists a constant $C^P$, depending only on~$c^P$ defined by \eqref{eq_defCPCPp}, such that for any $T\geq 1$ and $k\leq T$, 
\begin{equation}	
\| \L_{k,T} \L_{k-1,T}\dotsm\L_{1,T} - \L^P_{k,T} \L^P_{k-1,T}\dotsm  \L^P_{1,T}P_0 - \L^Q_{k,T} \L^Q_{k-1,T}\dotsm  \L^Q_{1,T} Q_0\| \leq \frac{C^P}{T(1-\ell)}, \label{eq:Qkilled}
\end{equation}
where $\L^P_{k,T} = \L_{k,T} P_{k,T}$ and $\L^Q_{k,T} =\L_{k,T} Q_{k,T}$. Moreover, $\|\L^Q_{k,T} \L^Q_{k-1,T}\dotsm  \L^Q_{1,T} Q_0\| \leq 2\ell^k$.
\end{proposition}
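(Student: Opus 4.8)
The plan is to treat the full product $\L_{k,T}\L_{k-1,T}\cdots\L_{1,T}$ as a Duhamel-type perturbation of the two ``diagonal'' products appearing in the statement, namely $\Phi_k := \L^P_{k,T}\L^P_{k-1,T}\cdots\L^P_{1,T}P_0$ and $\Psi_k := \L^Q_{k,T}\L^Q_{k-1,T}\cdots\L^Q_{1,T}Q_0$, with the conventions $\Phi_0=P_0$, $\Psi_0=Q_0$ and empty product $=\one$. Writing $\L_{k,T}=\L^P_{k,T}+\L^Q_{k,T}$ and using $\Phi_k=\L^P_{k,T}\Phi_{k-1}$, $\Psi_k=\L^Q_{k,T}\Psi_{k-1}$, the remainder $D_k := \L_{k,T}\cdots\L_{1,T}-\Phi_k-\Psi_k$ satisfies
\[ D_k = \L_{k,T}D_{k-1} + \L^Q_{k,T}\Phi_{k-1} + \L^P_{k,T}\Psi_{k-1},\qquad D_0=0.\]
So everything reduces to estimating the two ``block-changing'' inhomogeneous terms and then iterating.

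These inhomogeneous terms are small because they force a jump of spectral subspace: since $\Phi_{k-1}\in\ran P_{k-1,T}$ and $\Psi_{k-1}\in\ran Q_{k-1,T}$, we have $Q_{k,T}\Phi_{k-1}=(Q_{k,T}-Q_{k-1,T})\Phi_{k-1}$ and $P_{k,T}\Psi_{k-1}=(P_{k,T}-P_{k-1,T})\Psi_{k-1}$. Combining the $C^2$-regularity of the peripheral projector from Lemma~\ref{lem:projC2} (which gives $\|P_{k,T}-P_{k-1,T}\|\le c^P/T$) with the uniform norm bounds $\|\L^P_{j,T}\|\le 1$ from Lemma~\ref{lemma_UnifBoundLPn} and $\|\L^Q_{j,T}\|\le\ell$ from \textup{H4}, together with $\|P_0\|=1$, $\|Q_0\|\le 2$, one gets $\|\Phi_{k-1}\|\le 1$, $\|\Psi_k\|\le 2\ell^k$ (this last being exactly the ``Moreover'' claim), and hence $\|\L^Q_{k,T}\Phi_{k-1}\|\le c^P/T$ and $\|\L^P_{k,T}\Psi_{k-1}\|\le 2c^P\ell^{k-1}/T$.

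The main obstacle is that the naive iteration of the recursion is lossy: with only $\|\L_{k,T}\|\le1$ at one's disposal it gives $\|D_k\|\le\sum_{j=1}^k(c^P/T)(1+2\ell^{j-1})=O(1)$, not $O(1/T)$. To recover the gain I would split $D_k=D_k^P+D_k^Q$ with $D_k^P:=P_{k,T}D_k$, $D_k^Q:=Q_{k,T}D_k$, and propagate the two pieces separately. Applying $P_{k+1,T}$, resp.\ $Q_{k+1,T}$, to the recursion and using that these commuting complementary projectors kill the opposite blocks ($P_{k+1,T}\L^Q_{k+1,T}=0=Q_{k+1,T}\L^P_{k+1,T}$), while the off-block couplings $P_{k+1,T}\L_{k+1,T}D_k^Q$ and $Q_{k+1,T}\L_{k+1,T}D_k^P$ again acquire a factor $P_{k+1,T}-P_{k,T}=O(1/T)$ (because $D_k^Q\in\ran Q_{k,T}$, $D_k^P\in\ran P_{k,T}$), one arrives at the $2\times2$ system
\[ \|D_{k+1}^P\|\le \|D_k^P\| + \tfrac{c^P}{T}\|D_k^Q\| + \tfrac{2c^P\ell^k}{T},\qquad \|D_{k+1}^Q\|\le \ell\,\|D_k^Q\| + \tfrac{c^P}{T}\|D_k^P\| + \tfrac{c^P}{T},\]
where crucially the coefficient of $\|D_k^P\|$ in the first line is exactly $1$ (obtained from $\|\L^P_{k+1,T}D_k^P\|\le\|\L^P_{k+1,T}\|\,\|D_k^P\|\le\|D_k^P\|$, not $1+O(1/T)$), and the coefficient of $\|D_k^Q\|$ in the second line is the genuine contraction factor $\ell<1$ coming from $\|\L^Q_{k+1,T}\|\le\ell$.

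From this system the estimate closes for $T$ large. Summing the $P$-line (coefficient $1$) over $j=0,\dots,k-1\le T$ gives $\sup_k\|D_k^P\|\le c^P\sup_k\|D_k^Q\| + \tfrac{2c^P}{T(1-\ell)}$; solving the geometric $Q$-line gives $\sup_k\|D_k^Q\|\le\tfrac{c^P}{T(1-\ell)}\big(\sup_k\|D_k^P\|+1\big)$; substituting and absorbing the resulting $\tfrac{(c^P)^2}{T(1-\ell)}\sup_k\|D_k^P\|$ term (legitimate once $T\ge T_0$ with $T_0$ of order $(c^P)^2/(1-\ell)$) yields $\sup_k\|D_k^P\|,\sup_k\|D_k^Q\|\le\tfrac{C'}{T(1-\ell)}$ with $C'$ a function of $c^P$ only, hence $\|D_k\|\le\|D_k^P\|+\|D_k^Q\|\le\tfrac{C^P}{T(1-\ell)}$. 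For the finitely many $T<T_0$ one falls back on the trivial bound $\|D_k\|\le\|\L_{k,T}\cdots\L_{1,T}\|+\|\Phi_k\|+\|\Psi_k\|\le4$, which is $\le\tfrac{C^P}{T(1-\ell)}$ after enlarging $C^P$, and $C^P$ still depends on $c^P$ alone because $T<T_0$ forces $T(1-\ell)$ to be bounded in terms of $c^P$. The only genuinely non-routine point I expect is this one: recognizing that the contraction estimate is lossy and that the $P$/$Q$ decomposition—together with squeezing the coefficient in the $P$-line down to exactly $1$—is precisely what converts an $O(1)$ bound into the required $O\big((T(1-\ell))^{-1}\big)$ one; the rest is bookkeeping with the regularity and orthogonality of the spectral projectors.
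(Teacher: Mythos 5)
Your proposal is correct, and it takes a genuinely different route from the paper. The paper's argument expands the full product $\L_k\dotsm\L_1$ directly by inserting $\L_n=\L^P_n+\L^Q_n$ at every factor, classifies the resulting $2^k$ terms by their block structure (alternating runs of $\L^P$'s and $\L^Q$'s, as in the forms \eqref{eq_factorform1}--\eqref{eq_factorform4}), attaches a factor $c/T$ to each block transition via \eqref{eq_BoundPQ} and a factor $\ell^{|A_n|}$ to each $Q$-run, and then closes the estimate with a binomial-coefficient counting argument and the bound $(1+c/T)^{k-2}\le\exp c$. You instead recast the difference as the solution of a discrete Duhamel recursion $D_k=\L_k D_{k-1}+\L^Q_k\Phi_{k-1}+\L^P_k\Psi_{k-1}$, observe that the inhomogeneous terms force a subspace jump and so are $O(c^P/T)$ by the same projector estimate, and then---recognizing that naive Gronwall loses the $1/T$ gain---split into $D_k^P,D_k^Q$ and close a $2\times2$ linear recursion whose diagonal coefficients are exactly $1$ and $\ell$. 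Both proofs rest on the same three ingredients ($\|P_n-P_{n-1}\|\le c^P/T$, $\|\L^P_n\|\le1$, $\|\L^Q_n\|\le\ell$) and both yield $C^P/(T(1-\ell))$ with $C^P$ depending only on $c^P$; the paper's approach makes the structure of every term explicit (useful for the subsequent remarks about the optimality in $T$ and $\ell$), while yours trades the combinatorics for a shorter bootstrapping argument and a small self-consistency step (absorbing the $(c^P)^2/(T(1-\ell))$ feedback term for $T$ above a threshold, with the trivial bound covering the remaining finitely many $T$, which you handle correctly). Your derivation of the ``Moreover'' bound $\|\Psi_k\|\le 2\ell^k$ matches the paper's.
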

We postpone the proof to Appendix\nobreakspace \ref {App:Tanaka}.

\begin{remarks}\nopagebreak \hfill \vspace{-0.6em}
\begin{itemize}
\item From Equation\nobreakspace \textup {(\ref {eq:Qkilled})}, one has $\max(\|P_k\L_k \dotsm \L_1Q_0 \|,\|Q_k \L_k \dotsm \L_1 P_0\|)\leq \frac{C^P}{T(1-\ell)}$.
\item As the proof shows, one cannot obtain a better dependency on $T$ than $1/T$ using this method: the expansion of \eqref{eq_LminusLP} contains terms of the form $\L^Q_k \dotsm \L^Q_{n+1} \L^P_n \dotsm \L^P_1 P_0$ which have exactly one $Q_nP_{n-1}$ part to pick up a $1/T$ factor. Remark in addition that there are $k-1$ such terms; since we can have $k=T$, our claim would break down without the assumption $\|\L^Q\|\leq \ell < 1$.
\item Similarly, the dependence on $\ell$ of 
the result 
cannot be improved substantially: indeed, expression \eqref{eq_l_optimal1} is bounded below by the term $d=1$, $\frac{ c }T\sum_{\alpha=1}^{k-1} \ell^\alpha = \frac{  c \ell(1-\ell^{k-1})}{T(1-\ell)}$ which already captures the essentials of the upper bound.
\end{itemize}
\end{remarks}

\subsection{Approximating $\L^P_k \L^P_{k-1} \dotsm \L^P_1 P_0$ by an adiabatic dynamics}
	
\label{sec:unitary}

Define two families $(\kat{k,T})_{k=0,\ldots,T}$ and $(\kat{k,T}^\dagger)_{k=0,\ldots,T}$ by $\kat{0,T} = \kat{0,T}^\dagger = P_{0,T}$, and
\begin{equation}\label{eq:kat}
\begin{aligned}
\kat{k+1,T} := \sum_j P_{k+1,T}^j P_{k,T}^j \big(\one-(P_{k+1,T}^j-P_{k,T}^j)^2\big)^{-1/2},  \\
\kat{k+1,T}^\dagger := \sum_j  P_{k,T}^j P_{k+1,T}^j \big(\one-(P_{k+1,T}^j-P_{k,T}^j)^2\big)^{-1/2}.
\end{aligned}
\end{equation}
These families are well-defined for large enough $T$ by Lemma\nobreakspace \ref {lem:projC2}. The observation that every $(P_{k+1,T}^j-P_{k,T}^j)^2$ commutes with both $P_{k,T}^j$ and $P_{k+1,T}^j$, and simple formulas like e.g. $P_{k,T}^j P_{k+1,T}^j P_{k,T}^j=P_{k,T}^j\big(\id - (P_{k+1,T}^j-P_{k,T}^j)^2\big)$ lead immediately to the intertwining relations:
\begin{gather*}
\kat{k+1,T} \, P_{k,T}^j = P_{k+1,T}^j\, \kat{k+1,T}, \qquad \kat{k+1,T}^\dagger\, P_{k+1,T}^j = P_{k,T}^j \,\kat{k+1,T}^\dagger,\\
\kat{k+1,T}^\dagger \,\kat{k+1,T} = P_{k,T}, \qquad \kat{k+1,T}\,\kat{k+1,T}^\dagger = P_{k+1,T},
\end{gather*}
(see \S 4.6 in chapter I of \cite{Kato} for similar results).

\begin{remark}
The operator $\kat{k+1,T}^\dagger$ is a pseudo-adjoint of $\kat{k+1,T}$, in the sense that we would have $\kat{k+1,T}^*=\kat{k+1,T}^\dagger$ if the spectral projectors $P_{k,T}^j$ were self-adjoint. We continue with this notation throughout this section, and every operator $Y^\dagger$ will be a pseudo-adjoint of $Y$, depending on $\{P_{k,T}^j\}_{k,j}$.
\end{remark}
 Now define $\Kat{0,T} = \Kat{0,T}^\dagger = \one$, and
\begin{equation}
\Kat{k,T}  := \kat{k,T}\ldots \kat{1,T},  \qquad \Kat{k,T}^\dagger := \kat{1,T}^\dagger\ldots \kat{k,T}^\dagger. \label{eq:Kat}
\end{equation}
By definitions Equation\nobreakspace \textup {(\ref {eq:kat})} and Lemma\nobreakspace \ref {lem:projC2} one has uniform bounds for $\Kat{k,T}$, $\Kat{k,T}^\dagger$ for all $T\geq T_0(c^P)$, 
\begin{equation}  \label{eq_unifboundK}
\sup_{k=0,\ldots,T}\max(\|\Kat{k,T}\|, \|\Kat{k,T}^\dagger\|)\leq N_{\mathrm{max}} (1-\frac{(c^P)^2}{T^2})^{-T/2}.
\end{equation}
The operators $ \Kat{k,T}$ and $\Kat{k,T}^\dagger$ account for the motion of the spectral projectors between step 1 and step $k$. They satisfy 
\begin{equation}\label{eq_Kat2}
\begin{aligned}
\Kat{k,T} \,P_0^j = P_{k,T}^j \,\Kat{k,T}, &\qquad \Kat{k,T}^\dagger\, P_{k,T}^j = P_0^j \,\Kat{k,T}^\dagger, \\
\Kat{k,T}^\dagger\,\Kat{k,T}= P_0, &\qquad \Kat{k,T}\,\Kat{k,T}^\dagger = P_{k,T}.
\end{aligned}
\end{equation}
Now define  two families $(\Phase{k,T})_{k=1,\ldots,T}$ and $(\Phase{k,T}^\dagger)_{k=1,\ldots,T}$ by  $\Phase{0,T} = \Phase{0,T}^\dagger = P_{0,T}$, and
\begin{equation}
\Phase{k,T}  = \sum_j \big(\prod_{n=1}^{k} e^j_{n,T}\big)\, P_0^j,  \qquad
\Phase{k,T}^\dagger  = \sum_j \big(\prod_{n=1}^{k} \overline e^j_{n,T}\big)\, P_0^j. \label{eq:Phase}
\end{equation}
which satisfy $\Phase{k,T} \Phase{k,T}^\dagger = P_{0} = \Phase{k,T}^\dagger \Phase{k,T}$. We let $A_{k,T}=\Kat{k,T} \,\Phase{k,T}$ and $A^\dagger_{k,T}=\Phase{k,T}^\dagger \,\Kat{k,T}^\dagger$. These operators will be our \textit{adiabatic approximation} of the time evolution. Note that estimate (\ref{eq_unifboundA}) follows from (\ref{eq_unifboundK}) and the monotonicity in $T$ of its right hand side.

\begin{proposition} \label{prop:general-ad}
Under hypotheses \textup{H1}--\textup{H4}, there exists positive constants $T_0$ and $C^P$, depending only on $c^P$ defined by \eqref{eq_defCPCPp}, and on $N_{\mathrm{max}}$ defined by \eqref{eq_unifboundNs}, such that the adiabatic approximation~$(A_{k,T})_k$ defined above satisfies for all $T\geq T_0$ 
\begin{equation}
	\| \L^P_{k,T} \L^P_{k-1,T}\dotsm \L^P_{1,T} P_0 - A_{k,T}\|\leq C^P/T, \label{eq:katoapproxOfLp}
\end{equation}
where $A_{k,T} P_0^j = P_{k,T}^j A_{k,T}$, $A_{k,T} Q_0 = 0$, and $\|A_{k,T}\|$ is uniformly bounded in $k, T$ for $k\leq T$.
\end{proposition}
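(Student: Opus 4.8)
The plan is to follow Kato's scheme, adapted to this discrete and non-self-adjoint situation, with $\Kat{k,T}$ playing the role of the geometric (parallel-transport) factor and $\Phase{k,T}$ that of the dynamical phase. I would first dispose of the algebra. Since $\Phase{k,T}=\sum_j\bigl(\prod_{n=1}^k e^j_{n,T}\bigr)P_0^j$ commutes with every $P_0^i$, satisfies $\Phase{k,T}Q_0=0$ and has each coefficient of modulus $1$, and since $\Kat{k,T}$ obeys the intertwining relations \eqref{eq_Kat2} and the uniform bound \eqref{eq_unifboundK}, one reads off immediately that $A_{k,T}=\Kat{k,T}\Phase{k,T}$ satisfies $A_{k,T}P_0^j=P_{k,T}^j A_{k,T}$, $A_{k,T}Q_0=0$, and a bound on $\|A_{k,T}\|$ uniform in $k\le T$ for $T\ge T_0(c^P)$. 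What remains is the quantitative estimate \eqref{eq:katoapproxOfLp}, and for this I would drop the subscript $T$ and use a discrete Duhamel (telescoping) identity: from $\bigl(\L^P_k\cdots\L^P_{n+1}\bigr)\L^P_n=\L^P_k\cdots\L^P_n$ and $A_0=P_0$,
\[ \L^P_k\L^P_{k-1}\cdots\L^P_1 P_0-A_k=\sum_{n=1}^k\bigl(\L^P_k\L^P_{k-1}\cdots\L^P_{n+1}\bigr)\bigl(\L^P_n A_{n-1}-A_n\bigr), \]
the product in front of the $n=k$ term being empty. Each prefactor $\L^P_k\cdots\L^P_{n+1}=(\L_k P_k)\cdots(\L_{n+1}P_{n+1})$ has norm $\le 1$ by \textup{H1} and Lemma \ref{lemma_UnifBoundLPn}, so everything is reduced to the one-step defect $R_n:=\L^P_n A_{n-1}-A_n$.

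Next I would compute $R_n$ by expanding the $C^2$ projectors to first order (Lemma \ref{lem:projC2}, with derivatives bounded by $c^P$) and inserting $P_n\Kat{n-1}=\sum_{i,j}P_n^j P_{n-1}^i\Kat{n-1}$. Using that $\Kat{n}=\kat{n}\Kat{n-1}$ with $\kat{n}=\sum_j P_n^j P_{n-1}^j+O\!\bigl((c^P N_{\mathrm{max}})^2/T^2\bigr)$ by \eqref{eq:kat}, the ``diagonal'' block $\sum_j P_n^j P_{n-1}^j\Kat{n-1}$ matches $\Kat{n}$ to that order, and since $\L_n\Kat{n}\Phase{n-1}=\Kat{n}\Phase{n}=A_n$ one obtains
\[ R_n=\sum_{\ell\ne i}e^\ell_n\Bigl(\prod_{p=1}^{n-1}e^i_p\Bigr)\,P_n^\ell\,\Kat{n-1}\,P_0^i\ +\ O\!\Bigl(\frac{(c^P N_{\mathrm{max}})^2}{T^2}\Bigr). \]
Thus the genuinely $O(1/T)$ part of $R_n$ is purely \emph{off-diagonal}, mapping $\ran P_0^i$ to $\ran P_n^\ell$ with $i\ne\ell$, and it carries the explicit oscillatory scalar factors $e^\ell_n\prod_{p<n}e^i_p$. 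Fed into the telescoping sum, the $O(1/T^2)$ remainders contribute at most $k\cdot O((c^P N_{\mathrm{max}})^2/T^2)=O((c^P N_{\mathrm{max}})^2/T)$, since $k\le T$.

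The heart of the matter is the off-diagonal part, where the extra factor $T^{-1}$ must be produced by a discrete summation by parts --- the exact analogue of the integration by parts in Kato's proof. For a fixed pair $i\ne\ell$, the gap \textup{H2} forces $|e^i_n-e^\ell_n|\ge 2\epsilon$, so the unimodular scalars $z^{i\ell}_n:=\prod_{p=1}^{n-1}(e^i_p/e^\ell_p)$ obey $z^{i\ell}_{n+1}-z^{i\ell}_n=(e^i_n/e^\ell_n-1)z^{i\ell}_n$ with $|e^i_n/e^\ell_n-1|\ge 2\epsilon$; hence their partial sums are bounded uniformly in $T$, while the operator coefficients accompanying them (assembled from the slowly varying $\Kat{n-1}$, the $P^\ell_n$, and the phases) vary by $O(c^P N_{\mathrm{max}}/T)$ per step. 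Abel summation then bounds the off-diagonal contribution to the telescoping sum by $C^P/T$, completing \eqref{eq:katoapproxOfLp}. (Equivalently, one may pass to the interaction picture $\Omega_k:=A_k^\dagger\,\L^P_k\cdots\L^P_1 P_0$, which satisfies the \emph{exact} recursion $\Omega_k=(P_0+G_{k-1})\Omega_{k-1}$, $\Omega_0=P_0$, with $G_{k-1}=A_{k-1}^\dagger(\kat{k}^\dagger-P_{k-1})A_{k-1}$ off-diagonal, of size $O(1/T)$, and carrying the same oscillatory factors; a single near-identity ``super-adiabatic'' transformation using \textup{H2} then gives $\Omega_k=P_0+O(1/T)$, whence $\L^P_k\cdots\L^P_1 P_0=A_k\Omega_k=A_k+O(1/T)$.)

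The main obstacle is precisely this summation-by-parts step: as in every adiabatic theorem, the naive term-by-term bound on the Duhamel sum only yields $O(1)$, and promoting it to $O(1/T)$ forces one to exploit the oscillation encoded in the products of peripheral eigenvalues (via the gap \textup{H2}), all the while keeping the non-self-adjoint ``pseudo-adjoint'' bookkeeping and the slow variation of every operator coefficient under control. A secondary point is to check that the genuinely $O(1/T)$ part of $R_n$ is off-diagonal, the diagonal part being $O(1/T^2)$ --- this is a direct, if slightly tedious, consequence of the definitions \eqref{eq:kat} together with the identity $P_{n-1}^j P_n^j P_{n-1}^j=P_{n-1}^j\bigl(\one-(P_n^j-P_{n-1}^j)^2\bigr)$.
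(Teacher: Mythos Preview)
Your parenthetical ``interaction picture'' alternative is essentially the paper's proof and is the route that actually closes. Your primary Duhamel-plus-Abel argument, however, has a genuine gap at the summation-by-parts step. In the telescoping sum $\sum_{n=1}^k M_{n+1}R_n$ with $M_{n+1}:=\L^P_k\cdots\L^P_{n+1}$, the prefactor $M_{n+1}$ is \emph{not} slowly varying: $M_n=M_{n+1}\L^P_{n+1}$ and $\L^P_{n+1}=\sum_j e^j_{n+1}P^j_{n+1}$ has spectrum on $S^1$, not near the identity. When you write the $(i,\ell)$ off-diagonal piece as $\sum_n z^{i\ell}_n B_n$, the operator coefficient is $B_n=M_{n+1}\bigl(\prod_{p\le n}e^\ell_p\bigr)P_n^\ell P_{n-1}^iK_{n-1}$, and your list of its ingredients (``the slowly varying $K_{n-1}$, the $P^\ell_n$, and the phases'') omits $M_{n+1}$. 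A direct check gives only $B_{n+1}-B_n=O(1/T)$, the same order as $B_n$ itself; for instance $M_{n+1}P^\ell_n=e^\ell_{n+1}M_{n+2}P^\ell_{n+1}P^\ell_n+\sum_{m\ne\ell}e^m_{n+1}M_{n+2}P^m_{n+1}P^\ell_n$ and the second sum is $O(1/T)$, not $O(1/T^2)$. Abel summation then leaves the off-diagonal contribution at $O(1)$. Arguing that $M_{n+1}P^\ell_n\approx(\prod_{p>n}e^\ell_p)\times(\text{slow})$ would itself be the adiabatic theorem for the reversed product, hence circular.

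The paper avoids this by passing to $\Omega_k:=\Phase{k}^\dagger\Kat{k}^\dagger\L^P_k\cdots\L^P_1P_0$ and $\Theta_k:=\Phase{k}^\dagger\Kat{k}^\dagger\L^P_k\Kat{k-1}\Phase{k-1}$, so $\Omega_k=\Theta_k\Omega_{k-1}$; it then writes $\Omega_k=P_0+\sum_n(\Theta_n-P_0)\Omega_{n-1}$ and sums by parts via $V_k=\sum_{m\le k}(\Theta_m-P_0)$. The gain is that $P_0^j(\Theta_n-P_0)P_0^\ell=Z^{j\ell}_{n-1}R^{j\ell}_n$ with $R^{j\ell}_n=\Kat{n}^\dagger P^j_nP^\ell_{n-1}\Kat{n-1}$ built only from the Kato intertwiners and projectors --- genuinely slowly varying --- so that $R^{j\ell}_{n+1}-R^{j\ell}_n=O(1/T^2)$ (this is where the second-derivative bound of Lemma~\ref{lem:projC2} enters) and Abel summation yields $V_k=O(1/T)$. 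Conjugating by $A_k^\dagger$ is precisely what strips the fast prefactor $M_{n+1}$ from the sum; without it the argument does not close. (A minor point: your formula $G_{k-1}=A_{k-1}^\dagger(\kat{k}^\dagger-P_{k-1})A_{k-1}$ is not literally $\Theta_k-P_0$, since it drops the factor $\L^P_k$.)
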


\begin{proof}
Our proof is a simple adaptation of  \cite{Tan11}. It is given in Appendix~\ref{App:Tanaka}.
\end{proof}

\begin{remark}
All statements of this Section hold for $X$ an {\it infinite} dimensional Banach space, assuming conditions \textup{H1}--\textup{H4}, if the differentiability conditions are understood in the norm sense. These assumptions imply that $\sup_{s}\mathrm{dim}\,P(s)X<+\infty$. 
\end{remark}

\section{Perturbation of relative entropy} \label{sec:Relative-entropy} \label{sec:perturb}
Following the strategy outlined in Section\nobreakspace \ref {sec:strat}, we now study an expansion of the relative entropy of two states that are both perturbations of the same state. This will allow us to estimate the entropy production \eqref{eq_expsigmak} of an RIS process in Section\nobreakspace \ref {sec:Application}. 

More precisely, let $\eta$ be a faithful state on a finite-dimensional Hilbert space $\H$, and let~$D_1$ and $D_2$ be two operators with $\tr D_j=0$, $j=1,2$. Our goal is to give an expansion of $S(\eta+D_1|\eta+D_2)$ to (combined) second order in $D_\ell$ in the sense that we will neglect terms of order $D_1^{\alpha_1}D_2^{\alpha_2}$ for $\alpha_1+\alpha_2 \geq 3$. As we will see in~Proposition\nobreakspace \ref {prop:perturbentropy}, this relative entropy vanishes to first order in the $D_j$, so we are simply expanding to the lowest non-vanishing order. This will follow from the (Dunford-Taylor) holomorphic functional calculus.
Consistently with the preceding sections, we denote by $\|\cdot\|$ the operator norm (of any linear map) and $\|\cdot\|_1$ the trace norm on $\I_1(\H)$.

\begin{proposition} \label{prop:perturbentropy}
Let $\eta$ be a faithful state with spectral decomposition $\eta = \sum_i  \mu_i p_i $, where $\mu_j$ are the eigenvalues and $p_j$ the associated spectral projections.
Let $D_1,D_2$ be two traceless perturbations of $\eta$. There exists a positive constant $C_\eta$ depending only on $\infspec\eta$,
\[C_\eta=c \, \dim \H \,\frac{\log2-\log\infspec\eta}{(\infspec\eta)^4}\]
(where $c$ is a numerical constant), and a constant $D_\eta > 0$ depending only on $\eta$ such that if $D_1,D_2$ satisfy $\|D_j\|\leq D_\eta$, $j=1,2$, then the relative entropy $S(\eta+D_1|\eta+D_2)$
satisfies
\begin{equation} \label{eq_perturbentropy}
\big| S(\eta+D_1|\eta+D_2) - F_\eta(D_1-D_2)\big| \leq C_\eta (\|D_1\|+\|D_2\|)^3
\end{equation}
where $F_\eta(A):= F_\eta(A,A)$ for
\begin{equation} \label{eq_defFeta}
	 F_\eta(A,B) := \sum_i\tr  (A p_i B p_i )\frac1{2\mu_i}  + \sum_{i<j} \tr(Ap_j Bp_i) \frac{\log(\mu_i) -\log(\mu_j)}{\mu_i-\mu_j}.
\end{equation}
\end{proposition}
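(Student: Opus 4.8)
The plan is to express $S(\eta+D_1|\eta+D_2) = \tr\big((\eta+D_1)\log(\eta+D_1)\big) - \tr\big((\eta+D_1)\log(\eta+D_2)\big)$ and expand each $\log(\eta+D_j)$ via the Dunford--Taylor functional calculus
\[
\log(\eta+D) = \frac1{2\i\pi}\int_\Gamma \log(z)\,(z-\eta-D)^{-1}\,\d z,
\]
with $\Gamma$ a fixed contour encircling $\sp\eta \subset (0,1]$ and avoiding the branch cut of $\log$; faithfulness of $\eta$ ensures $\infspec\eta>0$, so such a $\Gamma$ exists, and the constant $D_\eta$ is chosen small enough (depending on $\mathrm{dist}(\sp\eta,\Gamma)$, hence on $\eta$) that $\|D\|\le D_\eta$ keeps $\eta+D$ inside $\Gamma$ and keeps $(z-\eta-D)^{-1}$ bounded on $\Gamma$. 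Writing the resolvent identity $(z-\eta-D)^{-1} = R(z) + R(z)DR(z) + R(z)DR(z)DR(z)(\one + \cdots)$ with $R(z)=(z-\eta)^{-1}$, one gets $\log(\eta+D) = \log\eta + L^{(1)}(D) + L^{(2)}(D,D) + O(\|D\|^3)$, where $L^{(1)}, L^{(2)}$ are the obvious contour integrals, and the remainder is controlled in operator norm by $C\,\|D\|^3$ with $C$ governed by $\sup_{z\in\Gamma}\|R(z)\|$ and $\sup_{z\in\Gamma}|\log z|$, i.e. ultimately by $\infspec\eta$ and $\dim\H$ (since trace norm and operator norm differ by a factor $\le\dim\H$).

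The next step is to assemble the four resulting pieces of $S(\eta+D_1|\eta+D_2)$: the $\log\eta$ terms, the first-order-in-$D_j$ terms, and the second-order terms. Using $\tr D_j = 0$ and the cyclicity of the trace, I expect the zeroth- and first-order contributions to cancel, confirming the assertion that the relative entropy vanishes to first order; the nonvanishing remainder is a quadratic form in $(D_1,D_2)$. To identify it with $F_\eta(D_1-D_2)$ it is cleanest to compute the contour integrals in the eigenbasis of $\eta$: inserting $\eta=\sum_i\mu_i p_i$ and $R(z)=\sum_i (z-\mu_i)^{-1}p_i$ reduces every integral to residue computations of the form $\frac1{2\i\pi}\int_\Gamma \frac{\log z}{(z-\mu_i)(z-\mu_j)}\,\d z = \frac{\log\mu_i-\log\mu_j}{\mu_i-\mu_j}$ for $i\ne j$ and $\frac1{2\i\pi}\int_\Gamma \frac{\log z}{(z-\mu_i)^2}\,\d z = \mu_i^{-1}$ for $i=j$ (and higher-order pole integrals producing the $\tr(D_1\log(\eta+D_1))$ vs $\tr(D_1\log(\eta+D_2))$ bookkeeping). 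Collecting the $p_i\cdots p_j$ sandwiched terms, the diagonal ($i=j$) part contributes $\sum_i\tr\big((D_1-D_2)p_i(D_1-D_2)p_i\big)/(2\mu_i)$ and the off-diagonal part contributes the symmetric double sum with weight $\tfrac{\log\mu_i-\log\mu_j}{\mu_i-\mu_j}$ — exactly $F_\eta(D_1-D_2)$ with $F_\eta(A)=F_\eta(A,A)$ as defined in \eqref{eq_defFeta}.

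Finally, the error term: all discarded pieces are trilinear (or higher) in the $D_\ell$, each bounded in trace norm by a product of three $\|D_\ell\|$'s (where $\ell\in\{1,2\}$) times a constant built from $\sup_{z\in\Gamma}\|R(z)\|^4$, $\sup_{z\in\Gamma}|\log z|$, the length of $\Gamma$, and $\dim\H$; bounding each $\|D_\ell\| \le \|D_1\|+\|D_2\|$ and tracking the dependence of $\Gamma$ and of these suprema on $\infspec\eta$ (the resolvent blows up like $(\infspec\eta)^{-1}$, giving the fourth power, and $|\log z|$ contributes the $\log 2 - \log\infspec\eta$) yields the stated form of $C_\eta$. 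The main obstacle I anticipate is not any single estimate but the careful combinatorial bookkeeping in the second paragraph: one must correctly match which $D_j$ appears in which slot across the two traces $\tr((\eta+D_1)\log(\eta+D_1))$ and $\tr((\eta+D_1)\log(\eta+D_2))$, keep track of the asymmetry (the prefactor state is always $\eta+D_1$, while the logarithm's argument alternates), and verify that after using $\tr D_j = 0$ and cyclicity everything collapses to a function of $D_1-D_2$ alone — this last point is slightly delicate because individually the second-order terms are \emph{not} functions of $D_1-D_2$, only their particular combination is.
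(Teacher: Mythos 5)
Your proposal is correct and follows essentially the same route as the paper: Dunford–Taylor functional calculus, Neumann expansion of the perturbed resolvent, and residue calculus in the eigenbasis of $\eta$, with the first-order terms cancelling by tracelessness and cyclicity. The only cosmetic difference is that the paper expands $f(\eta+D_1)$ directly with $f(\zeta)=\zeta\log\zeta$ (so that $\tr\big((\eta+D_1)\log(\eta+D_1)\big)$ is a single contour integral, organized via explicit trace functionals $T_1,T_2$ evaluated in a dedicated lemma), whereas you expand $\log(\eta+D_1)$ alone and then multiply by $\eta+D_1$; both bookkeepings land on the same quadratic form $F_\eta(D_1-D_2)$ and the same error estimate.
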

The full proof of this proposition, including two technical lemmata, is given in Appendix\nobreakspace \ref {App:RelativeEntropy}.

\begin{remarks} We make a few comments on the implications of this proposition. From now on, for $Z$ a nonnegative quantity, we denote by $O_\eta(Z)$ any term that is bounded by $C_\eta \, Z$ for $Z$ small enough.

\begin{itemize}
\item
$F_\eta$ is a  bilinear form and it is not difficult to see using the Mean Value Theorem  that 
\begin{equation} \label{eq_boundFeta}
| F_\eta(A,B) |\leq \|A\| \|B\| \,\frac{(\dim {\H})^2}{\infspec\eta}.
\end{equation}
\item For later purposes, we note that if $\eta=\eta_0+\Delta$, so that $\tr \Delta=0$, we have
\begin{align}\label{perteta}
	S(\eta+D_1|\eta+D_2) = F_{\eta_0}(D_1-D_2) + O_{\eta_0}(\|D+\Delta\|^3),
\end{align}
with obvious notations.
\item For self-adjoint  $D_{1}$ and $ D_2$, as will be the case when both arguments of $S$ are states, we may take $D_\eta = \frac14\infspec\eta$, and  write
\begin{align*}
	&S(\eta+D_1|\eta+D_2)
	= \\
	&\ \ \ \ \sum_i \tr\big(|p_i(D_1\!-\!D_2)p_i|^2\big) \frac{1}{2\mu_i} + \sum_{ i < j} \tr\big(|p_i (D_1\!-\!D_2) p_j|^2\big) \frac{\log \mu_i - \log \mu_j}{\mu_i-\mu_j} + O_\eta(\|D\|^3).
	\end{align*}
	where both terms are non-negative, and the first is strictly positive given $D_1\neq D_2$. In the following, we will write $S(\eta+D_1|\eta+D_2) = O_\eta(\|D\|^2)$.

\item This proof also shows that if $D_1-D_2$ is not traceless, one obtains the same formula for $S(\eta+D_1|\eta+D_2)$ with the additional  term $\tr(D_1-D_2)$. 
	\item Note that on the other hand, by Theorem I.1.15 of \cite{OP}, if $\eta+D_1$, $\eta+D_2$ are states,
\begin{equation}\label{eq:quantum_pinsker}
	S(\eta+D_1|\eta+D_2) \geq \frac{1}{2}\|D_1-D_2\|_1^2.
\end{equation}
\end{itemize}
\end{remarks}

\section{Application to RIS} \label{sec:Application}

We now discuss some applications of the above results to repeated interaction systems. One of our tasks will be to verify the assumptions \textup{H1}--\textup{H4} for such systems. We will take advantage of the fact that operators $\L$ induced by repeated interaction systems are CPTP.

We recall some results about \emph{irreducibility} of CPTP maps and the related Perron-Frobenius theorem. For this assume, as will be the case in applications to RIS, that $X=\I_1(\H)$, with $\H$ a finite dimensional Hilbert space, equipped with the trace norm $\|\eta\|_1=\tr\big((\eta^*\eta)^{1/2}\big)$. We recall that a CPTP map $\L$ on $\I_1(\H)$ is a contraction and is called irreducible if the only self-adjoint projections $P \in \B(\H)$ satisfying $\L\big(P\I_1(\H) P\big) \subset P \I_1(\H) P$ are $P=0$ and $\id$. On top of the property $\sp (\L)=\overline{\sp (\L)}$, the Perron-Frobenius theorem for CPTP maps and related results (see \cite{EHK,Gro} or the reviews \cite{Sch,wolftour}) state that, if $\L$ is irreducible, then there exists $z$ in $\N$ such that
\begin{align}	\label{groupev}
 \sp(\L) \cap S^1 = S_z := \{ \exp(2\i \pi k/z), k=0,1,\ldots, z-1\},
\end{align} 
and each $\exp(2\i \pi k/z)$ is a simple eigenvalue. In addition, the (unique up to scalar multiplication) eigenvector for the eigenvalue 1 is positive definite, so that it can be chosen to be a state. In particular, there exists a unique invariant state $\sysstate\invar$.

If $\L$ is given in the form $\L(\eta)=\sum_{i\in I} V_i \eta V_i^*$ with $V_i\in \B(\H)$ for $i\in I$, then an equivalent condition for irreducibility is that there exist no non-trivial subspace of $\H$ left invariant by all operators $V_i$, $i\in I$; in particular, if $\dim \H=2$ then $\L$ is irreducible if and only if the operators $V_i$ do not have a common eigenvector.

We can now prove the following proposition. 
\begin{proposition} \label{prop:normTospr}
Assume $s\mapsto \L(s) \in \B(X)$ is a $C^2$ operator-valued function such that each $\L(s)$ is an irreducible CPTP map. If $s\mapsto \L(s)$ satisfies \textup{H4} then it satisfies \textup{H1}--\textup{H4}. If $s\mapsto \L(s)$  satisfies \textup{\textup{wH4}} then for any $\ell\in {]\ell',1[}$ there exists $m\in \nn$ such that $s\mapsto \L^m(s)$ satisfies \textup{H1}--\textup{H4}. In either case, there exists $z\in \mathbb N$ such that $\sp\L(s)\cap S^1 = S_z$ for all $s\in[0,1]$.
\end{proposition}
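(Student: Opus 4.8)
The plan is to verify the four hypotheses in turn, treating the peripheral-spectrum statement as the organizing principle. The key input is the Perron–Frobenius theorem for irreducible CPTP maps recalled above: for each fixed $s$, $\L(s)$ is a contraction (giving \textup{H1} for free), and its peripheral spectrum is exactly a group $S_{z(s)}$ of $z(s)$-th roots of unity, each a simple eigenvalue. In particular every peripheral eigenvalue is simple, which is half of \textup{H2}. The remaining content of \textup{H2} is the \emph{uniform} gap $\epsilon$, and the remaining content of \textup{H3} is that the peripheral projector varies $C^2$; both of these will follow once we show that the integer $z(s)$ is in fact \emph{constant} in $s$, so that the peripheral spectrum does not jump.

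First I would argue that $s\mapsto z(s)$ is constant. The map $s\mapsto\L(s)$ is continuous (indeed $C^2$), so the spectrum varies continuously; since each $\L(s)$ is a contraction, there are no eigennilpotents at peripheral eigenvalues, and the total multiplicity of the peripheral spectrum (counted via $\operatorname{rank}P(s)$, with $P(s)$ the spectral projector on $\sp\L(s)\cap S^1$) equals $z(s)$. By upper semicontinuity of the rank of a spectral projector attached to a continuously varying isolated part of the spectrum, combined with the fact that the peripheral spectrum is always a \emph{full} group $S_{z(s)}$ of roots of unity — so $z(s)$ can only change by absorbing or releasing an entire orbit, i.e. in discrete jumps — the function $z(\cdot)$ is locally constant on the connected set $[0,1]$, hence equal to some fixed $z$. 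Concretely: if $z(s_0)=z$, then near $s_0$ the $z$ simple peripheral eigenvalues depend continuously on $s$ and stay on $S^1$ (a contraction cannot have eigenvalues of modulus $>1$, and an eigenvalue leaving $S^1$ inward would contradict that the peripheral spectrum is a group containing $1$), so $z(s)\geq z$ near $s_0$; and $z(s)$ cannot exceed $z$ near $s_0$ either, since that would require extra eigenvalues approaching $S^1$ from inside, again incompatible with the spectrum being continuous at $s_0$ where only $z$ eigenvalues sit on the circle. Thus $\sp\L(s)\cap S^1=S_z$ for all $s$, which is the last assertion of the proposition.

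With $z$ fixed, \textup{H2} follows immediately with, say, $2\epsilon<\min_{k\neq k'}|e^{2\pi i k/z}-e^{2\pi i k'/z}|$, a distance that does not depend on $s$ since the peripheral eigenvalues are \emph{always} exactly the $z$-th roots of unity. For \textup{H3}, the peripheral spectrum is now uniformly separated from the rest of $\sp\L(s)$: one can choose a fixed contour $\Gamma$ (e.g. a union of $z$ small fixed circles around the roots of unity, of radius $<\epsilon$) that encircles $S_z$ and nothing else, valid for all $s$; then $P(s)=\frac{1}{2\pi i}\int_\Gamma(\zeta-\L(s))^{-1}\,\d\zeta$ is a composition of the $C^2$ map $s\mapsto\L(s)$ with the (real-analytic, hence smooth) resolvent and the integration, so $s\mapsto P(s)$ is $C^2$, and consequently $s\mapsto\L^P(s)=\L(s)P(s)$ is $C^2$. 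Finally, for \textup{H4}: if \textup{H4} itself is assumed there is nothing to prove, so the content is the \textup{wH4} case — invoke Lemma\nobreakspace \ref {lemma_wH4H4}, whose hypotheses (\textup{H1}, \textup{H2}, \textup{H3}, \textup{wH4}, and continuity) we have just established, to obtain $m\in\nn$ such that $s\mapsto\L(s)^m$ satisfies \textup{H1}, \textup{H3}, \textup{H4}. The only thing Lemma\nobreakspace \ref {lemma_wH4H4} does not directly give is \textup{H2} for $\L(s)^m$; but since $\sp\L(s)^m\cap S^1=\{\zeta^m:\zeta\in S_z\}=S_{z/\gcd(m,z)}$ is again a fixed group of roots of unity with each element a simple eigenvalue (a root of unity $\zeta^m$ of $\L(s)^m$ is simple because $\L(s)$ is diagonalizable on its peripheral subspace with distinct peripheral eigenvalues, and the $m$-th power can only merge distinct peripheral eigenvalues of $\L(s)$ into a common value if... — one checks that here $z$-th roots of unity raised to the $m$-th power that coincide would force the corresponding peripheral eigenspaces of $\L(s)^m$ to have dimension $>1$; this is excluded precisely because, by the group structure, distinct $z$-th roots $\zeta_1\neq\zeta_2$ with $\zeta_1^m=\zeta_2^m$ do occur, so in fact $\L(s)^m$ need \emph{not} satisfy \textup{H2} unless $\gcd(m,z)=1$).

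\textbf{Main obstacle.} The delicate point, and where I would spend the most care, is exactly this last one: ensuring \textup{H2} for the power $\L(s)^m$. Taking powers collapses the peripheral group $S_z$ onto $S_{z/\gcd(m,z)}$ and \emph{increases} multiplicities (each surviving peripheral eigenvalue of $\L(s)^m$ acquires multiplicity $\gcd(m,z)$), which destroys simplicity unless $\gcd(m,z)=1$. The resolution is to apply Lemma\nobreakspace \ref {lemma_wH4H4} but then, within the same step, pass to a further power or — cleaner — choose $m$ from the start coprime to $z$ (possible since the conclusion of Lemma\nobreakspace \ref {lemma_wH4H4} holds for all sufficiently large $m$, and one may pick such an $m$ in the arithmetic progression $1+z\mathbb{N}$); with $\gcd(m,z)=1$ the power map is a bijection of $S_z$ onto itself, peripheral eigenvalues stay simple, and the uniform gap for $\L(s)^m$ is the same fixed gap as for $\L(s)$. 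This gives \textup{H2} and completes the verification of \textup{H1}--\textup{H4} for $\L^m$, with $\sp\L^m(s)\cap S^1 = S_z$ as well. I would also double-check the boundary-regularity convention in the $C^2$ definition (one-sided limits at $0$ and $1$) causes no trouble — it does not, since the contour $\Gamma$ is uniform and the resolvent is jointly continuous up to the endpoints.
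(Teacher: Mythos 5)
Your overall architecture is the same as the paper's: Perron--Frobenius gives \textup{H1} and simplicity of the peripheral eigenvalues, constancy of $z(s)$ then yields \textup{H2} and \textup{H3}, and in the \textup{wH4} case you pass to a power $m$ coprime to $z$ via Lemma~\ref{lemma_wH4H4}. That last step matches the paper exactly, and your handling of the potential loss of \textup{H2} under taking powers (collapse of $S_z$ onto $S_{z/\gcd(m,z)}$, hence the need for $\gcd(m,z)=1$, available since the arithmetic progression $1+z\mathbb{N}$ eventually exceeds $m_0$) is correct and is precisely what the paper does.

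However, your argument that $z(\cdot)$ is constant has a genuine gap. The parenthetical ``an eigenvalue leaving $S^1$ inward would contradict that the peripheral spectrum is a group containing $1$'' is false: if a family of irreducible CPTP maps has period $2$ at $s_0$ and the cyclic structure is broken for $s>s_0$, the eigenvalue at $-1$ slides continuously into the open disk and the peripheral spectrum becomes $S_1=\{1\}$, which is still a group containing $1$. Continuity of the eigenvalues rules out $z$ \emph{increasing} (no eigenvalue strictly inside at $s_0$ can sit on the circle for $s$ near $s_0$), but it does not rule out $z$ \emph{decreasing}, and neither does the group structure. What rules it out is precisely \textup{H4} or \textup{wH4}: these give a uniform bound $\sup_s \spr\L(s)Q(s)\leq\ell'<1$, so no non-peripheral eigenvalue can creep up to $S^1$, and a peripheral eigenvalue cannot slide into the interior without passing through the forbidden annulus $\{\ell'<|z|<1\}$ for $s$ near $s_0$. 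The paper's proof invokes \textup{wH4} explicitly at exactly this point (it produces, for each root of unity $e\in S_{z(s)}$, a neighbourhood $O$ of $e$ meeting $\sp\L(s')$ in at most $\{e\}$ for all $s'$, then concludes by connectedness). Your proof never appeals to \textup{H4}/\textup{wH4} in the constancy step, and without it the step fails; similarly, the ``uniform separation'' you invoke to build the fixed contour $\Gamma$ for \textup{H3} is not free --- it is again \textup{wH4}. You should explicitly cite the spectral-radius bound at both places.
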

The proof relies simply on the rigidity induced by the group structure of the peripheral spectrum and the continuity in $s$ of the peripheral eigenvalues. It is given in Appendix\nobreakspace \ref {App:Application}. This implies in particular that $N_{\mathrm{max}}\equiv z$. We therefore drop all reference to $N_{\mathrm{max}}$ in this section.

In the context of RIS setup, for each step $k = 0, \dotsc, T$, one must choose the free Hamiltonian for the probe, $\henv[k]$, the inverse temperature of the probe $\beta_k$, as well as the interaction $v_k$  (in this section we assume constant interaction time $\tau > 0$ and coupling strength $\lambda$). We define the following asumption about these choices:
\smallskip

 \noindent \textbf{ADRIS} The RIS is associated with
\[\ham_{\env_k}=\ham_{\env_k,T}:= \ham_\env\Big(\frac kT\Big), \qquad \beta_{k}=\beta_{k,T}:=\beta\Big(\frac kT\Big), \qquad  v_{k}=v_{k,T}:=v\Big(\frac kT\Big),\]
for $k = 1, \dotsc, T$, where $s\mapsto \henv(s)$, $\beta(s)$, $v(s)$ are $C^2$ functions on $[0,1]$.
\smallskip
More explicitly, \textbf{ADRIS} means that we have 
\begin{equation}
	\begin{aligned} \label{eq_versionscontinues}
		U(s)&=\exp -\i \tau \big(\ham_\env(s)\otimes \id_{\env} + \id_{\sys}\otimes h_\env(s) + \lambda v(s)\big)\\
		\L(s)(\rho)&=\tr_\env\Big( U(s)\big(\rho\otimes \envstate[s]\big)U(s)^*\Big),
	\end{aligned} 
\end{equation}
where $\envstate[s]$ is the Gibbs state associated with temperature $\beta(s)$ and Hamiltonian $h_\env(s)$. Then $s \mapsto \L(s)$ is a  $\B(\H_\sys)$-valued $C^2$ function, and $\L_{k}=\L_{k,T}:=\L(k/T)$. We will then apply Theorem\nobreakspace \ref {theo_adiabatic} to such maps that are ergodic and satisfy \textup{H4} to characterize the entropy production in Proposition\nobreakspace \ref {prop:1atomentropy}. We define the following variant of ADRIS, which will be relevant in the case the family $\L(s)$ satisfies \textup{wH4} only:
\smallskip

\noindent \textbf{\textit{m}ADRIS} The repeated interaction system is the $m$-repeated version (see Section\nobreakspace \ref {ssec:RISmathdesc} and Figure\nobreakspace \ref {fig:sampling}) of a system satisfying ADRIS, i.e. we have for $k'=1, \dotsc, mT$,
\[\ham_{\env_k'}=\ham_{\env_k',T}:= \ham_\env\Big(\frac {[\frac{(k'-1)}{m}]+1}T\Big), \quad \beta_{k'}=\beta_{k',T}:=\beta\Big(\frac {[\frac{(k'-1)}{m}]+1}T\Big), \quad  v_{k'}=v_{k',T}:=v\Big(\frac {[\frac{(k'-1)}{m}]+1}T\Big),\]
where $s\mapsto \henv(s)$, $\beta(s)$, $v(s)$ are $C^2$ functions on $[0,1]$.

Under $m$-ADRIS, $\L_{k'}=\L_{k',T}$ is of the form $\L(([\frac{(k'-1)}{m}]+1)/T)$ with $s \mapsto \L(s)$ as in \eqref{eq_versionscontinues}.
\medskip

We can now state a result regarding the entropy production of a single step of an ADRIS system,
\begin{equation}
\sigma_{k,T} := S\big(U_{k,T} (\sysstate[k-1,T] \otimes \envstate[k,T]) U_{k,T}^* \,|\, \L_{k,T}(\sysstate[k-1,T])\otimes \envstate[k,T] \big).
\end{equation}

\begin{proposition}[1-RIS]\label{prop:1atomentropy}
Consider a repeated interaction system satisfying assumption \textup{ADRIS}, where the induced CPTP map $\L(s)$ is irreducible for all $s\in[0,1]$, and satisfies \textup{H4}. Denote by $\sysstate\invar_{k,T}$ the (unique) invariant state of $\L_{k,T}$, and $P_{k,T}^1$ the associated spectral projector of $\L_{k,T}$.
Let $\sysstate\init$ be the initial state of the system, and assume $(P_{0,T}^1 + Q_{0,T})\sysstate\init = \sysstate\init$. Further set 
\begin{equation}
\begin{gathered} \label{eq_defXD}
	\Udiff_{k,T}:=U_{k,T} \,\sysstate[k,T]\invar \otimes \envstate[k,T]\, U_{k,T}^* - \sysstate[k,T]\invar\otimes\envstate[k,T],\\
	D_{k,T} := \L_{k,T} ( \sysstate[k-1,T] - \sysstate[k,T]\invar) \otimes \envstate[k,T] - U_{k,T}\big( ( \sysstate[k-1,T] - \sysstate[k,T]\invar)\otimes \envstate[k,T] \big) U_{k,T}^*
\end{gathered}
\end{equation}
and $\delta_{k,T}=\frac1{16}\,\infspec\eta_{k,T}$ where $\eta_{k,T}={\sysstate\invar_{k,T}}\otimes \xi_{k,T}$.		
Then, there exist $T_0>0$ and $C^P>0$ depending only on $c_P$ defined by~\eqref{eq_defCPCPp}
, and~$C_{\eta_{k,T}}^P$ depending only on $c_P$, 
and on $\infspec\eta_{k,T}$, such that if
\begin{equation} \label{eq_conditionsProp1atomentropy}
	 T\geq \max\big(T_0,C^P(\delta_{k,T}(1-\ell))\inv\big),\qquad \|Q_0 \rho\init\|_1 \ell^k \leq \delta_{k,T}, \qquad \|\Udiff_{k,T}\|\leq \delta_{k,T},
\end{equation}
then with $F_{k,T} =F_{\eta_{k,T}}$ defined by \eqref{eq_defFeta},
\begin{equation}	
\label{eq:1slow}
 \big|\sigma_{k,T} -F_{k,T} (\Udiff_{k,T}-D_{k,T})\big|\leq C_{\eta_{k,T}}^P \big(  \|\Udiff_{k,T}\|_1 + \|Q_{0,T}\sysstate\init\|_1 \ell^k + (T(1-\ell))\inv \big)^3 .
 \end{equation}  
\end{proposition}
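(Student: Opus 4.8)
The plan is to combine the adiabatic theorem (Theorem~\ref{theo_adiabatic}, in the refined form \eqref{eq_adiabatic2}) with the relative entropy perturbation expansion (Proposition~\ref{prop:perturbentropy}), using the invariant state $\sysstate\invar_{k,T}$ of $\L_{k,T}$ as the ``base point'' $\eta$ around which both arguments of $\sigma_{k,T}=S(\omega_{U,k}\,|\,\omega_{\L,k})$ are perturbed. First I would fix $k\leq T$ and write $\omega_{U,k}=U_{k,T}(\sysstate[k-1,T]\otimes\xi_{k,T})U_{k,T}^*$ and $\omega_{\L,k}=\L_{k,T}(\sysstate[k-1,T])\otimes\xi_{k,T}$, and set $\eta_{k,T}=\sysstate\invar_{k,T}\otimes\xi_{k,T}$. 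Decomposing $\sysstate[k-1,T]=\sysstate\invar_{k,T}+(\sysstate[k-1,T]-\sysstate\invar_{k,T})$ and using that $U_{k,T}(\sysstate\invar_{k,T}\otimes\xi_{k,T})U_{k,T}^*=\eta_{k,T}+\Udiff_{k,T}$ by the definition \eqref{eq_defXD}, and that $\L_{k,T}(\sysstate\invar_{k,T})=\sysstate\invar_{k,T}$ so that $\L_{k,T}(\sysstate\invar_{k,T})\otimes\xi_{k,T}=\eta_{k,T}$, one gets by linearity
\[
\omega_{U,k}=\eta_{k,T}+\Udiff_{k,T}+ R_{k,T},\qquad \omega_{\L,k}=\eta_{k,T}+R_{k,T}+D_{k,T},
\]
where $R_{k,T}:=\L_{k,T}(\sysstate[k-1,T]-\sysstate\invar_{k,T})\otimes\xi_{k,T}$ is a traceless perturbation and $D_{k,T}$ is exactly the quantity in \eqref{eq_defXD}. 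Then $\omega_{U,k}-\omega_{\L,k}=\Udiff_{k,T}-D_{k,T}$, so in the language of Proposition~\ref{prop:perturbentropy} with $\eta=\eta_{k,T}$, $D_1=\Udiff_{k,T}+R_{k,T}$, $D_2=R_{k,T}+D_{k,T}$, we have $D_1-D_2=\Udiff_{k,T}-D_{k,T}$ and the expansion gives
$
\sigma_{k,T}=F_{k,T}(\Udiff_{k,T}-D_{k,T})+O_{\eta_{k,T}}\big((\|D_1\|+\|D_2\|)^3\big),
$
which is the desired identity provided the perturbations are small enough and we control $\|R_{k,T}\|$ and $\|D_{k,T}\|$.

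The second step is to bound $\|R_{k,T}\|$ and $\|D_{k,T}\|$ by the adiabatic theorem. Since the hypothesis $(P_{0,T}^1+Q_{0,T})\sysstate\init=\sysstate\init$ means $P_{0,T}^j\sysstate\init=0$ for every peripheral $j\neq 1$ (only the invariant-state component survives among peripheral ones), one has $\sysstate[k-1,T]=\L_{k-1,T}\dotsm\L_{1,T}\sysstate\init$, and decomposing $\sysstate\init=P_{0,T}^1\sysstate\init+Q_{0,T}\sysstate\init$ and using \eqref{eq_adiabatic2} together with $A_{k-1,T}P_{0,T}^1\sysstate\init=P_{k-1,T}^1 A_{k-1,T}\sysstate\init$, which by Lemma~\ref{lemma_UnifBoundLPn}(2) and normalization is the invariant state $\sysstate\invar_{k-1,T}$, one obtains
\[
\|\sysstate[k-1,T]-\sysstate\invar_{k-1,T}\|_1\leq \tfrac{C^P}{T(1-\ell)}+2\ell^{k}\,\|Q_{0,T}\sysstate\init\|_1.
\]
(Here one also needs $\|\sysstate\invar_{k-1,T}-\sysstate\invar_{k,T}\|=O(1/T)$, which follows from Lemma~\ref{lem:projC2} applied to the eigenprojector $P^1(s)$, to replace $\sysstate\invar_{k-1,T}$ by $\sysstate\invar_{k,T}$ at the cost of another $C^P/T$ term.) Since $\L_{k,T}$ is a contraction in trace norm and $\|U_{k,T}\cdot U_{k,T}^*\|_1$ is an isometry, both $\|R_{k,T}\|_1$ and $\|D_{k,T}\|_1$ are then bounded by a constant times $\|\sysstate[k-1,T]-\sysstate\invar_{k,T}\|_1\leq \tfrac{C^P}{T(1-\ell)}+2\ell^{k}\|Q_{0,T}\sysstate\init\|_1+\tfrac{C^P}{T}$. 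Feeding this into the error term $(\|D_1\|+\|D_2\|)^3\lesssim(\|\Udiff_{k,T}\|_1+\|R_{k,T}\|_1+\|D_{k,T}\|_1)^3$ produces precisely the right-hand side of \eqref{eq:1slow}, after absorbing numerical constants into $C_{\eta_{k,T}}^P$ and using equivalence of norms in finite dimension to pass between $\|\cdot\|$ and $\|\cdot\|_1$.

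The last bookkeeping step is to check that the smallness conditions required by Proposition~\ref{prop:perturbentropy} are guaranteed by the hypotheses \eqref{eq_conditionsProp1atomentropy}. Proposition~\ref{prop:perturbentropy} (third remark, self-adjoint case) asks $\|D_j\|\leq D_\eta=\tfrac14\infspec\eta_{k,T}$; since $\delta_{k,T}=\tfrac1{16}\infspec\eta_{k,T}$ and $T\geq C^P(\delta_{k,T}(1-\ell))^{-1}$ forces $\tfrac{C^P}{T(1-\ell)}\leq\delta_{k,T}$, and $\|Q_{0,T}\sysstate\init\|_1\ell^k\leq\delta_{k,T}$ and $\|\Udiff_{k,T}\|\leq\delta_{k,T}$ are assumed, one gets $\|D_1\|,\|D_2\|\leq C\delta_{k,T}\leq\tfrac14\infspec\eta_{k,T}$ after choosing $T_0$ large enough that the constant works out (this is where the dependence of $T_0$ on $c^P$ enters, via $C^P$ and $N_{\max}\equiv z$). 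The main obstacle I anticipate is keeping the error terms genuinely uniform: one must make sure the constant $C_{\eta_{k,T}}^P$ depends only on $c^P$ and $\infspec\eta_{k,T}$ and not, say, on $k$ or on $\sysstate\init$, which requires the uniform-in-$k$ bounds on $\|A_{k,T}\|$ from \eqref{eq_unifboundA} and the uniform gap/contraction constants $\epsilon,\ell$; and one must be slightly careful about replacing $\sysstate\invar_{k-1,T}$ by $\sysstate\invar_{k,T}$ and about the interplay between operator norm and trace norm when plugging $D_{k,T}$, which lives naturally on $\H_\sys\otimes\H_\env$, into $F_{\eta_{k,T}}$ and into the cubic remainder.
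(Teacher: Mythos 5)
Your proposal follows essentially the same route as the paper's proof: write $\sigma_{k,T}=S(\eta_{k,T}+D_1\,|\,\eta_{k,T}+D_2)$ with $\eta_{k,T}=\sysstate\invar_{k,T}\otimes\xi_{k,T}$, control $\|\sysstate[k-1,T]-\sysstate\invar_{k,T}\|_1$ via the discrete adiabatic theorem in the refined form \eqref{eq_adiabatic2}, and close with Proposition~\ref{prop:perturbentropy}. Two small remarks. First, with your definition $R_{k,T}=\L_{k,T}(\sysstate[k-1,T]-\sysstate\invar_{k,T})\otimes\xi_{k,T}$, the displayed decomposition is slightly off: one in fact has $\omega_{\L,k}=\eta_{k,T}+R_{k,T}$ and $\omega_{U,k}=\eta_{k,T}+\Udiff_{k,T}+R_{k,T}-D_{k,T}$. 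Since $\omega_{U,k}-\omega_{\L,k}=\Udiff_{k,T}-D_{k,T}$ is unchanged and both perturbations are still $O(\|\Udiff_{k,T}\|+\|\sysstate[k-1,T]-\sysstate\invar_{k,T}\|_1)$, this bookkeeping glitch does not affect the leading term or the cubic remainder, but it should be fixed. Second, you assert $A_{k-1,T}P^1_{0,T}\sysstate\init=\sysstate\invar_{k-1,T}$ exactly, citing Lemma~\ref{lemma_UnifBoundLPn}(2) ``and normalization''. The equality is actually true, but it does not follow from that lemma alone, since $\Kat{k-1,T}$ is not a priori trace-preserving; one needs to observe that each $\kat{n,T}$ maps $\sysstate\invar_{n-1,T}$ to $\sysstate\invar_{n,T}$ because the scalar by which $\big(\one-(P^1_{n,T}-P^1_{n-1,T})^2\big)^{-1/2}$ acts on the one-dimensional range of $P^1_{n-1,T}$ equals one. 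The paper sidesteps this by writing $A_{k-1,T}\sysstate\init=\alpha\,\sysstate\invar_{k-1,T}$ and bounding $|1-\alpha|$ through trace-preservation of $\L_{k-1,T}\cdots\L_{1,T}$; either route yields the same estimate \eqref{eq:estimatestate}, so your argument is sound once this justification is supplied.
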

\begin{remarks} \, \hfill \vspace{-0.8em}
\begin{itemize}
\item The condition $(P_{0,T}^1 + Q_{0,T})\sysstate\init = \sysstate\init$ means that $\sysstate\init$ contains invariant parts and strictly contracting parts, but no part associated with other eigenvalues of $\L_0$ on the unit circle.
\item
If we assume $Q_{0,T} \sysstate\init = 0$ and $\delta_{k,T}>\delta>0$, then conditions \eqref{eq_conditionsProp1atomentropy} relate to $T$ and $\|X_{k,T}\|$ only.
\item The expression $D_{k,T}$ measures the discrepancy between the actual evolution when the adiabatic parameter is $T$ and the ideal adiabatic approximation where the instantaneous state is equal to the invariant state, as shown by the inequality $\|D_{k,T}\|_1 \leq 2 \|\sysstate[k-1,T] - \sysstate[k,T]\invar \|_1$. Equation\nobreakspace \textup {(\ref {eq:estimatestate})} below shows that, if $Q_{0,T}\rho\init=0$ then $\|D_{k,T}\|_1$ is of order$(T(1-\ell))\inv$.
\item The term $X_{k,T}$ is inherent to the non-equilibrium nature of the RIS dynamics. More precisely, it is zero only if the CPTP map $\L_{k,T}$ satisfies a detailed balance property, as shown in Lemma\nobreakspace \ref {lemma_caracX} below.
\item If the conditions hold for every $k$ and $T$, this Proposition allows us to compute an asymptotic rate of entropy production, as in Section\nobreakspace \ref {sec:scl}.  In that Section, we consider the small coupling version of this Proposition, in which the condition on $\|\Udiff_{k,T}\|$ holds for $\lambda$ small.
\end{itemize}
\end{remarks}
\begin{proof}
For the rest of this section, we drop the subscript $T$, and from now on, for any nonnegative quantity~$Z$, denote by $O_{\eta}^P(Z)$ (respectively $O^P(Z)$) any term that is bounded by a $C_\eta^P\, Z$, for~$Z$ small enough and $C_\eta^P$ depending only on a given state $\eta$ and the quantity $c_P$ 
associated with $s\mapsto P(s)$ (respectively, on $c_P$
). 
To estimate $\sigma_k$, we will estimate $\sysstate[k-1]$ in terms of~$\sysstate[k]\invar$, and apply~Proposition\nobreakspace \ref {prop:perturbentropy}. 

By Proposition\nobreakspace \ref {prop:normTospr}, $s\mapsto\L(s)$ satisfies \textup{H1}--\textup{H4}. Using Propositions\nobreakspace \ref {prop:Qkilled} and\nobreakspace  \ref {prop:general-ad}, there exists~$T_0$ depending only on $c_P$
such that, for $T\geq T_0$,
\begin{align*}	
\L_{k-1}\dotsm\L_{1} &=  A_{k-1} + \L_{k-1}^Q\dotsm \L_1^Q Q_0+ O^P\big((T(1-\ell))\inv\big).
\end{align*}
We apply this to our initial state $\sysstate\init$ to obtain,
\[ \sysstate[k-1] = A_{k-1} \sysstate\init + \L_{k-1}^Q\dotsm \L_1^Q Q_0 \sysstate\init+ O^P\big((T(1-\ell))\inv\big),\]
so that
\[\|\sysstate[k-1]-\sysstate[k]\invar\|_1 \leq \|A_{k-1} \sysstate\init-\sysstate[k-1]\invar\|_1 +\|\sysstate[k-1]\invar - \sysstate[k]\invar\|_1 + \|Q_0 \sysstate\init\|_1\ell^k + O^P\big((T(1-\ell))\inv\big).
\]

But $A_{k-1}\sysstate\init = A_{k-1}P_0^1 \sysstate\init = P_{k-1}^1 A_{k-1} \sysstate\init \in\ran P^1_{k-1} = \cc\,\sysstate[k-1]\invar$. Set $A_{k-1} \sysstate\init = \alpha \sysstate[k-1]\invar$. Then using that each $\L(s)$ is trace preserving,
\[1 = \tr(\sysstate[k-1]) = \tr(\L_{k-1}\dotsm \L_1 \sysstate\init) = \alpha + \tr(\L_{k-1}^Q \dotsm \L_1^Q \sysstate\init) + O^P\big((T(1-\ell))\inv\big),\]
so that $\|A_{k-1} \sysstate\init-\sysstate[k-1]\invar \|_1 = |1 - \alpha| \leq  \|Q_0 \sysstate\init\|_1\ell^k + O^P((T(1-\ell))\inv)$.
In addition,  $\|\sysstate[k-1]\invar - \sysstate[k]\invar\|_1 = O^P(T^{-1})$: indeed, we can write for $s, s_0\in [0,1]$ close enough, 
\begin{align*}
P^1(s)\sysstate[s_0]\invar=\gamma(s)\sysstate[s]\invar, \ \text{with} \ \gamma(s)=\tr P^1(s)\sysstate[s_0]\invar \neq 0,  \ \text{$\gamma$ is $C^2$, and} \ \gamma(s_0)=1. 
\end{align*}
Hence $\sysstate[s]\invar-\sysstate[s_0]\invar=({\gamma(s)\inv} P^1(s)-P^1(s_0))\sysstate[s_0]\invar = O^P(s-s_0)$ by Lemma\nobreakspace \ref {lem:projC2}.

Thus, the computation finally yields
\begin{align} \label{eq:estimatestate}
\|\sysstate[k]\invar - \sysstate[k-1]\|_1 \leq 2\|Q_0 \sysstate\init\|_1\ell^k +O^P\big((T(1-\ell))\inv\big).
\end{align}
Notice that $\|D_k\|_1 \leq 2\|\sysstate[k]\invar - \sysstate[k-1]\|_1$. Denote 
 \begin{align*}
D'_{k}= &\L_k(\sysstate[k-1])\otimes \envstate[k]- \sysstate[k]\invar\otimes \envstate[k]=\L_k(\sysstate[k-1]-\sysstate[k]\invar)\otimes \envstate[k],
\\
 D_{k}'' = &U_{k} (\sysstate_{k-1}\otimes \envstate[k]) U_k^* - \sysstate\invar_{k}\otimes \envstate[k] = U_k \big((\sysstate[k-1]- \sysstate[k]\invar) \otimes \envstate[k] \big) U_k^* + X_k.
 \end{align*}
We have $\tr D_{k}'=\tr D''_{k}=0$. Then $\sigma_k$ can be rewritten as
$S(\eta_k+ D''_{k}| \eta_k+ D'_{k})$, we have $D''_{k}-D'_{k} = X_k -D_k$, and
\begin{equation} \label{eq_boundDk}
\|D_k'\|_1 \leq  \|\sysstate[k-1] - \sysstate[k]\invar\|_1, \qquad \|D_k''\|_1 \leq  \|\sysstate[k-1] - \sysstate[k]\invar\|_1 + \|X_k\|_1.
\end{equation}
By relations \eqref{eq:estimatestate} and \eqref{eq_boundDk}, conditions \eqref{eq_conditionsProp1atomentropy} imply $\max(\|D_k'\|_1,\|D_k''\|_1) \leq \frac14\infspec\eta_k$ and we can apply~Proposition\nobreakspace \ref {prop:perturbentropy} to obtain
\begin{align}\label{fixedlambda}	
\sigma_k &= F_k(D_k -X_k) + O_{\eta_k}^P \big((\|Q_0 \sysstate\init\|_1 \ell^k +\|X_k\|_1+ (T(1-\ell))\inv )^3\big).
\end{align}
\qedhere
\end{proof}

The next result is about $m$-repeated interaction systems. Consistently with \eqref{eq_mrepeatedobs}, for  $1\leq j \leq m$ and $1\leq k \leq T$ we denote by e.g. $\sigma_{k,T}^{(j)}$, $\rho_k^{(j)}$, the quantities at time $(k-1)m+j$, i.e. after the interaction with the $j$-th copy of the $k$-th probe. For notational convenience, we extend this notation $\rho_k^{(j)}$ to $j=0$, so that $\rho_k^{(0)}=\rho_{k-1}^{(m)}$. Note that e.g. $\rho_{k,T}\invar{}^{(j)}$ and $X_{k,T}$ depend on $k$ and not on $j$.
\begin{corollary}[$m$-RIS]\label{cor:matomentropy}
Consider an interaction system satisfying assumption \textup{$m$-ADRIS}, where the CPTP map $\L(s)$ is irreducible for all $s\in[0,1]$ and satisfies \textup{wH4}, and $m$ is associated with $\ell\in {]\ell',1[}$ by Proposition\nobreakspace \ref {prop:normTospr}. Assume that the initial state $\sysstate\init$ satisfies $(P_{0,T}^1 + Q_{0,T})\sysstate\init = \sysstate\init$. Further set
\[D_{k,T}^{(j)}  := \L_{k,T} ( \sysstate[k,T]^{(j-1)} - \sysstate[k,T]\invar) \otimes \envstate[k,T] - U_{k,T}\big( ( \sysstate[k,T]^{(j-1)} - \sysstate[k,T]\invar)\otimes \envstate[k,T] \big) U_{k,T}^*\]
and $X_{k,T}$ as in \eqref{eq_defXD}. With $\delta_{k,T}$, $\eta_{k,T}$, $T_0$, $C^P$, $C_{\eta_{k,T}}^P$ as in Proposition\nobreakspace \ref {prop:1atomentropy} then under conditions \eqref{eq_conditionsProp1atomentropy} we have with $F_{k,T} =F_{\eta_{k,T}}$
\begin{equation} \label{eq:jslow}
 \big|\sigma_{k,T}^{(j)} - F_{k,T} (\Udiff_{k,T}-D_{k,T}^{(j)})\big|\leq C^P_{\eta_k} \big( ( \|\Udiff_{k,T}\|_1 + \|Q_{0,T}\sysstate\init\|_1 \ell^k + (T(1-\ell)\inv )^3 \big).
\end{equation} 
\end{corollary}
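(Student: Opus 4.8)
The plan is to transcribe the proof of Proposition~\ref{prop:1atomentropy} almost verbatim, with $\L$ replaced by $\L^m$ and one harmless contraction step inserted. Under \textup{$m$-ADRIS} the data $U_{k,T}$, $\xi_{k,T}$, $\L_{k,T}$ --- and hence the invariant state $\rho_{k,T}\invar$ and the eigenprojector $P_{k,T}^1$ --- are constant over the $m$ copies of the $k$-th probe, so by \eqref{eq:RISbal} and \eqref{eq_mrepeatedobs} the quantity $\sigma_{k,T}^{(j)}$ is the entropy production of the step $(k-1)m+j$, equal to $S\big(U_{k,T}(\rho_{k,T}^{(j-1)}\otimes\xi_{k,T})U_{k,T}^*\,|\,\L_{k,T}(\rho_{k,T}^{(j-1)})\otimes\xi_{k,T}\big)$, where the input state is
\[
\rho_{k,T}^{(j-1)} = \L_{k,T}^{j-1}\,\rho_{k-1,T}^{(m)},\qquad \rho_{k-1,T}^{(m)} = \L_{k-1,T}^{m}\cdots\L_{1,T}^{m}\,\rho\init .
\]
Thus $\rho_{k-1,T}^{(m)}$ is precisely the $(k-1)$-st state of the adiabatic dynamics generated by the $C^2$ family $s\mapsto\L(s)^m$. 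By Proposition~\ref{prop:normTospr} this family satisfies \textup{H1}--\textup{H4} with contracting bound $\ell$, and since the period $z$ of Proposition~\ref{prop:normTospr} is coprime to the associated $m$, the map $\zeta\mapsto\zeta^m$ permutes $S_z$; hence $\L(s)^m$ has the same peripheral projector, the same $Q(s)$, and the same eigenvalue-$1$ eigenprojector as $\L(s)$, so $P_0$, $Q_0$, $P_{k,T}^1$ and the identification $\ran P_{k,T}^1=\C\,\rho_{k,T}\invar$ are unchanged and the hypothesis $(P_{0,T}^1+Q_{0,T})\rho\init=\rho\init$ is meaningful verbatim.

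First I would estimate $\rho_{k,T}^{(j-1)}-\rho_{k,T}\invar$. Applying Propositions~\ref{prop:Qkilled} and~\ref{prop:general-ad} to $s\mapsto\L(s)^m$ and arguing exactly as for \eqref{eq:estimatestate} --- the eigenvalue-$1$ component $A_{k-1,T}\rho\init$ lies in $\C\,\rho_{k-1,T}\invar$ and trace preservation pins its coefficient down to $O^P(\|Q_{0,T}\rho\init\|_1\ell^k+(T(1-\ell))\inv)$, the strictly contracting tail is $O(\ell^k)$, and the cross term is $O^P((T(1-\ell))\inv)$ --- together with $\|\rho_{k-1,T}\invar-\rho_{k,T}\invar\|_1=O^P(T\inv)$ (via $P^1(s)\rho_{s_0}\invar=\gamma(s)\rho_s\invar$ and Lemma~\ref{lem:projC2}), one obtains $\|\rho_{k-1,T}^{(m)}-\rho_{k,T}\invar\|_1\le 2\|Q_{0,T}\rho\init\|_1\ell^k+O^P((T(1-\ell))\inv)$. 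Since $\L_{k,T}$ is a contraction on $\I_1(\H_\sys)$ with $\L_{k,T}\rho_{k,T}\invar=\rho_{k,T}\invar$,
\[
\|\rho_{k,T}^{(j-1)}-\rho_{k,T}\invar\|_1 = \big\|\L_{k,T}^{j-1}\big(\rho_{k-1,T}^{(m)}-\rho_{k,T}\invar\big)\big\|_1 \le 2\|Q_{0,T}\rho\init\|_1\ell^k+O^P\big((T(1-\ell))\inv\big),
\]
uniformly in $1\le j\le m$, which is the analogue of \eqref{eq:estimatestate} for the $m$-RIS.

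Then I would feed this into Proposition~\ref{prop:perturbentropy}. With $\eta_{k,T}=\rho_{k,T}\invar\otimes\xi_{k,T}$, write $\sigma_{k,T}^{(j)}=S(\eta_{k,T}+D''_{k}\,|\,\eta_{k,T}+D'_{k})$ with the traceless perturbations $D'_{k}=\L_{k,T}\big((\rho_{k,T}^{(j-1)}-\rho_{k,T}\invar)\otimes\xi_{k,T}\big)$ and $D''_{k}=U_{k,T}\big((\rho_{k,T}^{(j-1)}-\rho_{k,T}\invar)\otimes\xi_{k,T}\big)U_{k,T}^*+X_{k,T}$, so that $D''_{k}-D'_{k}=X_{k,T}-D_{k,T}^{(j)}$. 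The bounds $\|D'_{k}\|_1\le\|\rho_{k,T}^{(j-1)}-\rho_{k,T}\invar\|_1$ and $\|D''_{k}\|_1\le\|\rho_{k,T}^{(j-1)}-\rho_{k,T}\invar\|_1+\|X_{k,T}\|_1$, combined with the first step, show that conditions \eqref{eq_conditionsProp1atomentropy} keep both perturbations below $\frac14\infspec\eta_{k,T}$; Proposition~\ref{prop:perturbentropy} and bilinearity of $F_{k,T}=F_{\eta_{k,T}}$ then yield $\sigma_{k,T}^{(j)}=F_{k,T}(X_{k,T}-D_{k,T}^{(j)})+O^P_{\eta_{k,T}}\big((\|X_{k,T}\|_1+\|Q_{0,T}\rho\init\|_1\ell^k+(T(1-\ell))\inv)^3\big)$, which is \eqref{eq:jslow}.

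The one genuinely new point, and the step I expect to need the most care, is the reduction of the $m$-repeated dynamics to the adiabatic dynamics of $s\mapsto\L(s)^m$ followed by a leftover block of $j-1$ constant steps: specifically, checking that passing to $\L(s)^m$ leaves the eigenvalue-$1$ spectral data untouched (this is exactly where coprimality of $m$ and $z$ from Proposition~\ref{prop:normTospr} is used) and that the resulting estimate is uniform in $j$, which it is because $\L_{k,T}$ is a contraction fixing $\rho_{k,T}\invar$. Everything else --- the adiabatic decomposition of $\rho_{k-1,T}^{(m)}$, the trace-preservation argument, the motion of the invariant state, and the relative-entropy perturbation step --- is a line-by-line copy of the single-probe argument, so no new analytic difficulty arises.
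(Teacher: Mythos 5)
Your proposal is correct and follows essentially the same route as the paper: apply the single-probe argument of Proposition~\ref{prop:1atomentropy} to the $C^2$ family $s\mapsto\L^m(s)$ to control $\rho_{k,T}^{(0)}-\rho_{k,T}\invar$ (noting $c^P$ and the peripheral spectral data are unchanged), then use that $\L_{k,T}$ is a trace-norm contraction fixing $\rho_{k,T}\invar$ to propagate the estimate to $\rho_{k,T}^{(j-1)}$ uniformly in $j$, and finally invoke Proposition~\ref{prop:perturbentropy}. Your explicit check that $\zeta\mapsto\zeta^m$ permutes $S_z$ when $\gcd(m,z)=1$ is a nice spelling-out of what Proposition~\ref{prop:normTospr} already guarantees, but it does not amount to a different argument.
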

\begin{proof}
Again we drop the index $T$. We follow the proof of Proposition\nobreakspace \ref {prop:1atomentropy}, applied to $s\mapsto\L^m(s)$. Remark that the quantity $c^P$
associated with this map is the same as those associated with $s\mapsto \L(s)$. Therefore, for the same $T_0$ and $C^P$ we have for $T\geq T_0$
\begin{align*}
\|\sysstate[k]\invar - \sysstate[k]^{(0)}\|_1  \leq 2\|Q_0 \sysstate\init\|_1\ell^k +O^P\big((T(1-\ell))^{-1}\big).
\end{align*}
Since $\sysstate[k]\invar$ is invariant by $\L_k$, applying the contraction $\L_k^j$ for $1\leq j \leq m$ yields
\begin{align} \label{eq:estimatestatem}
\|\sysstate[k]\invar - \sysstate[k]^{(j)}\|_1 &= \| \L^j_k ( \sysstate[k]\invar - \sysstate[k-1]^{(0)})\|_1 
\leq 2\|Q_0 \sysstate\init\|_1\ell^k +O^P\big((T(1-\ell))^{-1}\big).
\end{align}
Since $\sigma_k^{(j)}=S\big(U_k(\rho_k^{(j-1)}\otimes \envstate[k])U_k^* | \L_k(\rho_k^{(j-1)})\otimes \envstate[k] \big)$ we can proceed as in the proof of~Proposition\nobreakspace \ref {prop:1atomentropy} to obtain Equation\nobreakspace \textup {(\ref {eq:jslow})}.
\end{proof}
\begin{remark}
Thus, an $m$-RIS simply has approximately an $m$-fold total increase in entropy as compared to the associated $1$-RIS.	
\end{remark}

We can now summarize the behaviour of $\sigma_T^\text{tot}$ in the $T\to\infty$ limit.
\begin{corollary} \label{cor:sup_inf_special_RIS_Xs}
Consider a repeated interaction system satisfying either

(i) \textup{ADRIS}, such that the reduced dynamics $\L(s)$ is irreducible for all $s\in [0,1]$ and satisfies \textup{H4}, or 

(ii) $m$-\textup{ADRIS}, where $m$ is associated with $\ell\in {]\ell',1[}$ by Proposition\nobreakspace \ref {prop:normTospr}, such that the reduced dynamics $\L(s)$ is irreducible for all $s\in [0,1]$ and satisfies \textup{wH4}.

Denote by $\sysstate[s]\invar$ the unique invariant state for $\L(s)$, and let
$X(s) = U(s)\big(\sysstate[s]\invar \otimes \envstate[s]\big) U(s)^* - \sysstate[s]\invar \otimes \envstate[s]$.

If $X(s) =0$ for all $s\in[0,1]$, and $\sysstate\init=\sysstate\invar_{0,T}$,  then  $\lim_{T\to \infty}\sigma_T^\text{tot}=0$.

If $X(s) =0$ for all $s\in[0,1]$, but $\sysstate\init\neq\sysstate\invar_{0,T}$,  then  $\lim_{T\to \infty}\sigma_T^\text{tot}<\infty$, but it is possibly non-zero.

If $\sup_{s\in[0,1]}\|X(s)\|_1 > 0$, then $\sigma^\text{tot}_T \to \infty$ in the limit $T\to\infty$.
\end{corollary}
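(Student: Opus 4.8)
The plan is to treat the three regimes separately, drawing each time on the state estimate \eqref{eq:estimatestate} (respectively \eqref{eq:estimatestatem} for an $m$-RIS), which already holds for $T\geq T_0$ under the sole standing hypothesis $(P_{0,T}^1+Q_{0,T})\sysstate\init=\sysstate\init$, combined with either the perturbative identity \eqref{eq:1slow}--\eqref{eq:jslow}, the quantum Pinsker bound \eqref{eq:quantum_pinsker}, or the entropy balance \eqref{eq_LandauerRIS}. A preliminary observation makes all constants uniform: the map $s\mapsto\eta(s):=\sysstate[s]\invar\otimes\envstate[s]$ is continuous on the compact $[0,1]$ and each $\eta(s)$ is faithful (the instantaneous invariant state by irreducibility and the Perron--Frobenius theorem for CPTP maps, the Gibbs factor always), so $\eta_*:=\inf_s\infspec\eta(s)>0$, and the thresholds $T_0$ and the constants $C^P$, $C^P_{\eta_{k,T}}$ supplied by Proposition \ref{prop:1atomentropy} and Corollary \ref{cor:matomentropy} may be chosen independently of $k$ and $T$; similarly $\sup_s\beta(s)$ and $\sup_s\|\henv(s)\|_\infty$ are finite. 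I write the argument for a $1$-RIS (case (i) of the statement); the $m$-RIS case (ii) is obtained verbatim after replacing Proposition \ref{prop:1atomentropy}, \eqref{eq:estimatestate} by Corollary \ref{cor:matomentropy}, \eqref{eq:estimatestatem} and picking up a harmless overall factor $m$. Note also that $\sysstate\invar_{0,T}$, hence $\sysstate\init$ and $Q_{0,T}\sysstate\init$, do not depend on $T$ since $\L_{0,T}=\L(0)$.

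\textbf{The two regimes with $X\equiv0$.} Here $\Udiff_{k,T}=X(k/T)=0$ for all $k\leq T$. If moreover $\sysstate\init=\sysstate\invar_{0,T}$, then $Q_{0,T}\sysstate\init=0$, so conditions \eqref{eq_conditionsProp1atomentropy} hold for all $k$ once $T$ is large; then \eqref{eq:1slow}, the bilinearity of $F_{k,T}$, the bound \eqref{eq_boundFeta}, and $\|D_{k,T}\|\leq\|D_{k,T}\|_1\leq 2\|\sysstate[k,T]\invar-\sysstate[k-1,T]\|_1=O((T(1-\ell))^{-1})$ (the last via \eqref{eq:estimatestate} with $Q_{0,T}\sysstate\init=0$) give $0\leq\sigma_{k,T}=O((T(1-\ell))^{-2})$ uniformly in $k$, hence $\sigma_T^{\text{tot}}=\sum_{k=1}^{T}\sigma_{k,T}=O(T^{-1})\to0$. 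If instead $\sysstate\init\neq\sysstate\invar_{0,T}$, I first obtain boundedness from \eqref{eq_LandauerRIS}: $\sigma_T^{\text{tot}}=\sum_k\beta_k\Delta Q_k-(S(\sysstate[0])-S(\sysstate[T]))$, and since $X\equiv0$ gives $\tr_\sys(U_{k,T}(\sysstate[k,T]\invar\otimes\envstate[k,T])U_{k,T}^*)=\envstate[k,T]$, one has $\beta_k|\Delta Q_k|\leq c_0\|\sysstate[k-1,T]-\sysstate[k,T]\invar\|_1$ with $c_0$ depending only on the suprema of $\beta(\cdot)$ and $\|\henv(\cdot)\|_\infty$; by \eqref{eq:estimatestate}, $\sum_k\beta_k|\Delta Q_k|\leq c_0\sum_k(2\|Q_{0,T}\sysstate\init\|_1\ell^k+C_1(T(1-\ell))^{-1})\leq c_0(1-\ell)^{-1}(2\|Q_{0,T}\sysstate\init\|_1+C_1)$, uniformly in $T$, and together with $0\leq\sigma_T^{\text{tot}}$ and $|S(\sysstate[0])-S(\sysstate[T])|\leq 2\log(\dim\H_\sys)$ this bounds $\sigma_T^{\text{tot}}$. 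To upgrade this to a genuine limit, pick $K$ beyond the threshold past which \eqref{eq_conditionsProp1atomentropy} holds; then \eqref{eq:1slow} yields the quadratic bound $\sigma_{k,T}\leq C(\|Q_{0,T}\sysstate\init\|_1\ell^k+(T(1-\ell))^{-1})^2$ for $k>K$, so $\limsup_T\sum_{k>K}\sigma_{k,T}\leq 2C\|Q_{0,T}\sysstate\init\|_1^2(1-\ell^2)^{-1}\ell^{2K}$, which vanishes as $K\to\infty$, while for each fixed $k$ the data $\L_{k,T},U_{k,T},\envstate[k,T]$ converge to $\L(0),U(0),\envstate[0]$, hence $\sysstate[k-1,T]\to\L(0)^{k-1}\sysstate\init$, and by continuity of the relative entropy $\sigma_{k,T}\to\sigma_{k,\infty}:=S\big(U(0)(\L(0)^{k-1}\sysstate\init\otimes\envstate[0])U(0)^*\,|\,\L(0)^{k}\sysstate\init\otimes\envstate[0]\big)$. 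Since $\sigma_{k,T}\geq0$, squeezing over $K$ gives $\lim_T\sigma_T^{\text{tot}}=\sum_{k\geq1}\sigma_{k,\infty}<\infty$, which is positive whenever the first interaction fails to leave $\sysstate\init\otimes\envstate[0]$ in product form — the generic situation for $\sysstate\init\neq\sysstate\invar_0$.

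\textbf{The regime $\sup_s\|X(s)\|_1>0$.} Now \eqref{eq:1slow} is unavailable, its hypothesis $\|\Udiff_{k,T}\|\leq\delta_{k,T}$ failing, so I use only the unconditional bound \eqref{eq:quantum_pinsker}. Writing $\sigma_{k,T}=S(\eta_{k,T}+D_{k,T}''\,|\,\eta_{k,T}+D_{k,T}')$ with $D_{k,T}''-D_{k,T}'=\Udiff_{k,T}-D_{k,T}$ as in the proof of Proposition \ref{prop:1atomentropy}, \eqref{eq:quantum_pinsker} gives $\sigma_{k,T}\geq\tfrac12(\|\Udiff_{k,T}\|_1-\|D_{k,T}\|_1)_+^2$, while \eqref{eq:estimatestate} (still valid, as it needs no smallness assumption) gives $\|D_{k,T}\|_1\leq 4\|Q_{0,T}\sysstate\init\|_1\ell^k+O((T(1-\ell))^{-1})$. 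Set $\bar X:=\sup_s\|X(s)\|_1>0$; since $s\mapsto X(s)$ is continuous in trace norm (being built from the continuous $U(s)$, $\envstate[s]$ and the $C^2$ rank-one projector $P^1(s)$), the set $W:=\{s:\|X(s)\|_1>\bar X/2\}$ is open and non-empty, hence contains an interval of some positive length $2\delta_0$. Choose $K_0$ with $4\|Q_{0,T}\sysstate\init\|_1\ell^{K_0}\leq\bar X/8$, then $T$ large enough that the $O((T(1-\ell))^{-1})$ above is $\leq\bar X/8$; for every $k\geq K_0$ with $k/T\in W$ we then have $\|\Udiff_{k,T}\|_1-\|D_{k,T}\|_1\geq\bar X/4$, hence $\sigma_{k,T}\geq\bar X^2/32$. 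Since there are at least $\delta_0 T$ such indices once $T$ is large, $\sigma_T^{\text{tot}}\geq(\delta_0\bar X^2/32)\,T\to\infty$, with the announced linear rate.

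\textbf{Main obstacle.} No single inequality is delicate; the work lies in the uniformity bookkeeping — ensuring that the thresholds $T_0$ and the constants from Proposition \ref{prop:1atomentropy}/Corollary \ref{cor:matomentropy} and from \eqref{eq:estimatestate}--\eqref{eq:estimatestatem} are independent of $k$ and $T$ — which rests on the $C^2$-regularity of $s\mapsto\L(s)$, compactness of $[0,1]$, and the faithfulness and continuity of the instantaneous invariant states granted by Perron--Frobenius. The one place where the machinery built earlier does not apply directly, and a genuinely new idea is needed, is the third regime: there the perturbative formula \eqref{eq:1slow} is invalid, and divergence must instead be extracted from the quantum Pinsker inequality, using only that the reduced state tracks the instantaneous invariant state up to a summably controlled error.
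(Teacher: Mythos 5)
Your proof is correct, and the first and third regimes follow essentially the same strategy as the paper: regime one uses the perturbative expansion \eqref{eq:1slow} together with the quadratic bound \eqref{eq_boundFeta} and the state estimate \eqref{eq:estimatestate} to get $\sigma_{k,T}=O((T(1-\ell))^{-2})$ uniformly, and regime three uses the quantum Pinsker inequality \eqref{eq:quantum_pinsker} together with \eqref{eq:estimatestate} and a counting argument over an interval where $\|X(\cdot)\|_1$ is bounded below. The uniformity bookkeeping you stress (faithfulness and continuity of $\eta(s)$ on the compact $[0,1]$ via Perron--Frobenius, hence $\delta_{k,T}\geq\delta>0$ and a uniform constant $C$ in \eqref{eq:1slow}) is exactly the preliminary reduction the paper performs.

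For the middle regime your route genuinely differs. The paper only applies \eqref{eq:1slow} for $k$ beyond a threshold, sums the resulting $O(\|Q_0\rho\init\|_1\ell^{2k}+T^{-2})$ bound, and concludes $\sigma_T^{\text{tot}}$ is bounded (reading the statement's ``$\lim<\infty$'' loosely). You add two things: an independent boundedness argument via the entropy balance \eqref{eq_LandauerRIS} exploiting the cancellation $\tr_\sys(U_{k,T}(\rho^{\mathrm{inv}}_{k,T}\otimes\xi_{k,T})U_{k,T}^*)=\xi_{k,T}$ implied by $X\equiv 0$ (correct, though it is superseded by the next step), and then a squeezing argument combining the uniform tail estimate $\limsup_T\sum_{k>K}\sigma_{k,T}\leq 2C\|Q_0\rho\init\|_1^2\ell^{2K}/(1-\ell^2)$ with term-by-term convergence $\sigma_{k,T}\to\sigma_{k,\infty}$ for each fixed $k$ (which holds since for fixed $k$, $k/T\to 0$, so $\L_{k,T},U_{k,T},\xi_{k,T}$ converge to their $s=0$ values and relative entropy of faithful states is continuous). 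This actually establishes that the limit $\lim_T\sigma_T^{\text{tot}}=\sum_{k\geq 1}\sigma_{k,\infty}$ exists and identifies it, which is stronger than what the paper's proof explicitly gives, at the cost of invoking continuity of the relative entropy and geometric decay of $\L(0)^{k-1}\rho\init-\rho^{\mathrm{inv}}_0$ (both legitimate under the standing hypotheses). One minor remark: the final sentence's claim that $\sum_k\sigma_{k,\infty}$ is ``generically positive'' is not part of the statement to be proved (the paper only asserts ``possibly non-zero''), and you do not prove it; but since you flag it as a heuristic this does no harm.
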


\begin{proof}
First remark for any $k$ and $T$, 
	\[\delta_{k,T}\geq \frac1{16}\, \inf_s \infspec\sysstate[s]\invar \times \infspec \frac{\mathrm{e}^{-\beta(s)H_\env(s)}}{\tr(\mathrm{e}^{-\beta(s)H_\env(s))}} =: \delta >0.
\]
In addition, the constant $C_{\eta_k,T}^P$ in Proposition\nobreakspace \ref {prop:1atomentropy} is monotonically decreasing in $\infspec\eta_{k,T}$. By Proposition\nobreakspace \ref {prop:1atomentropy} (respectively Corollary\nobreakspace \ref {cor:matomentropy}), the inequality \eqref{eq:1slow} (respectively  \eqref{eq:jslow}) holds for any large $T$, if $\max(\|Q_0\rho\init\|_1 \ell^k,\|X_{k,T}\|)<\delta$, with $C_{\eta_k,T}^P$ replaced by a uniform $C$. If $X(s)=0$ for all $s$
 then we have
 \[|\sigma_{k,T}-F_{k,T}(D_{k,T})|\leq C\big(\|Q_0\rho\init\|_1\ell^k + (T(1-\ell))\inv)^3\big),\]
for all $k$ if $\|Q_0\rho\init\|=0$, and for $k$ large enough otherwise. By \eqref{eq_boundFeta} and \eqref{eq:estimatestate} we have for $T$ large enough that
\[F_{k,T}(D_{k,T})=O\big(\|Q_0\rho\init\|_1 \ell^{2k}+(T(1-\ell))^{-2}\big).\]
This proves our first two assumptions in the case (i) of an ADRIS. The same proof with e.g. $\sigma_{k,T}^{(j)}$ replacing $\sigma_{k,T}$  gives the case (ii) of an $m$-ADRIS. Now, if $\sup_{s\in[0,1]}\|X(s)\|_1 > 0$ then there exists an interval $I$ of diameter $\delta>0$, and some $x>0$, such that $\|X(s)\|_1>x$ for $s\in I$. By relation \eqref{eq:quantum_pinsker} one has $\sigma_{k,T}\geq \frac12\|D_{k,T}-X_{k,T}\|_1^2$, and by \eqref{eq:estimatestate}, for $T$ large enough one has $\|D_{k,T}-X_{k,T}\|_1>\frac12x$ if $k/T\in I$. Therefore, \[\sigma_T^\text{tot} := \sum_{k=1}^T \sigma_{k,T} \geq \sum_{k\,|\, k/T\in I} \frac18 x^2\underset{T\to\infty}\rightarrow +\infty.\]
Again the same proof holds in the case (ii) of an $m$-ADRIS. \qedhere
\end{proof}

Since the condition $X\equiv0$ plays an important role in the behaviour of the total entropy $\lim_{T\to\infty}\sigma_T^{tot}$, the following result is relevant. Its proof is obtained by straightforward manipulations.
\begin{lemma} \label{lemma_caracX}
	Under the assumptions of Proposition\nobreakspace \ref {prop:1atomentropy} or Corollary\nobreakspace \ref {cor:matomentropy}, and the notation of Corollary\nobreakspace \ref {cor:sup_inf_special_RIS_Xs}, the following are equivalent:
	\begin{enumerate}
	 	\item $X(s)=0$,
	 	\item there exists a self-adjoint operator $K_\sys(s)$ on $\H_\sys$ such that $[K_\sys(s)+h_{\env}(s),U(s)]=0$.
 	\end{enumerate}
	 Moreover, if either condition holds we have the detailed balance relation that for all $\rho\in \B(\H_\sys)$, 
	 	\begin{equation} \label{eq_qdb}
	 		(\sysstate[s]\invar{})^{+1/2} \L(s)^*\big((\sysstate[s]\invar{})^{-1/2}\rho\,(\sysstate[s]\invar{})^{-1/2}\big)\,(\sysstate[s]\invar{})^{+1/2} = \tr_\env\big( U(s)^*(\rho\otimes \envstate[s])\,U(s)\big).
	 	\end{equation}
\end{lemma}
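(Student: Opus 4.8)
The plan is to fix $s$ and, abbreviating $U=U(s)$, $\xi=\envstate[s]$, $h=h_{\env}(s)$, $\beta=\beta(s)$, $\sysstate\invar=\sysstate[s]\invar$ (and reading the commutator in (2) as $[K_\sys(s)\otimes\id_\env+\id_\sys\otimes h,\,U]=0$), to prove the equivalence $(1)\iff(2)$ first and then deduce \eqref{eq_qdb} from it. The only external input needed is that, since $\L(s)$ is irreducible, the Perron--Frobenius theorem recalled in Section~\ref{sec:Application} guarantees that $\L(s)$ has a \emph{unique} invariant state and that it is faithful; everything else is bookkeeping with tensor products and functional calculus. The one slightly non-trivial elementary observation I will use repeatedly is that $\xi=\mathrm e^{-\beta h}/\tr(\mathrm e^{-\beta h})$ and $\sysstate\invar$ are positive definite, so that the self-adjoint operator $G:=-\log\sysstate\invar\otimes\id_\env+\id_\sys\otimes\beta h$ satisfies $\sysstate\invar\otimes\xi=\mathrm e^{-G}/\tr(\mathrm e^{-G})$, and that for self-adjoint $G$ one has $[U,\mathrm e^{-G}]=0\iff[U,G]=0$ (since $x\mapsto\mathrm e^{-x}$ is injective on $\R$, so $G$ and $\mathrm e^{-G}$ share the same spectral projections); likewise $[U,\sysstate\invar\otimes\xi]=0$ forces $[U,(\sysstate\invar)^{1/2}\otimes\xi^{1/2}]=0$ because $(\sysstate\invar)^{1/2}\otimes\xi^{1/2}=(\sysstate\invar\otimes\xi)^{1/2}$.

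For $(1)\Rightarrow(2)$: the condition $X(s)=0$ reads $U(\sysstate\invar\otimes\xi)U^*=\sysstate\invar\otimes\xi$, equivalently $[U,\mathrm e^{-G}]=0$, hence $[U,G]=0$, hence $[U,\beta^{-1}G]=0$; since $\beta^{-1}G=-\beta^{-1}\log\sysstate\invar\otimes\id_\env+\id_\sys\otimes h$, this is (2) with $K_\sys(s):=-\beta^{-1}\log\sysstate\invar$, which is self-adjoint. For $(2)\Rightarrow(1)$: given such a $K_\sys(s)$, set $\nu:=\mathrm e^{-\beta K_\sys(s)}/\tr(\mathrm e^{-\beta K_\sys(s)})$, a state on $\H_\sys$; then $\nu\otimes\xi=\mathrm e^{-\beta(K_\sys(s)\otimes\id_\env+\id_\sys\otimes h)}/\tr(\cdots)$ commutes with $U$ by hypothesis, so $U(\nu\otimes\xi)U^*=\nu\otimes\xi$, and applying $\tr_\env$ gives $\L(s)(\nu)=\nu$. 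Uniqueness of the invariant state then forces $\nu=\sysstate\invar$, whence $U(\sysstate\invar\otimes\xi)U^*=\sysstate\invar\otimes\xi$, i.e.\ $X(s)=0$. I expect this last step — the identification $\nu=\sysstate\invar$, which is exactly where irreducibility is used — to be the only genuine (and very mild) subtlety; without irreducibility one would only get that $X(s)=0$ is equivalent to the existence of a self-adjoint $K_\sys(s)$ commuting with $U$ in the stated sense \emph{and} satisfying $\mathrm e^{-\beta K_\sys(s)}/\tr(\cdots)=\sysstate\invar$.

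For the detailed balance relation \eqref{eq_qdb}, I would first record that, from $\L(s)(\eta)=\tr_\env(U(\eta\otimes\xi)U^*)$ and cyclicity of the trace, the adjoint with respect to the pairing $(\rho,A)\mapsto\tr(\rho A)$ is $\L(s)^*(A)=\tr_\env\big((\id_\sys\otimes\xi)\,U^*(A\otimes\id_\env)\,U\big)$. Then, using that (1) holds and the factorisation
\[\rho\otimes\xi=\big((\sysstate\invar)^{1/2}\otimes\xi^{1/2}\big)\big((\sysstate\invar)^{-1/2}\rho(\sysstate\invar)^{-1/2}\otimes\id_\env\big)\big((\sysstate\invar)^{1/2}\otimes\xi^{1/2}\big),\]
one commutes the two outer factors past $U^*$ on the left and $U$ on the right (legitimate by $[U,(\sysstate\invar)^{1/2}\otimes\xi^{1/2}]=0$), writes $(\sysstate\invar)^{1/2}\otimes\xi^{1/2}=((\sysstate\invar)^{1/2}\otimes\id_\env)(\id_\sys\otimes\xi^{1/2})$, pulls the $(\sysstate\invar)^{1/2}\otimes\id_\env$ factors outside $\tr_\env$, absorbs the two remaining $\id_\sys\otimes\xi^{1/2}$ into a single $\id_\sys\otimes\xi$ inside the partial trace, and recognises
\[\tr_\env\big(U^*(\rho\otimes\xi)U\big)=(\sysstate\invar)^{1/2}\,\L(s)^*\!\big((\sysstate\invar)^{-1/2}\rho(\sysstate\invar)^{-1/2}\big)\,(\sysstate\invar)^{1/2},\]
which is \eqref{eq_qdb}. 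The identities invoked here — $\tr_\env\big((B\otimes\id_\env)M(B'\otimes\id_\env)\big)=B\,\tr_\env(M)\,B'$ and $\tr_\env\big((\id_\sys\otimes\xi^{1/2})N(\id_\sys\otimes\xi^{1/2})\big)=\tr_\env\big((\id_\sys\otimes\xi)N\big)$ — are routine; the only points requiring care are the exact form of $\L(s)^*$ and keeping the $\sys$- and $\env$-factors in order through the partial trace. Beyond the identification $\nu=\sysstate\invar$ flagged above, I do not anticipate any real obstacle.
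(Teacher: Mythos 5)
The paper only says "Its proof is obtained by straightforward manipulations" and gives no details, and your argument is exactly those manipulations, cleanly organized. Both directions of the equivalence, the use of Perron--Frobenius uniqueness and faithfulness of the invariant state, and the tensor-factorisation computation for \eqref{eq_qdb} are correct. In particular, the key identities you invoke --- that $[U,\mathrm{e}^{-G}]=0$ for self-adjoint $G$ is equivalent to $[U,G]=0$ and to commutation with $(\mathrm{e}^{-G})^{1/2}$, that $\L(s)^*(A)=\tr_\env\bigl((\id\otimes\xi)\,U^*(A\otimes\id)\,U\bigr)$, and that $\tr_\env\bigl((\id\otimes\xi^{1/2})N(\id\otimes\xi^{1/2})\bigr)=\tr_\env\bigl((\id\otimes\xi)N\bigr)$ --- all check out.

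One small point worth flagging: your $(1)\Rightarrow(2)$ step produces $K_\sys(s):=-\beta(s)^{-1}\log\sysstate[s]\invar$, which requires $\beta(s)\neq 0$. When $\beta(s)=0$ the probe state is maximally mixed, $\L(s)$ is unital, $\sysstate[s]\invar=\id/\dim\H_\sys$ and $(1)$ holds trivially, while $(2)$ can genuinely fail, so the equivalence as stated needs $\beta(s)\neq 0$. This is implicit in the lemma (the paper works with thermal probes at genuine inverse temperatures), but since you use $\beta^{-1}$ without comment, it would be cleaner to state the hypothesis explicitly.
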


\begin{remark}
	The relation \eqref{eq_qdb} requires some comments regarding the nature of both sides of the identity. The left-hand side $\widetilde \L(s)$ is known under various names, as the KMS-dual or standard time-reversal (see \cite{Cipriani,DerFru,FagUma,GolLin}, see \cite{Crooks} for a physical motivation). The right-hand side is the reduced evolution associated with $\envstate[s]$ and $U(s)^*$. Note that, considering $\theta$ to be complex conjugation on $\H_\sys\otimes \H_{\env}$ in a basis that is jointly diagonal for $K_\sys$, $h_\env(s)$, then $\theta=\theta_\sys\otimes \theta_\env$ and with $\Theta_\sys:X\mapsto \theta_\sys X\theta_\sys$ then the right-hand side of \eqref{eq_qdb} equals $\Theta_\sys\circ\L\circ \Theta_\sys$, so that the detailed balance condition \eqref{eq_qdb} takes the form $\widetilde \L_{k,T}=\Theta_\sys\circ\L\circ \Theta_\sys$, which is the SQDB-$\theta$ condition of \cite{FagUma}, and is simply called time-reversal invariance in \cite{JPW}. The condition $X(s)=0$ for all $s$ can then be related to non-adiabatic entropy production of \cite{HoroParr}.
\end{remark}

We apply the results above to the example of two-level systems coupled by their dipole, in the rotating wave approximation.

\subsection{Example: 2-level system with rotating wave approximation}\label{2lrw}
\label{sec:rotwave}

As a first example, we consider the example discussed at the end of Section\nobreakspace \ref {ssec:RISmathdesc} in the rotating wave approximation. Let $[0,1] \ni s \mapsto \beta(s) \in [0,\infty)$ be a $C^2$ function. The reduced dynamics on~$\sys$ then takes the form
\begin{align*}
	\L(s)\,\big(\eta\big) &= \tr_{\env} \big( e^{-\i\tau (\hsys + \henv + \lambda v_\text{RW})} (\eta \otimes \envstate_{\beta(s)}) e^{\i\tau (\hsys + \henv + \lambda v_\text{RW})} \big),
\end{align*}
where $\envstate_{\beta}$ denotes the Gibbs state of $\henv$ at inverse temperature $\beta$.

The spectral analysis of $\L(s)$ via its Kraus decomposition is given in \cite{BJM14}. It yields that the spectrum of $\L(s)$ is independent of $s \in [0,1]$, with $1$ a simple eigenvalue with eigenvector $\sysstate_{\beta^*(s) }$, the Gibbs state of $\hsys$ at inverse temperature $\beta^*(s) := \frac{E_0}E \beta(s)$. In particular, this eigenvector is a positive definite map and by Theorem 6.7 in \cite{wolftour} this implies that $\L(s)$ is irreducible. We have in addition that $\spr Q(s) \L(s)  = (1 - \tfrac{\lambda^2}{\Delta^2 + \lambda^2}\sin^2 \tfrac{\nu\tau}{2})^{1/2}$, where $\Delta := E - E_0$ and $\nu := \sqrt{\Delta^2 + \lambda^2}$, whenever $\nu\tau \notin 2\i \pi\Z$.

As can be seen from the Kraus decomposition,
\begin{align*}
	\L(s)\big(\eta\big) &= \big(P(s) + \sum_{i=1}^3 \theta^j Q^j(s)\big)\,\eta \\
		&= \tr(\eta) \sysstate_{\beta^*(s), E} + \theta^1 \tr(a^* \eta) a  \\
		& \qquad + \theta^2 \tr(a \eta) a^* + \theta^3 \tr((e^{-\beta(s) E_0}(\one-N) - N ) \eta) \frac{(\one-N) - N}{1+e^{-\beta E_0}},
\end{align*}
where $|\theta^j| \leq (1 - \tfrac{\lambda^2}{\Delta^2 + \lambda^2}\sin^2 \tfrac{\nu\tau}{2})^{1/2}$ (see \cite{BJM14}).

Also from this decomposition, $\L(s)$ does not in general satisfy \textup{H4}, but only \textup{wH4}. The $m$ for which Proposition\nobreakspace \ref {prop:normTospr} holds can be found explicitly by estimating $\|(\L(s)Q(s))^m\|$ uniformly in $s$ with the help of the above decomposition. The rescaling and discretization for the $m$-repeated RIS of this example is represented in Figure\nobreakspace \ref {fig:sampling}.

\begin{figure}[h]
	\begin{center}
\begin{tikzpicture}[scale=0.8]
\usetikzlibrary{decorations.pathreplacing}

\draw[->] (0,0) -- (0,4);
\draw[->] (0,0) -- (15,0);

\draw[dashed,lightgray] (6.2,-1.5) -- (5,-5.5);
\draw[dashed,lightgray] (14.5,-5.5) -- (11.2,-1.5);

\draw[gray] (0,1.5) .. controls (0.5,0.5) and (1,1) .. (1.5,1.5) .. controls (2,2) and (2.5,2.5) .. (3.5,2.5);
\draw[dashed,gray] (3.5,4) -- (3.5,0);
\draw[gray] (0,1.5) .. controls (1,0.5) and (2,1) .. (3,1.5) .. controls (4,2) and (5,2.5) .. (7,2.5);

\draw (0,1.5) .. controls (2,0.5) and (4,1) .. (6,1.5) .. controls (8,2) and (10,2.5) .. (14,2.5);
\draw[dashed] (14,4) -- (14,0);
\node at (3.5,-0.5) {1};

\node at (14,-0.5) {$Tm\tau$};
\node at (16,0) {\footnotesize time};

\draw  (7.2,2.2) rectangle (8.5,1.7);
\draw  (5,-1.5) rectangle (14.5,-5.5);

\draw (5,-5.2) -- (5.8,-5.2) -- (5.8,-4.5) -- (6.8,-4.5) -- (6.8,-4.5) -- (7.3,-4.5) -- (7.3,-4.5) -- (7.8,-4.5) -- (7.8,-3.9) -- (8.3,-3.9) -- (8.3,-3.9) -- (8.8,-3.9) -- (8.8,-3.9) -- (9.3,-3.9) -- (9.3,-3.9) -- (9.8,-3.9) -- (9.8,-3.2) -- (10.3,-3.2) -- (10.3,-3.2) -- (10.8,-3.2) -- (10.8,-3.2) -- (11.3,-3.2) -- (11.3,-3.2) -- (11.8,-3.2) -- (11.8,-2.6) -- (12.3,-2.6) -- (12.3,-2.6) -- (12.8,-2.6) -- (12.8,-2.6) -- (13.3,-2.6) -- (13.3,-2.6) -- (13.8,-2.6) -- (13.8,-2.1) -- (14.3,-2.1) -- (14.3,-2.1) -- (14.5,-2.1);

\draw[decorate, decoration={brace, amplitude=2pt}] (5.8,-3.7) -- (7.8,-3.7);
\node at (6.8,-3.4) {$m\tau$};
\node at (3,-4.5) {$\beta((k+1)/T)$};
\draw (4.9, -4.5) -- (5.1,-4.5);
\draw (7.2,1.7) -- (6.2,-1.5);

\draw (7.2,2.2) -- (5,-1.5);
\draw (8.5,1.7) -- (11.2,-1.5);
\draw (8.5,2.2) -- (14.5,-1.5);
\node at (0,4.5) {\footnotesize inv. temp.};
\node at (1.5,2.2) {$\color{gray}\beta(s)$};

\node at (5.2,-6) {$mk\tau$};
\node at (6.1,-6.5) {$mk\tau+\tau$};
\draw (5.8,-5.5) -- (5.8,-5.8) -- (5.7,-5.9);

\draw (6.3,-5.5) -- (6.3,-6.1) -- (6.1,-6.3);
\draw (7.8,-5.5) -- (7.8,-5.8) -- (7.9,-5.9);
\node at (9,-6) {$mk\tau+m\tau$};

\end{tikzpicture}
	\caption{For a curve $\beta(s)$ on $[0,1]$ 
	the $m$-repeated ADRIS is such that, between time $mk\tau$ and $mk\tau + m\tau$, the system $\sys$ interacts successively with $m$ atoms at inverse temperature $\beta{((k+1)/T)}$.}
	\label{fig:sampling}
	\end{center}
\end{figure}
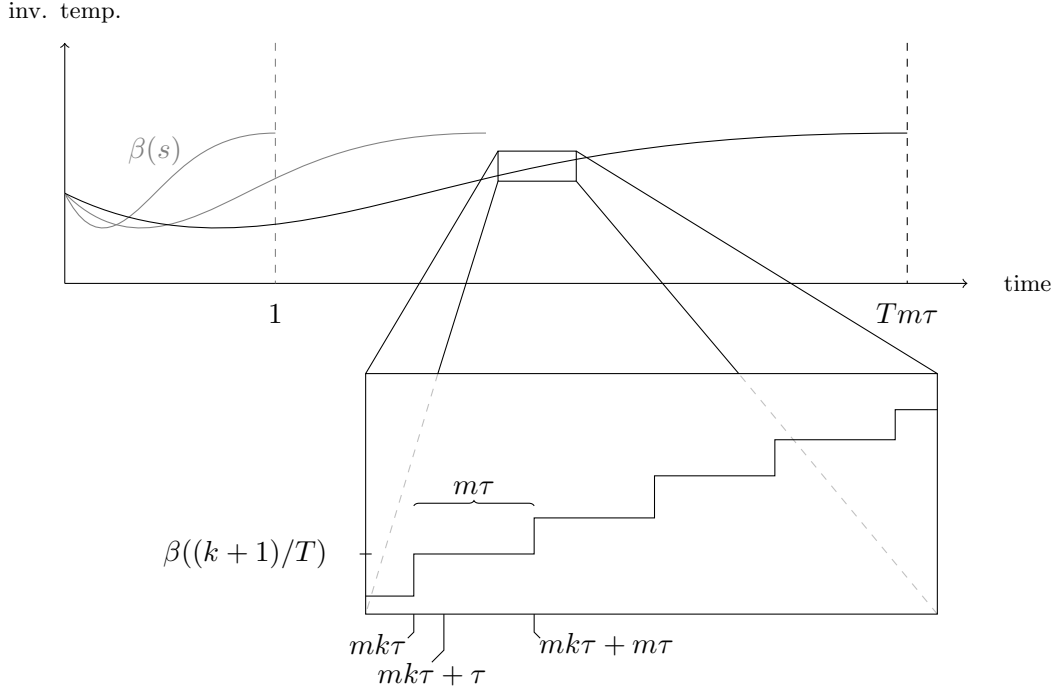

A key feature of this particular system is that the invariant state of the system at time $s$, i.e. $\sysstate_{\beta^*(s)}$, satisfies
\begin{align}
	\L(s)(\sysstate_{{\beta^*(s)}})\otimes \envstate_{\beta(s)} = \sysstate_{{\beta^*(s)}}\otimes \envstate_{\beta(s)} = U(s) \sysstate_{{\beta^*(s)}}\otimes \envstate_{\beta(s)} U(s)^*, \label{eq:rotwaveinvariance}
\end{align}
or in the above notation $\Udiff(s)= 0$ for all $s$. Hence, Corollary\nobreakspace \ref {cor:sup_inf_special_RIS_Xs} yields that, starting in the initial invariant state $\sysstate_0\invar = \sysstate_{\beta_0^*}$, we have $\sigma_T^\text{tot} := \sum_{k=1}^T \sum_{j=1}^m \sigma_{k,T}^{(j)} \underset{T\to\infty}\rightarrow 0$. That is, the total entropy production vanishes in the adiabatic limit.
Starting in another state we have $\lim_{T\to\infty}\sigma_T^\text{tot}<+\infty$.

\section{Small coupling limit}\label{sec:scl}
Corollary\nobreakspace \ref {cor:sup_inf_special_RIS_Xs} characterizes the entropy production  $\sigma_{k,T}$ of repeated interaction systems as a dichotomy of vanishing and divergent cases, when $T$ is independent of the other parameters. In the divergent case, we are naturally interested in obtaining an asymptotic rate of entropy production. The divergence is obtained by the lower bound \eqref{eq:quantum_pinsker}; in particular, because we do not necessarily have that $X_{k,T} < \delta_{k,T}$ to apply our perturbative result Proposition\nobreakspace \ref {prop:perturbentropy}, we do not have a leading order term for the entropy production.

Now we will consider the small coupling behaviour $|\lambda|\ll 1$ of Proposition\nobreakspace \ref {prop:1atomentropy}; this will allow computation of limits of the form $T\to\infty$ with $\lambda = \lambda(T)$. Since $X_{k,T} = O(\lambda)$ as will be shown in Lemma\nobreakspace \ref {lem:expandUwU}, by taking $\lambda \to 0$ as $T\to \infty$ in the right manner, we may achieve the assumption $X_{k,T} < \delta_{k,T}$. However, as the coupling parameter $\lambda \to 0$, the quantity $\ell(\lambda) = \sup_s \| \L(s) Q(s)\| \to 1$ generically, violating \textup{H4}. This has the interpretation that as the system and chain of probes decouple, more states become invariant; without well-separated eigenspaces of the reduced dynamics, we may not apply our adiabatic result Theorem\nobreakspace \ref {theo_adiabatic} to track the state of the system. We may address this by changing the physical setup to an $m(\lambda)$-RIS with $m$ increasing as $\lambda\to 0$. This has the effect of progressively slowing the joint evolution of system and chain of probes so that the invariant subspace of the reduced dynamics stays well-separated as $\lambda\to 0$.

Let us outline this procedure in more detail. First, we show that given an asymptotic expansion of the invariant state $\sysstate[k,T]\invar = (\sysstate[k,T]\invar)^{(0)} + \lambda (\sysstate[k,T]\invar)^{(1)} + O(\lambda^2)$, we may expand $X_{k,T}$ and compute the first order term, as shown in Lemma\nobreakspace \ref {lem:expandUwU}. Since the zeroth order term vanishes, in the regime $|\lambda|\ll 1$ we may achieve the assumption $\|X_{k,T}\|_1 < \delta_{k,T}$ in Proposition\nobreakspace \ref {prop:1atomentropy}. Next, we analyze the spectral properties of the reduced dynamics in Lemma\nobreakspace \ref {unifpert}, leading to Lemma\nobreakspace \ref {expp1} which shows that such an expansion of the invariant state exists. Further analysis in Lemma\nobreakspace \ref {regproj1} allows control over the $\lambda$-dependence of the quantity $c_P$ defined by Equation\nobreakspace \textup {(\ref {eq_defCPCPp})}. This allows us to use the adiabatic result Theorem\nobreakspace \ref {theo_adiabatic} to control the state of the system up to error $((1-\ell)T)\inv$, as formalized in Proposition\nobreakspace \ref {updateadiab}.
 Lemma\nobreakspace \ref {lemessy} shows that we choose a function $\lambda \mapsto m(\lambda)$ so that $\ell(\lambda) = \sup_s \|\L^{m(\lambda)}(s) Q(s)\|$ is bounded away from $1$. Together, these results yield Proposition\nobreakspace \ref {prop:1atomentropy_lambda}, the analog of Proposition\nobreakspace \ref {prop:1atomentropy}.
 
 From now on, we consider a fixed repeated interaction system satisfying assumption ADRIS or $m$-ADRIS, so that the functions $h_\env$, $\beta$ and $v$ are fixed. Therefore, we denote by $O(Z)$ any quantity that is bounded by $C Z$, for $C$ depending only on $h_\env$, $\beta$ and $v$.
 We first need some technical lemmas regarding the small $\lambda$ behaviour of the properties of the dynamics studied so far; we start with $\Udiff_{k,T}$. All proofs for this section will be given in Appendix\nobreakspace \ref {App:scl}.
\begin{lemma} \label{lem:expandUwU}
Assume $\sysstate[k,T]\invar$ admits the asymptotic expansion $\sysstate[k,T]\invar = (\sysstate[k,T]\invar)^{(0)} + \lambda (\sysstate[k,T]\invar)^{(1)} + O(\lambda^2)$ (uniformly in $k$). Then, the quantity $X_{k,T}$ defined in Proposition\nobreakspace \ref {prop:1atomentropy} satisfies
\begin{equation*}
\Udiff_{k,T} =\lambda M_{k,T} + O(\lambda^2)
\end{equation*}
with
\begin{align*}
M_{k,T}&:=U^{(0)} \big((\sysstate[k,T]\invar)^{(1)}\otimes \envstate[k,T]\big) (U^{(0)})^* - (\sysstate[k,T]\invar)^{(1)}\otimes \envstate[k,T] \\ &\qquad-\Big[(\sysstate[k,T]\invar)^{(0)}\otimes \envstate[k,T], \sum_{i} \pi_{i,k,T} v  \pi_{i,k,T} (-\i\tau) + \sum_{i \neq j} \pi_{i,k,T} v  \pi_{j,k,T} \Big( \frac{\exp(-\i\tau(E_{i,k,T}-E_{j,k,T}))-1}{E_{i,k,T}-E_{j,k,T}}\Big)\Big]
\end{align*}
where $\pi_{j,k,T}$, resp. $E_{j,k,T}$, are the spectral projectors, resp. eigenvalues, of $\ham_0 = \ham_\sys + \ham_{\env,k,T}$.
Note $M_{k,T}$ is traceless, self-adjoint, and depends on $\sysstate[k,T]\invar$, $\envstate[k,T]$, $v(k/T)$, and $\tau$, but is independent of~$\lambda$, and is bounded uniformly in $k$ and $T$. The error term is uniform in $k$ and $T$.
\end{lemma}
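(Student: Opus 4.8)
The plan is to expand $\Udiff_{k,T} = U_{k,T}\,(\sysstate[k,T]\invar\otimes \envstate[k,T])\,U_{k,T}^* - \sysstate[k,T]\invar\otimes \envstate[k,T]$ to first order in $\lambda$, tracking the two sources of $\lambda$-dependence: the unitary $U_{k,T} = \exp(-\i\tau(\ham_0 + \lambda v))$ with $\ham_0 = \ham_\sys\otimes\id + \id\otimes\ham_{\env,k,T}$, and the invariant state, through the hypothesized expansion $\sysstate[k,T]\invar = (\sysstate[k,T]\invar)^{(0)} + \lambda(\sysstate[k,T]\invar)^{(1)} + O(\lambda^2)$. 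The probe state $\envstate[k,T]$, being the Gibbs state of $\ham_{\env,k,T}$, carries no $\lambda$-dependence.

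First I would record that $(\sysstate[k,T]\invar)^{(0)}$ is invariant under the $\lambda=0$ reduced dynamics $\rho\mapsto\mathrm{e}^{-\i\tau\ham_\sys}\rho\,\mathrm{e}^{\i\tau\ham_\sys}$ (being the limit as $\lambda\to0$ of the unique invariant state of a family of channels depending continuously on $\lambda$), hence commutes with $\mathrm{e}^{-\i\tau\ham_\sys}$; since $\envstate[k,T]$ commutes with $\ham_{\env,k,T}$, the operator $R := (\sysstate[k,T]\invar)^{(0)}\otimes\envstate[k,T]$ commutes with $U^{(0)} := \exp(-\i\tau\ham_0)$. This makes the zeroth-order term of $\Udiff_{k,T}$ vanish, so $\Udiff_{k,T}=O(\lambda)$, and it is precisely this commutation relation that drives the cancellations below.

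Next I would Taylor-expand the unitary by Duhamel's formula, $U_{k,T} = U^{(0)} + \lambda W + O(\lambda^2)$ with $W := \int_0^1 \mathrm{e}^{-\i\tau u\ham_0}(-\i\tau v)\,\mathrm{e}^{-\i\tau(1-u)\ham_0}\,\d u$, the remainder being $O(\lambda^2)$ uniformly in $k,T$ because $\ham_0$, $v$ and $\tau$ are uniformly bounded ($\ham_\env,\beta,v$ being $C^2$ on the compact interval $[0,1]$). Substituting both expansions into $U_{k,T}(\sysstate[k,T]\invar\otimes\envstate[k,T])U_{k,T}^*$, subtracting $\sysstate[k,T]\invar\otimes\envstate[k,T]$, and collecting the coefficient of $\lambda$ gives, after using $[U^{(0)},R]=0$,
\[
M_{k,T} = U^{(0)}\big((\sysstate[k,T]\invar)^{(1)}\otimes \envstate[k,T]\big)(U^{(0)})^* - (\sysstate[k,T]\invar)^{(1)}\otimes \envstate[k,T] + \Phi R - R\Phi ,
\]
where $\Phi := W (U^{(0)})^*$. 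The decisive simplifications are $WR(U^{(0)})^* = W(U^{(0)})^* R = \Phi R$ and $U^{(0)}RW^* = R\,U^{(0)}W^* = R\,\Phi^* = -R\Phi$, using $[U^{(0)},R]=0$ together with the anti-Hermiticity of $\Phi$ (which follows from differentiating $U_{k,T}U_{k,T}^*=\id$ at $\lambda=0$). Finally, inserting the spectral decomposition $\ham_0=\sum_i E_{i,k,T}\pi_{i,k,T}$ into $\Phi = -\i\tau\int_0^1 \mathrm{e}^{-\i\tau u\ham_0} v\,\mathrm{e}^{\i\tau u\ham_0}\,\d u$ and performing the elementary $u$-integral — the $i=j$ block contributing $-\i\tau\,\pi_{i,k,T}v\pi_{i,k,T}$, the $i\neq j$ block the stated divided-difference factor — reproduces exactly the bracketed operator in the statement, so that $\Phi R - R\Phi = -[R,\Phi]$ is the claimed commutator term.

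The remaining assertions are short checks: $M_{k,T}$ is traceless since the first two terms cancel under the trace by cyclicity and a commutator has vanishing trace; it is self-adjoint since $(\sysstate[k,T]\invar)^{(1)}$ and $\envstate[k,T]$ are self-adjoint and $[R,\Phi]^*=[\Phi^*,R]=-[\Phi,R]=[R,\Phi]$; and $M_{k,T}$ is visibly independent of $\lambda$. The uniform-in-$(k,T)$ bounds on $M_{k,T}$ and on the $O(\lambda^2)$ remainder follow from the uniform bounds on $\ham_0$, $v$, $\tau$, $\envstate[k,T]$ and, by hypothesis, on the invariant-state expansion. I do not anticipate a genuine obstacle; the one point needing care is to keep every estimate uniform in $k$ and $T$, which is exactly where the $C^2$-regularity of $\ham_\env,\beta,v$ on $[0,1]$ (rather than mere pointwise control) must be invoked.
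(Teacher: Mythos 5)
Your proof is correct and arrives at exactly the stated $M_{k,T}$, but the route is genuinely different from the paper's. The paper works in the Dunford--Taylor holomorphic functional calculus: it writes $U^\lambda\omega^\lambda(U^\lambda)^*$ as a double contour integral over resolvents $R^\lambda(\zeta)$, substitutes the Neumann series $R^\lambda(z)=R_0(z)[\id-\lambda vR_0(z)+O(\lambda^2)]$, and then evaluates the resulting integrals term by term via Cauchy's formula and a first-resolvent-identity manipulation. You instead expand the propagator directly by Duhamel's formula, $U_{k,T}=U^{(0)}+\lambda W+O(\lambda^2)$ with $W=\int_0^1 e^{-\i\tau u h_0}(-\i\tau v)e^{-\i\tau(1-u)h_0}\,\d u$, pass to the interaction-picture quantity $\Phi=W(U^{(0)})^*=-\i\tau\int_0^1 e^{-\i\tau u h_0}v\,e^{\i\tau u h_0}\,\d u$, exploit $[U^{(0)},R]=0$ and the anti-Hermiticity $\Phi^*=-\Phi$ to organize the first-order term as $U^{(0)}\omega^{(1)}(U^{(0)})^*-\omega^{(1)}-[R,\Phi]$, and then perform the elementary $u$-integral blockwise in the spectral basis of $h_0$, which reproduces the paper's divided-difference coefficients. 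The two computations are of course close relatives — both are ``first-order perturbation of a propagator with a co-moving zeroth-order state'' — but yours is more elementary (no contour integrals or resolvent identities), and the explicit identification of the first-order term as a commutator $-[R,\Phi]$ with an anti-Hermitian $\Phi$ makes the self-adjointness and tracelessness of $M_{k,T}$ immediate, whereas the paper reads these properties off from the final formula. The paper's resolvent route is the one that scales gracefully to higher orders (it is the same machinery used in Section 5 and Appendix B), which is presumably why the authors kept the presentation uniform. One small point worth making explicit in your write-up: the commutation $[(\rho^{\mathrm{inv}}_{k,T})^{(0)},e^{-\i\tau h_\sys}]=0$, which you correctly deduce from invariance under $\L^0$, is exactly what the paper uses in the guise of $U^{(0)}\omega^{(0)}(U^{(0)})^*=\omega^{(0)}$; stating that this continuity of $\lambda\mapsto\rho^{\mathrm{inv}}(\lambda)$ at $\lambda=0$ is supplied by Lemma \ref{expp1} would close the loop cleanly.
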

The proof is based on standard perturbative arguments. Next, we consider the $\lambda$ and $s$ dependence of the spectral data of 
\begin{equation}\label{deflls}
{\L}^\lambda(s)(\cdot )=\tr_{\env}(e^{ -\i\tau (\ham_0(s) + \lambda v(s))} (\cdot)\otimes\envstate(s)  e^{ \i\tau (\ham_0(s) + \lambda v(s))}),
\end{equation} 
where $\ham_0(s)=\ham_\sys+\ham_\env(s)$.
This is required in the assumptions of Lemma\nobreakspace \ref {lem:expandUwU} above, and to assess the dependence on $\lambda$ of  $c^P$ and $\ell$ that lead to our 
discrete non-unitary adiabatic theorem in Section\nobreakspace \ref {sec:Adiabatic}.

\begin{lemma}\label{unifpert} Assume $h_\env$, $\beta$ and $v$ are $C^2$ functions on $[0,1]$. The operator
${\L}^\lambda(s)$ defined by equation\nobreakspace \textup {(\ref {deflls})} is an entire function of $\lambda\in \C$ with $C^2$ coefficients in $s\in[0,1]$,
and there exists a numerical constant $C_0>1$ such that for $|\lambda|\tau\sup_{s\in[0,1]}\|v(s)\|$ small enough,
\[\| {\L}^\lambda(s)- {\L}^0(s)\|\leq C_0 |\lambda|\tau\sup_{s\in[0,1]}\|v(s)\|.\]
Moreover, for $z\not \in \sp ({\L}^0(s))$ and $\lambda$ s.t. $|\lambda|\tau\sup_{s\in[0,1]}\|v(s)\|\big\|({\L}^0(s)-z)^{-1}\big\|<1/C_0$,
\[\big\|({\L}^\lambda(s)-z)^{-1}- ({\L}^0(s)-z)^{-1}\big\|\leq C_0|\lambda|\,\big\| ({\L}^0(s)-z)^{-1}\big\|^2.\]
\end{lemma}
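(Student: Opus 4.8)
The three claims split along standard lines: analyticity/regularity of the propagator, a Duhamel estimate, and a Neumann-series manipulation of resolvents. Write $U^\lambda(s):=\mathrm e^{-\i\tau(\ham_0(s)+\lambda v(s))}$, so that $\L^\lambda(s)(\rho)=\tr_\env\big(U^\lambda(s)(\rho\otimes\xi(s))U^\lambda(s)^*\big)$. Since $\ham_0(s)$ and $v(s)$ are bounded self-adjoint operators on the finite-dimensional space $\H_\sys\otimes\H_\env$, the map $\lambda\mapsto U^\lambda(s)$ is given by the (everywhere convergent) exponential series of the matrix $-\i\tau(\ham_0(s)+\lambda v(s))$, which depends affinely on $\lambda$; hence it is entire in $\lambda$. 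The operations $A\mapsto U^\lambda A U^{\lambda*}$ (bilinear in $U^\lambda,\bar U^\lambda$), tensoring with $\xi(s)$, and $\tr_\env$ (linear) all preserve entirety, so $\lambda\mapsto\L^\lambda(s)$ is entire with values in $\B(\H_\sys)$. For the $s$-regularity, $s\mapsto\ham_0(s)$ and $s\mapsto v(s)$ are $C^2$ on $[0,1]$ by hypothesis on $\ham_\env,\beta,v$, so $s\mapsto U^\lambda(s)$ is $C^2$ (differentiate the exponential via Duhamel's formula; the derivatives are given by convergent integrals of products of exponentials and derivatives of $\ham_0,v$, which are themselves $C^1$), and likewise $s\mapsto\xi(s)=\mathrm e^{-\beta(s)\ham_\env(s)}/\tr(\mathrm e^{-\beta(s)\ham_\env(s)})$ is $C^2$ by the same argument together with the fact that the partition function stays bounded away from $0$. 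Composing with the fixed algebraic operations above gives the $C^2$-in-$s$ statement.

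For the second claim, the main input is Duhamel's formula
\[
U^\lambda(s)-U^0(s)=-\i\lambda\int_0^\tau \mathrm e^{-\i(\tau-t)(\ham_0(s)+\lambda v(s))}\,v(s)\,\mathrm e^{-\i t\,\ham_0(s)}\,\d t .
\]
Because $\ham_0(s)$ is self-adjoint, $\mathrm e^{-\i t\,\ham_0(s)}$ is unitary, and the Dyson expansion gives $\|\mathrm e^{-\i(\tau-t)(\ham_0(s)+\lambda v(s))}\|\le \mathrm e^{(\tau-t)|\lambda|\,\|v(s)\|}$; hence, writing $v_\infty:=\sup_{s}\|v(s)\|$, for $|\lambda|\tau v_\infty\le 1$ one gets $\|U^\lambda(s)-U^0(s)\|\le \mathrm e\,|\lambda|\tau v_\infty$ and $\|U^\lambda(s)\|\le\mathrm e$, uniformly in $s$. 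Decomposing
\[
U^\lambda(\rho\otimes\xi)U^{\lambda*}-U^0(\rho\otimes\xi)U^{0*}=(U^\lambda-U^0)(\rho\otimes\xi)U^{\lambda*}+U^0(\rho\otimes\xi)(U^{\lambda*}-U^{0*})
\]
and using $\|\tr_\env A\|_1\le\|A\|_1$, $\|\rho\otimes\xi(s)\|_1=\|\rho\|_1$, $\|BC\|_1\le\|B\|\,\|C\|_1$, one obtains $\|\L^\lambda(s)(\rho)-\L^0(s)(\rho)\|_1\le(\|U^\lambda\|+1)\,\|U^\lambda-U^0\|\,\|\rho\|_1\le\mathrm e(\mathrm e+1)\,|\lambda|\tau v_\infty\,\|\rho\|_1$, i.e. the bound with $C_0:=\mathrm e(\mathrm e+1)>1$.

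For the third claim, fix $z\notin\sp(\L^0(s))$ and write $R_0:=(\L^0(s)-z)^{-1}$. Under the smallness hypothesis $\|R_0\|\,C_0|\lambda|\tau v_\infty<1$ (equivalently $\|R_0\|\,\|\L^\lambda(s)-\L^0(s)\|<1$ after possibly enlarging $C_0$), the operator $\one+R_0(\L^\lambda(s)-\L^0(s))$ is invertible with inverse of norm $\le(1-\|R_0\|\,\|\L^\lambda-\L^0\|)^{-1}\le 2$, provided we strengthen the threshold to $\|R_0\|\,\|\L^\lambda-\L^0\|<1/2$; hence $\L^\lambda(s)-z=(\L^0(s)-z)\big(\one+R_0(\L^\lambda-\L^0)\big)$ is invertible with $\|(\L^\lambda(s)-z)^{-1}\|\le 2\|R_0\|$. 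The second resolvent identity
\[
(\L^\lambda(s)-z)^{-1}-R_0=-(\L^\lambda(s)-z)^{-1}\,(\L^\lambda(s)-\L^0(s))\,R_0
\]
then yields $\|(\L^\lambda(s)-z)^{-1}-R_0\|\le 2\|R_0\|\cdot C_0|\lambda|\tau v_\infty\cdot\|R_0\|$, which is of the form $C_0'|\lambda|\,\|R_0\|^2$ after absorbing $2\tau v_\infty$ into the constant. Renaming $C_0\leftarrow\max$ of the constants produced in the three steps (and adjusting the smallness thresholds accordingly) gives the stated uniform $C_0$.

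\textbf{Main obstacle.} None of the steps is deep; the only point requiring care is the passage to \emph{complex} $\lambda$, where the propagators $U^\lambda(s)$ are no longer unitary and one must use the exponential-of-bounded-perturbation bound $\|\mathrm e^{-\i t(\ham_0+\lambda v)}\|\le\mathrm e^{t|\lambda|\,\|v\|}$ rather than isometry — this is harmless for $|\lambda|\tau v_\infty$ small but must be tracked so that all bounds stay uniform in $s\in[0,1]$. The secondary bookkeeping issue is making sure the constant in the third estimate is genuinely independent of $z$ (it is, because the bound is expressed purely through $\|R_0\|$) and that the thresholds in the three statements are compatible, which is why one takes the maximum of the constants at the end.
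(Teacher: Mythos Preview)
Your proof is correct and follows essentially the same route as the paper's: both use the Dyson/Duhamel expansion of $U^\lambda(s)$ to get entirety in $\lambda$, $C^2$-regularity in $s$, and the bound $\|U^\lambda(s)-U^0(s)\|\le c\,|\lambda|\tau v_\infty$, then expand $U^\lambda(\,\cdot\,)U^{\lambda*}-U^0(\,\cdot\,)U^{0*}$ and push through $\tr_\env$ using $\|\tr_\env A\|_1\le\|A\|_1$, and finally invoke the second resolvent identity for the last estimate. The only cosmetic differences are that the paper writes out the full Dyson series (making the ``$C^2$ coefficients in $s$'' claim more transparent term-by-term) and uses a three-term splitting of the conjugation difference rather than your two-term one; neither affects the argument.
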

\begin{proof}
We write $\exp( -\i\tau (\ham_0(s) + \lambda v(s))$ as a Dyson series in interaction representation
\[e^{ -\i\tau (\ham_0(s) + \lambda v(s))}e^{ \i\tau \ham_0(s)}=\id +\sum_{n\geq 1}(-i\lambda)^n\int_{0\leq t_1\leq \dots\leq t_n\leq \tau}
v_I(s,t_n)v_I(s,t_{n-1})\dots v_I(s,t_1)\d t_n \dotsm \d t_1,\]
 where $v_I(s,t)=e^{-\i t \ham_0(s)}v(s)e^{ \i t \ham_0(s)}$ is $C^2$ in $(s,t)\in [0,1]\times [0,\tau]$. Each term of the series is $C^2$ in $s\in[0,1]$ and the norm of the $n^\text{th}$ term is bounded by $(|\lambda|\tau \sup_{s\in[0,1]}\|v(s)\|)^n/n!$. Inserting the series in (\ref{deflls}) and using our continuity assumptions, the coefficients of the resulting expansion are $C^2$ as well, so that uniform convergence yields the first statement.
 In particular, we have
 $e^{ -\i\tau (\ham_0(s) + \lambda v(s))}=e^{ -\i\tau \ham_0(s)}+\Delta U(\lambda, s)$, with $\|\Delta U(\lambda, s)\|\leq c_1 |\lambda|\tau \sup_{s\in[0,1]}\|v(s)\|$ for some numerical constant $c_1$, if $ |\lambda|\tau \sup_{s\in[0,1]}\|v(s)\|$ is small enough. Hence we can write for any~$\sysstate$ in~${\B}({\H}_{\sys})$
 \begin{align*}
& {\L}^\lambda(s)(\sysstate)-{\L}^0(s)(\sysstate )= \\ 
 & \ \ \ \tr _{\env} \big(\Delta U(\lambda, s)\sysstate\otimes\envstate(s)e^{ \i\tau \ham_0(s)} +e^{ -\i\tau \ham_0(s)}\sysstate\otimes\envstate(s)\Delta U(\lambda,s)+\Delta U(\lambda,s)\sysstate\otimes\envstate(s)\Delta U(\lambda,s)\big).
 \end{align*}
Taking the trace norm on ${\H}_{\sys}$ using $\| AB\|_1\leq \|A\| \|B\|_1$ and $\|  \tr _{\env} (A) \|_1\leq \|A\|_1$ for any $A,B\in{\cal B}({\H}_{\sys}\otimes{\H}_{\env})$  (see \cite{Ras}), and $\|\envstate(s)\|_1=1$, we get 
\[
\|{\L}^\lambda(s)(\sysstate)-{\L}^0(s)(\sysstate )\|_1\leq (2 \|\Delta U(\lambda, s)\|+\|\Delta U(\lambda, s)\|^2)\|\sysstate\|_1,
\]
from which the second statement follows. The last statement is then a direct consequence of the second resolvent identity. \qedhere

When $\lambda=0$, the map ${\L}^0(s)(\cdot) =e^{ -\i\tau \ham_{\sys}}(\cdot) e^{ \i\tau \ham_{\sys}}$ is independent of $s$ and has spectrum
$\sp ( {\L}^0(s))=\{e^{\i\tau (E_j-E_k)}\}_{E_j,E_k\in\sp (\ham_{\sys})}$. Thus $1$ is an eigenvalue of multiplicity at least $\dim \H_\sys$ and the other eigenvalues lie on the unit circle. We shall assume the following genericity condition which forbids accidental degeneracies:
\smallskip

\paragraph{GEN} The spectrum of ${\L}^0(s)$ consists of $\dim \H_\sys(\dim \H_\sys-1)$ simple eigenvalues different from~1, and 1 which is $\dim \H_\sys$-fold degenerate. Furthermore, hypotheses \textup{H1}, \textup{H2}, \textup{H3} and \textup{H4} hold for all $\lambda\in \R^*$  small enough, uniformly in $s\in[0,1]$.
\smallskip

\begin{remark}
This implies that the potential $v(s)$ couples the system effectively for all $s\in[0,1]$. In particular, $v(s)$ cannot be equal to zero for some values of $s$. Actually, the stronger Fermi Golden assumption means that for small but non zero $\lambda$, the peripheral spectrum is reduced to the eigenvalue 1. 
\end{remark}

Let us denote by $\tilde e^j(\lambda,s)$, $j=1,\cdots, \dim \H_\sys(\dim \H_\sys-1)$ the eigenvalues of ${\L}^\lambda(s)$ such that $\tilde e^j(0,s)\neq 1$ and 
by $\tilde e_1^j(\lambda,s)$,  $j=1,\cdots, J\leq \dim \H_\sys$, the distinct eigenvalues such that $\tilde e_1^j(0,s)=1$. One has  $J<\dim \H_\sys$ in case of degeneracy. The constant eigenvalue has index 1, i.e. $\tilde e_1^1(\lambda,s)\equiv1$. 

Let $\epsilon_0>0$ be such that the $\dim \H_\sys(\dim \H_\sys-1)+1$ balls of radius $\epsilon_0$ centred at $\sp ( {\L}^0(s))$ are distinct and a distance at least $\epsilon_0$ apart.  By Lemma \ref{unifpert}, there exists $l_0$  small enough such that  for all  $\lambda\in D({l_0})=\{ z\in\C \ | \ |z|<l_0\}$, $|\tilde e_1^j(\lambda,s)-1|<\epsilon_0$ and $|\tilde e^j(\lambda,s)-\tilde e^j(0,s)|<\epsilon_0$,  for all $j=1,\cdots, J$, uniformly $s\in [0,1]$. Analytic perturbation theory then implies that 
the eigenvalues $\tilde e^j(\lambda,s)$ and associated eigenprojectors $\tilde P^j(\lambda,s)$ are analytic in $\lambda\in D({l_0})$, for any fixed $s\in [0,1]$,  and all $j=1,\cdots, \dim \H_\sys$. Because of the splitting of the eigenvalues $\tilde e^j_1(\lambda,s)$ taking place at $\lambda=0$ where $\tilde e_1^j(0,s)=1$ for $j=1,\cdots, J\leq \dim \H_\sys$, these eigenvalues can have 
branching points at $\lambda=0$ and the corresponding eigenprojectors $\tilde P_1^j(\lambda,s)$ and eigennilpotents $\tilde N_1^j(\lambda,s)$ can even diverge there. 

We first prove that this does not occur for $\tilde P_1^1(\lambda,s)$ associated to $\tilde e_1^1(\lambda,s)\equiv1$:

\begin{lemma}\label{expp1} Let $s\in [0,1]$ be fixed. The eigenprojector $\tilde P_1^1(\lambda,s)$, for $\lambda\neq 0$ admits an analytic extension in a neighbourhood of the origin. Moreover,
the unique invariant state $\sysstate_s\invar(\lambda)$ is analytic in $\lambda$ in a neighbourhood of the origin.
\end{lemma}

\begin{corollary}\label{2dim}
Under the same hypotheses, if $\dim \H_\sys=2$, all eigenvalues and eigenprojectors of $\L^\lambda$ are analytic in $\lambda$ in a neighbourhood of the origin. Moreover, $\tilde e_1^2(\lambda)=1+O(\lambda^2)$.
\end{corollary}
\begin{proof} We only need to consider $\tilde e_1^2(\lambda)$ and $\tilde P_1^2(\lambda)$.
The assumption $\dim \H_\sys=2$ implies that for $\lambda\in D_0\setminus\{0\}$, $\tilde e_1^2(\lambda)\neq 1$ is non-degenerate. Since the 1-group $\tilde e_1^1(\lambda)+\tilde e_1^2(\lambda)=1+\tilde e_1^2(\lambda)$  and  the total projection on the 1-group   $\tilde P_1^1(\lambda)+\tilde P_1^2(\lambda)$ are analytic in $D_0$ (see \cite{Kato} Ch.II.\S 2) the above lemma implies the first statement. Finally, the property $\sp (\L)=\overline{\sp (\L)}$ implies that $\tilde e_1^2(\lambda)$ is real analytic in $\lambda$, and the assumption $|\tilde e_1^2(\lambda)|<1$ for $|\lambda|$ small and real imposes the term of first order in $\lambda$ to vanish.
\end{proof}

Finally, we address the dependence on $\lambda\in \R$ of the different estimates in Section\nobreakspace \ref {sec:Adiabatic} involved in the proof of our non-unitary adiabatic theorem. 

Let us relabel the eigenprojectors according to the convention used in \textup{H1} to \textup{H4}: $P^j$'s are associated to eigenvalues of modulus one, $P^1$ is associated with 1, and $Q^j$'s to eigenvalues of modulus strictly inferior to one when $0<|\lambda|<l_0$ is real and small, where $l_0$ is related to $\epsilon_0$ in the construction before Lemma\nobreakspace \ref {expp1}.

\begin{lemma}\label{regproj}\label{regproj1}
Let $\lambda\in \R$ such that $0<|\lambda|<l_0$ and $P^j(\lambda,s)$, $j\geq 1$, be associated to a peripheral eigenvalue. There exists $0<\tilde l_0\leq l_0$ and a constant $\tilde C_0$, such that for any $s\in [0,1]$, and any real $\lambda$ such that $0<|\lambda|<\tilde l_0$,
\[
\max_{k=0,1,2}\| \partial_s^k P^j(\lambda,s)\|\leq \tilde C_0, \ \ \forall j\geq 1.
\]
\end{lemma}

\begin{remarks}\, \hfill \vspace{-0.8em}
\begin{itemize} 
\item We get  under the same hypotheses that  $\max_{k=0,1,2}|\partial_s^ke^j(\lambda,s)|\leq \tilde C_0$, $j\geq 1$. It is enough to notice that $e^j(\lambda,s)=\L^\lambda(s)P^j(\lambda, s)$. 
\item Consequently, the constant $c^P$ defined in (\ref{eq_defCPCPp}) appearing in the estimates of Section\nobreakspace \ref {sec:Adiabatic} is uniform in $\lambda\in\R^*$, $|\lambda|$ small enough and the constant $N_{\text{max}}$ is uniform in $\lambda$ thanks to (\ref{eq_unifboundNs}).
\end{itemize}
\end{remarks}

Summarizing, we thus get the following control on the $\lambda$ dependence of our adiabatic theorem: 
\begin{proposition}\label{updateadiab}
Consider (\ref{deflls}) and assume GEN and ADRIS. Then, there exists $\tilde C$ such that for $\lambda\in \R^*$, $|\lambda|$ and $1/T$  small enough, independently of each other,
\[
\| \L_{n,T} \L_{n-1,T}\dotsm  \L_{1,T}  -  A_{n,T}-\L^Q_{n,T} \L^Q_{n-1,T}\dotsm  \L^Q_{1,T} Q_0\|\leq \frac{\tilde C}{(1-\ell(\lambda))T}, 
\]
where $\L_{k,T}=\L^\lambda(k/T)$,   $\L^Q_{k,T}=\L^\lambda(k/T)Q(\lambda, k/T)$. The $\lambda-$dependent operator $A_{n,T}$ is defined in Proposition\nobreakspace \ref {prop:general-ad}, and is bounded  uniformly in $\lambda$ and
\[
\ell(\lambda)=\sup_{s\in[0,1]}\|\L^{\lambda}(s)Q(\lambda,s)\|<1
\] is defined in \textup{H4}. 
\end{proposition}

\begin{remark}

When $\lambda\rightarrow 0$, $\ell(\lambda)$ approaches or exceeds 1. Indeed, in general 
$\|\L^{\lambda}(s)Q(\lambda,s)\|\geq \spr (\L^{\lambda}(s)Q(\lambda,s))$, and this spectral radius equals  $1$ at $\lambda=0$. 

\end{remark}

Therefore, we need to resort to \textup{wH4} and consider ${\L^{\lambda}(s)}^m Q(\lambda,s)$  instead of $\L^{\lambda}(s)Q(\lambda,s)$. Then,  the exponent $m=m(\lambda)$ increases as $\lambda\rightarrow 0$ in order to keep the norm of the latter operator smaller than one. This  amounts, in a sense, to changing the adiabatic time scale $T$ as a function of $\lambda$; see $m$-ADRIS. 

\begin{lemma}\label{lemessy}
Assume GEN with \textup{wH4} instead of \textup{H4}.  For all 
 $0<G<1$, there exists $m(\lambda)\in \N$ such that for any $\lambda\in\R^*$ small enough,
\[
\ell(\lambda)=\sup_{s\in[0,1]}\|{\L^\lambda(s)}^{m(\lambda)}Q(\lambda,s)\|\leq1-G.
\]
We can take $m(\lambda)\geq M_0\frac{\ln(1/|\lambda|)}{|\lambda|^r}$, where $M_0>0$ is a constant and $r>0$ stems from the estimate
\begin{align}\label{estspr}
\sup_{s\in[0,1]}\spr (\L^\lambda(s)Q(\lambda,s))\leq 1-S_0|\lambda|^r, \ \mbox{for some $S_0>0$}.
\end{align}
If $\dim \H_\sys =2$, $\ell(\lambda)\leq1-G$ for $m(\lambda)\geq M/\lambda^2$, for some $M>0$, for $\lambda\in\R^*$ small enough, generically.
\end{lemma}

\begin{remarks} \, \hfill \vspace{-0.8em}
\begin{itemize}
\item Corollary\nobreakspace \ref {2dim} ensures that all spectral data are analytic when $\dim \H_\sys=2$. By generic, we mean that coefficients of power series that need not be zero are, in fact, non zero. See the examples below.
\item Allowing $G$ to tend to one as $\lambda\rightarrow 0$  does not prevent $m(\lambda)$ to diverge in this limit. Nevertheless, the gain of considering ${\L^\lambda(s)}^{m(\lambda)}$ with \textup{wH4}, is that $\ell(\lambda)$ becomes independent of $\lambda$ and bounded away from one. Thus, the error terms in the adiabatic approximation become $\lambda$ independent, see Proposition\nobreakspace \ref {updateadiab}.
\end{itemize}
\end{remarks}

\begin{remark}
While $\|Q(\lambda,s)\|\leq 2$, for all $\lambda$ real and small, 
we don't have an a priori control on the individual projectors and eigennilpotents of $Q\L$.
\end{remark}

\begin{proposition}[$m$-RIS]\label{prop:1atomentropy_lambda}
Consider a repeated interaction system as described in Section\nobreakspace \ref {sec:RIS}, satisfying assumption $m$\textup{ADRIS}, and such that the induced CPTP map $\L^\lambda(s)$ is ergodic with simple eigenvalue 1 for all $s\in[0,1]$ and satisfies the assumptions of Lemma\nobreakspace \ref {lemessy}.
Denote by $\sysstate\invar_{k,T}(\lambda)$ the (unique) invariant state of $\L^{\lambda}_{k,T}$, and $P_{k,T}^1(\lambda)$ the associated spectral projector of $\L^{\lambda}_{k,T}$. Assume this state is faithful up to  $\lambda=0$.
Let $\sysstate\init$ be the initial state of the system, such that $\big(P_{0,T}^1(\lambda) + Q_{0,T}(\lambda)\big)\sysstate\init = \sysstate\init$. 
With the notations of Corollary\nobreakspace \ref {cor:matomentropy}, for $T$ large enough and $\lambda$ small enough, for $k\leq T$ large enough, we have 
$\sysstate[k,T]\invar (\lambda)= (\sysstate[k,T]\invar)^{(0)} + \lambda (\sysstate[k,T]\invar)^{(1)} + O(\lambda_k^2)$ and, 
with $M_{k,T}$ defined in Lemma\nobreakspace \ref {lem:expandUwU}, for all $1\leq j\leq m(\lambda)$,
\begin{align} \label{eq:1slowandweak}	
\sigma^{(j)}_{k,T} &= \lambda^2 F_{k,T}^{(0)}(M_{k,T},M_{k,T}) + F_{k,T}^{(0)}(D^{(j)}_{k,T},D^{(j)}_{k,T}) - \lambda F_{k,T}^{(0)}(D^{(j)}_{k,T},M_{k,T})\\
&\qquad - \lambda F_{k,T}^{(0)}(M_{k,T},D^{(j)}_{k,T}) +  O(\{\lambda + \|Q_{0,T} \sysstate\init\| \ell(\lambda)^k + T\inv \}^3), \nonumber
\end{align}
where $F_{k,T} ^{(0)}(\cdot, \cdot)=F_{(\sysstate\invar_{k,T})^{(0)}}(\cdot, \cdot)$.
\end{proposition}

\end{proof}

\begin{proof}
We drop the $T$ indices in this proof. We start form Corollary\nobreakspace \ref {cor:matomentropy} and seek its small $\lambda\in \R^*$ behaviour. First, Lemma\nobreakspace \ref {lemessy} ensures that considering ${\L^\lambda(s)}^{m(\lambda)}$, the adiabatic parts of the estimate are uniform in $\lambda$. Then we recall the estimate
\[
\|D^{(j)}_{k}\|_1\leq  2 \| \sysstate[k]\invar - \sysstate[k-1]\|_1 \leq 2\|Q_0 \sysstate\init\|_1\ell^k +O(T^{-1}).
\] that holds for all $1\leq j\leq m(\lambda)$.
For $\lambda$ small enough we can use the expansion of Lemma\nobreakspace \ref {lem:expandUwU}, as Lemma\nobreakspace \ref {expp1} states that $\sysstate_k\invar(\lambda)$ admits expansion in powers of $\lambda$ to all orders. Hence we can write $X_k = \lambda M_k + O(\lambda^2)$. Also, the perturbation formula (\ref{perteta}) and Lemma\nobreakspace \ref {expp1} show we can replace $F_k(\cdot, \cdot)=F_{\sysstate_k\invar}(\cdot, \cdot)$ by $F_{k}^{(0)}(\cdot, \cdot)=F_{{\sysstate_k\invar}(0)}(\cdot, \cdot)$ in (\ref{fixedlambda}), adding a term of order $\lambda$ in the curly bracket of the error term. 
Together with Lemma\nobreakspace \ref {lem:expandUwU}, this eventually yields Equation\nobreakspace \textup {(\ref {eq:1slowandweak})}.
\qedhere

\end{proof}

	Proposition\nobreakspace \ref {prop:1atomentropy_lambda} is the main result of this Section:  it provides an explicit formula to approximate the entropy production of an RIS at each step, as a function of all of the parameters, any of which could change between steps. Let us apply this result. Consider an $m(\lambda)$-RIS with the same assumptions as~Proposition\nobreakspace \ref {prop:1atomentropy}, and $m(\lambda)$ chosen by Lemma\nobreakspace \ref {lemessy}. Assume $Q_0\sysstate\init=0$ so that $D_{k,T}^{(j)} = O(T^{-1})$ by \eqref{eq_boundDk}. Since in addition, $\inf\sp\,(\rho_{s}^{\mathrm{inv}})^{(0)}>0$, a continuity argument shows the existence of a lower bound for the spectrum of $(\rho_{k,T}^{\mathrm{inv}})^{(0)}$, uniform in $k,T$. By Equation\nobreakspace \textup {(\ref {eq_boundFeta})} and Proposition\nobreakspace \ref {prop:1atomentropy}, one has for all $1\leq j \leq m(\lambda)$,
\begin{align*}	
\sigma_{k,T}^{(j)} &= \lambda^2 F^{(0)}_{k,T}(M_{k,T}) + O(1/T^2) + O(\lambda/T) +O(\lambda^3).
\end{align*}\
Then, summing over all interaction steps, 
\begin{equation}	
\sigma_{\lambda,T}^\text{tot} 
= m(\lambda)\Big(\lambda^2 TF_0 + O(1/T) + O(\lambda)+O(T\lambda^3)\Big) \label{eq_sigmatotmRIS}
\end{equation}
where
\[
0\leq \lim_{T\to\infty}\frac{1}{T}\sum_{k=1}^T  F^{(0)}_{k,T}(M_{k,T}) =: F_0  < \infty.
\]
Indeed,  $\L_{k,T}$ is obtained by sampling a $C^2$ function $\L(s)$. As noted earlier, by the spectral assumptions on $\L(s)$, we also obtain a $C^2$ function which is sampled to obtain the invariant state: $\sysstate[k,T]\invar= \sysstate\invar(s)$ for $s=k/T$ and $ F^{(0)}_{k,T}(\cdot, \cdot) = F_{ \sysstate\invar(s)^{(0)}}(\cdot, \cdot)$ for $s=k/T$. The same holds for $\ham_{{\env},k,T}=\ham_{\env}(k/T)$, $v_{k,T}=v(k/T)$. Thus the dependence of $k$ and $T$ of $M_{k,T}$ is of the form $M(s)$ for $s=k/T$ as well , for the $C^2$ function 
\begin{align*}	
M(s) &= \Big[(\sysstate\invar(s))^{(0)}\otimes \envstate(s), \sum_{i} \pi_i(s) v  \pi_i(s) (-\i\tau) + \sum_{i \neq j} \pi_i(s) v  \pi_j(s) \left( \frac{\exp(-\i\tau(E_i(s)-E_j(s)))-1}{E_i(s)-E_j(s)}\right)\Big]\\
&\qquad+ U^{(0)}(s) (\sysstate\invar(s))^{(1)}\otimes \envstate(s) (U^{(0)}(s))^*-(\sysstate\invar(s))^{(1)}\otimes \envstate(s).
\end{align*}
Then,
 \begin{align*}	
\lim_{T\to\infty}\sum_{k=1}^T \frac{1}{T} F^{(0)}_{k,T}(M_{k,T}) &= \lim_{T\to\infty}\sum_{k=1}^T F_{ \sysstate\invar(k/T)^{(0)} \otimes \envstate(k/T)}(M(K/T)) (k/T - (k-1)/T)\\
&= \int_0^1 F_{ \sysstate\invar(s)^{(0)}\otimes \envstate(s)}(M(s))\d s = F_0.
\end{align*}

\begin{remarks} \, \hfill \vspace{-0.8em}

\begin{itemize}
\item For an $m(\lambda)$-RIS as above, with $\lambda>0$ small but finite, the adiabatic limit $T\rightarrow \infty$ yields diverging entropy production if $F_0>0$, with a rate of entropy production in the adiabatic limit is given by $\lim_{T\to\infty} \frac{1}{T}\sigma_{\lambda,T}^\text{tot} = m(\lambda) \lambda^2 F_0 + O(\lambda^3)$.
\item  By the same arguments as in Corollary\nobreakspace \ref {cor:sup_inf_special_RIS_Xs}, we recover its analog:  in  the limit $T\to\infty$, $\lambda\to 0$, $\lambda T> \textit{constant}$, we recover  vanishing entropy production in the case $M_{k,T} \equiv 0$ and divergent entropy production in the case $\sup_{s\in [0,1]} \|M(s)\|_1 > 0$.
\item In particular, when $\dim \H_\sys =2$ we may take $m(\lambda)=[M_0/\lambda^2]$ to find
\[
\sigma_{\lambda,T}^\text{tot} = TM_0 F_0\big(1 + O(1/(T^2\lambda^2) + O(1/T\lambda)+O(\lambda^3)\big),
\]
for $F_0>0$. Thus $\sigma^\text{tot}_{\lambda,T}$ diverges like $T$ in the limit $T\to\infty$, $\lambda\to 0$, $\lambda T> \textit{constant}$, with an asymptotic rate of  $M_0 F_0 + O(\lambda^3)$.
\end{itemize}
\end{remarks}

We saw in Section\nobreakspace \ref {sec:rotwave} that, for the simple $2\times 2$ system with the rotating wave interaction from \cite{BJM14}, we do have relation $U_k \sysstate[k]\invar\otimes \envstate[k] U_k^* = \sysstate[k]\invar\otimes \envstate[k]$. 
We consider below the $2\times 2$ system with the full dipole interaction, and compute $F_k^{(0)}(M_k)>0$ for each step $k$.

\subsection{Example: 2-level system with full dipole interaction} \label{sec:fulldip}

Now we consider the same setup as~Section\nobreakspace \ref {sec:rotwave}, but with the full dipole interaction $v_{\text{FD}}= \frac{u_1}{2}( (a+a^*)\otimes (b+b^*))$. We choose the parameters $\{\tau, E_0, E\}$ so that  $\sp (\L^{0})$ is simple, except for the eigenvalue $1$ that is twice degenerate. By perturbation theory, for any $\beta(s)>0$, and any $\lambda>0$ small, hypothesis GEN holds.  
Hence the resulting reduced dynamics operator $\L^\lambda(s)$ is ergodic and satisfies \textup{wH4} (the eigenvalues are again independent of $\beta(s)$); see~Figure\nobreakspace \ref {fig:eigfull} for a particular choice of parameters.
\begin{figure}
\begin{center}
\begin{tikzpicture}[scale=1]
\draw[->] (-1.5,0) -- (1.5,0);
\node at (0.25,1.5) {$\i\R$};
\draw[->] (0,-1.5) -- (0,1.5);
\node at (1.5,0.25) {$\R$};
\draw (0,0) circle (1);
\draw[fill=red] (0.685604,0.289887) circle (0.025);
\draw[fill=red] (0.685604, -0.289887) circle (0.025);
\draw[fill=red] (0.554087, 0) circle (0.025);
\draw[fill=red] (1, 0) circle (0.025);
\node at (1.1,-0.15) {$1$};
\end{tikzpicture}
\end{center}
\caption{Numerically obtained eigenvalues of $\L$ with $\lambda=2$, $\tau=0.5$, $E_0=0.8$, and $E=0.9$. The eigenvalues of $\L$ are independent of $\beta$.}
\label{fig:eigfull}
\end{figure}
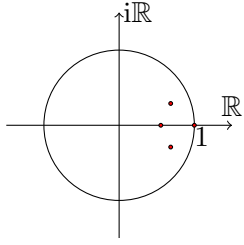
 The (unique) invariant state is
\begin{align*}	
\sysstate\invar_{k,T} &=\left(
\begin{array}{cc}
 \frac{e^{\beta_{k,T}  E_0} (1-\cos (\nu  \tau )) \eta ^2+\nu ^2
   (1-\cos (\eta  \tau ))}{\left(1+e^{\beta_{k,T}  E_0}\right)
   \left((1-\cos (\nu  \tau )) \eta ^2+\nu ^2 (1-\cos
   (\eta  \tau ))\right)} & 0 \\
 0 & \frac{(1-\cos (\nu  \tau )) \eta ^2+e^{\beta_{k,T}  E_0}
   \nu ^2 (1-\cos (\eta  \tau ))}{\left(1+e^{\beta_{k,T} 
   E_0}\right) \left((1-\cos (\nu  \tau )) \eta ^2+\nu ^2
   (1-\cos (\eta  \tau ))\right)} \\
\end{array}
\right)
\end{align*}
where $\nu=\sqrt{\left(E_0-E\right){}^2+\lambda ^2}$ and $\eta= \sqrt{\left(E+E_0\right){}^2+\lambda ^2}$. All matrices in this section are written in the basis given in the example at the end of~Section\nobreakspace \ref {ssec:RISmathdesc} (the eigenbasis of the unperturbed Hamiltonian $\ham_0 = \ham_\sys + \ham_\env$). We note that $\sysstate\invar_{k,T}$ has no first-order dependence on $\lambda$; the first correction is second order. We assume the system starts in the invariant state: $\sysstate\init = \sysstate\invar_0$. In the language of the proof of Proposition\nobreakspace \ref {prop:1atomentropy}, $D_{k,T}^{(j)} = O(T^{-1})$ and the quantity $X_{k,T}$ does not vanish. More precisely,
\begin{align*}
M_{k,T} &= A_{k,T} 
\begin{psmallmatrix}
0 & 0 & 0 & - \left(e^{i
   \tau  \eta _0}-1\right) \sin^2 (\nu_0 \tau/2) \eta _0
   \\
   0&0 & -\left(e^{i
   \tau  \nu _0}-1\right)   \sin^2 (\eta_0 \tau/2) \nu _0 & 0
   \\
    0 & 
  e^{-i\nu_0 \tau}\left(e^{i
   \tau  \nu _0}-1\right)   \sin^2 (\eta_0 \tau/2) \nu _0 & 0 & 0 
\\
  e^{-i\eta_0 \tau}\left(e^{i
   \tau  \eta _0}-1\right) \sin^2 (\nu_0 \tau/2) \eta _0 & 0 & 0 &0
   \end{psmallmatrix},
\end{align*}
where
\begin{align*}
A_{k,T} &=  \frac{\frac{1}{2}\tanh \left( \frac{\beta_{k,T} E_0}{2} \right)u_1}{{2 E_0 E\sin (E_0 \tau ) \sin (E \tau )-\left(E_0^2+E^2\right) (1-\cos (E_0 \tau ) \cos (E \tau))}}
\end{align*}
and $\nu_0 = |E-E_0|$ and $\eta_0 = |E+E_0|$. Therefore, we can use the fact that $[\sysstate\invar_{k,T} \otimes \envstate[k,T]\init, \ham_0] = 0$ to compute $F_{k,T}(M_{k,T})$ and conclude for small $\lambda$
\begin{align*}
	\sigma^{(j)}_{k,T} &= \lambda^2 \gamma \frac{\beta_{k,T} E_0}{2} \tanh \left( \frac{\beta_{k,T} E_0}{2} \right)  +  O(\lambda T^{-1}) + O(T^{-2})  + O(\lambda^3)
\end{align*}
where
\[
\gamma:=\begin{cases}
 \frac{ u_1^2(\cos (E_0\tau)-\cos (E \tau ))^2}{2 E_0 E\sin (E_0 \tau ) \sin (E \tau )-\left(E_0^2+E^2\right) (1-\cos (E_0 \tau ) \cos (E \tau))} & E\neq E_0\\
    \frac{2u_1^2\tau^2 \sin^2(E_0 \tau)}{1 + 2 E_0^2 \tau^2 - \cos(2 E_0 \tau)} & E=E_0.
  \end{cases}
\]
Hence, with $m(\lambda) = [M_0/\lambda^2]$  and  for small, constant $\lambda$, the total entropy production diverges as~$T$ in the adiabatic limit (for generic values of $E$, $E_0$ and $\tau$). In particular, we have the asymptotic rate
\begin{equation}
\lim_{T\to\infty}\frac{1}{T}\sigma_{T}^\text{tot} = \lambda^2 M_0\mathcal{E}\int_0^1  \frac{\beta(s) E_0}{2} \tanh \left( \frac{\beta(s) E_0}{2} \right)\d s + O(\lambda^3). \label{eq:mean}
\end{equation}

\section{Concluding remarks} \label{sec_concludingremarks}

Repeated Interaction Systems (RIS),  consisting of a small system of interest interacting with a chain of thermal probes, are naturally described by a discrete time evolution due to the instantaneous swapping of the probes. We consider the adiabatic regime where the probe parameters vary slowly from one probe to the next.

With our adiabatic theorem for discrete non-unitary evolution, Theorem\nobreakspace \ref {theo_adiabatic}, and a perturbative formula for relative entropy, Proposition\nobreakspace \ref {prop:perturbentropy}, we have characterized entropy production and saturation of the Landauer bound for finite-dimensional RIS in the adiabatic regime; see Corollary\nobreakspace \ref {cor:sup_inf_special_RIS_Xs}. The total entropy production of the RIS diverges if and only if  the  invariant state of the reduced dynamics at each step fails to be invariant under the full dynamics at each step\,---\,a condition that we linked to the physical notion of detailed balance in Lemma\nobreakspace \ref {lemma_caracX}. Moreover, Equation\nobreakspace \textup {(\ref {eq_sigmatotmRIS})} yields an asymptotic rate of entropy production in the small coupling limit.
 
Both behaviours are displayed by qubits interacting via their dipoles. When the rotating wave approximation is taken, the total number operator is preserved. This physical symmetry implies the detailed balance condition and moreover vanishing entropy production in the adiabatic limit. On the other hand, without this approximation we do not find the same symmetry, and recover divergent total entropy production with an asymptotic rate in the adiabatic limit.

\appendix 

\section{Proofs for Section\nobreakspace \ref {sec:Adiabatic} \label{App:Tanaka}}

\paragraph{Proof of Lemma\nobreakspace \ref {lemma_UnifBoundLPn}}
We neglect the subscripts $k$. Since $\L^P$ is simple, we have $\overline{e}^j\,\L = \sum_i \overline{e}^j.e^i \,  P^i + \overline{e}^j\, \L^Q $. Using that projectors $P^i$, $P^j$ associated with different peripheral eigenvalues $e^i$, $e^j$ satisfy $P^i P^j = P^j P^i =0$, we consider the ergodic sum
\begin{equation*}	
\frac{1}{M}\sum_{n=0}^{M-1} (\overline{e}^j\L)^n
= P^j + \frac{1}{M} \sum_{i \neq j} \frac{1- (\overline{e}^j.e^i)^M\!\!}{1- (\overline{e}^j.e^i)^{\hphantom{N}}\!\!}\, P^i  +\frac{1}{M} \sum_{n=0}^{M-1} (\overline{e}^j)^n \, (\L^Q)^n.
\end{equation*}
The left-hand side is a contraction for all $M$. On the right-hand side, the second term has norm going to zero as $M\to\infty$, and the spectral radius formula and $\spr (\L^Q)< 1$ imply that this is true of the third term as well. Therefore, $\|P^j\|\leq  1$ and since $P^j$ is a projector we have equality.

 Next, let us write the finitely many eigenvalues as $e^j=e^{\i x_j2\pi}$, with $x_j\in\R$, $j=1, \dots, r$. By Dirichlet's approximation theorem, for all $q\in\N^*$, there exists integers $\{p_j\}_{j=1,\dots,r}$ and $n_q \geq q$ in~$\N$ such that $|n_qx_j-p_j|\leq 1/q$, for all $j=1, \dots, r$, hence $(e^j)^{n_q}-1=O(1/q)$. Now, $Q\L^n=(\L^Q)^n$ so that 
 $\|Q\L^n\|\leq \ell^n\to 0$ as $n\to \infty$, and $\L^n=\sum_{j}(e^j)^nP^j+Q\L^n$, with $\|P^j\|=1$. Considering an increasing subsequence $(\tilde n_q)_q$ of $(n_q)_q$, we have $\lim_{q\to \infty}\L^{\tilde n_q}= P$. As $\L^{\tilde n_q}$ is a contraction and~$P$ a projector, we get $\|P\|=1$. 

 Finally, if $\L$ is CPTP, we have $\L^{\tilde{n}_q}$ is CPTP for any $\tilde{n}_q$, and thus the limit $P$ and composition $P\L$ are CPTP.  

\paragraph{Proof of Lemma\nobreakspace \ref {lem:eigC2}}
The eigenvalues of $\L^P(s)$ are simple roots of a polynomial whose coefficients are $C^2$ because $\L^P(s)$ is $C^2$ by assumption. But simple roots of a monic polynomial are smooth functions of the coefficients of the polynomial (see~\cite{Marden}), so the eigenvalues of $\L^P(s)$ are locally~$C^2$. The gap assumption \textup{H2} allows us to label them in such a way that these eigenvalues will be $C^2$ on~$[0,1]$.

Next, following~\cite{Kato}, we note that if an operator-valued function $T(s)$ is differentiable and invertible  in a neighbourhood of $s$, then $T\inv(s)$ is differentiable in that same neighbourhood, and $\frac{\d}{\d s} T(s)\inv = - T(s)\inv T'(s) T(s)\inv$. Applying this to $R^P(s,z):=( z-\L^P(s))\inv$, we obtain that $s\mapsto R^P(s,z)$ is twice differentiable on any interval of $[0,1]$ on which $z$ is not an eigenvalue of~$\L^P(s)$. Choose some peripheral eigenvalue $e^j(s)$ and fix $s_0$. From our gap and bound assumptions \textup{H2} and \textup{H4}, there exists a circle $\Gamma$ and $\delta>0$ such that $\Gamma$ encircles $e^j(s)$ for $|s-s_0|<\delta$, but stays a uniform distance away from $e_i(s)$ for any $i\neq j$. Then for any $s$ in the above neighbourhood of $s_0$, the spectral projector onto $e^j(s)$ is equal to
\begin{equation*}	
 P^j(s) =  \frac{1}{2\i\pi} \int_\Gamma R^P(s,z)\d z,
\end{equation*}
and the preceding discussion shows that $P^j(s)$ is $C^2$ on $[0,1]$.

\paragraph{Proof of Proposition\nobreakspace \ref {prop:Qkilled}}
We omit $T$ subscripts for simplicity.
An equivalent expression to~equation\nobreakspace \textup {(\ref {eq:Qkilled})} is
\begin{equation}	\label{eq_LminusLP}
\big\|\big(\L_k \L_{k-1}\dotsm  \L_1  - \L_k^P \L_{k-1}^P\dotsm  \L_1^P\big) P_0 + \big(\L_k \L_{k-1}\dotsm  \L_1  - \L_k^Q \L_{k-1}^Q\dotsm  \L_1^Q\big) Q_0 \big\|\leq  \frac{C^P}{T(1-\ell)}.
\end{equation}
Let us first consider
\begin{align}	
\big(\L_k \L_{k-1}\dotsm  \L_1  - \L_n^P \L_{k-1}^P\dotsm  \L_1^P\big) P_0. \label{eq:P0term}
\end{align}
By  writing $\L_n =  \L^P_n + \L^Q_n$ for each $0\leq n \leq k$, this expression can be expanded as $\L^Q_k \L^Q_{k-1} \ldots \L^Q_1 P_0$ plus a sum of terms of four different forms:
\begin{gather}
\big(\prod_{a\in A_d} \L_a^Q\big)\big(\prod_{b\in B_d} \L_b^P\big)\ldots \big(\prod_{a\in A_1} \L_a^Q\big)\big(\prod_{b\in B_1} \L_b^P\big)\, P_0, \label{eq_factorform1}\\
\big(\prod_{b\in B_{d+1}} \L_b^P\big)\big(\prod_{a\in A_d} \L_a^Q\big)\big(\prod_{b\in B_d} \L_b^P\big)\ldots \big(\prod_{a\in A_1} \L_a^Q\big)\big(\prod_{b\in B_1} \L_b^P\big)\, P_0, \label{eq_factorform2}\\
\big(\prod_{a\in A_{d+1}} \L_a^Q\big)\big(\prod_{b\in B_{d+1}} \L_b^P\big)\ldots \big(\prod_{a\in A_2} \L_a^Q\big)\big(\prod_{b\in B_2} \L_b^P\big) \big(\prod_{a\in A_1} \L_a^Q\big)\, P_0,
\label{eq_factorform3}\\
\big(\prod_{b\in B_{d+1}} \L_b^P\big)\big(\prod_{a\in A_d} \L_a^Q\big)\big(\prod_{b\in B_{d-1}} \L_b^P\big)\ldots \big(\prod_{a\in A_2} \L_a^Q\big)\big(\prod_{b\in B_2} \L_b^P\big) \big(\prod_{a\in A_1} \L_a^Q\big)\, P_0, \label{eq_factorform4}
\end{gather}
where $d\geq 1$, any $A_n$ or $B_n$ is a nonempty set of consecutive elements of $\{1,\ldots, k\}$ such that in every term of one of the above forms, $\bigcup_n A_n \cup \,\bigcup_n B_n = \{1,\ldots, k\}$, $A_m\cap A_n = B_m \cap B_n=\emptyset$ for $m\neq n$, $A_m \cap B_n = \emptyset$ for any $m,n$, and the products are e.g.
\[\prod_{a\in \{a_0+1,\ldots,a_0+t\}}  \L_a^Q := \L^Q_{a_0+t}\ldots  \L^Q_{a_0+1}, \qquad \prod_{b\in \{b_0+1,\ldots,b_0+t\}}  \L_b^P := \L^P_{b_0+t}\ldots  \L^P_{b_0+1}.\]
Remark that this notation enforces the fact that the term $\L_k^P\ldots \L_1^P P_0$ does not appear. 
Our proof relies on a few simple properties. By Lemma\nobreakspace \ref {lem:projC2}, for any $n$ in $\{1,\ldots, T\}$, we have 
\begin{align}\label{cp}
\|P_n-P_{n-1}\|\leq c^P/T.
\end{align} 
Since e.g. $P_n Q_{n-1}= (P_n-P_{n-1}) Q_{n-1}$, and $\|Q_{n-1}\| \leq \|\one - P_{n-1}\| \leq 2$, we have that  
\begin{equation}
\|P_n Q_{n-1}\| \leq c/T, \qquad \|Q_n P_{n-1}\| \leq c/T. \label{eq_BoundPQ}\end{equation}
where $c= 2c^P$. Next, from assumption \textup{H4} and Lemma\nobreakspace \ref {lemma_UnifBoundLPn} we have, denoting by $|A|$ the cardinal of a set $A$,
\begin{equation} \|\prod_{a\in A_n}  \L_a^Q \| \leq \ell^{|A_n|}, \qquad  \|\prod_{b\in B_n}  \L_b^P \| \leq 1. \label{eq_BoundLPLQ}\end{equation}
We therefore have immediately that 
\[\|\L^Q_k \ldots \L^Q_1 P_0\| \leq c\, \ell^k\, T^{-1}\]
and 
\begin{align*}
\|\eqref{eq_factorform1}\| &\leq (c/T)^{2d-1} \, \ell^{\sum_n |A_n|}, &
\|\eqref{eq_factorform2}\| &\leq (c/T)^{2d} \, \ell^{\sum_n |A_n|},\\
\|\eqref{eq_factorform3}\| &\leq (c/T)^{2d+1} \, \ell^{\sum_n |A_n|}, &
\|\eqref{eq_factorform4}\| &\leq (c/T)^{2d} \, \ell^{\sum_n |A_n|}.
\end{align*}
And the proof now follows from simple combinatorics. Let us just consider terms of the form~\eqref{eq_factorform1}. The index $d$ can run from $1$ to $[\frac k2]$. The index $\alpha:= \sum_n |A_n|$ is constrained by $\alpha \geq d$ and $k-\alpha \geq d$. Once $d$ and $\alpha$ are chosen, the exact factor \eqref{eq_factorform1} is determined by the choice of $|A_1|,\ldots,|A_d|$ and $|B_1|,\ldots,|B_d|$, all of which are $\geq 1$ and with the constraints $\sum_n |A_n|=\alpha$ and $\sum_n |B_n|=k-\alpha$. There are respectively ${\alpha -1 \choose d-1}$ and ${k-\alpha-1 \choose d-1}$ such choices, so that, using the norm estimates derived from \eqref{eq_BoundPQ} and \eqref{eq_BoundLPLQ}, the sum of all terms of the form \eqref{eq_factorform1} has norm smaller than 
\begin{align}\nonumber
\sum_{d=1}^{[\frac k2]}\sum_{\alpha=d}^{k-d} &\Big(\frac{ c}T\Big)^{2d-1}  \, \ell^\alpha \, {\alpha -1 \choose d-1}{k-\alpha-1 \choose d-1}\\ 
&= \frac T{c}\, \sum_{\alpha=1}^{k-1} \ell^\alpha \sum_{d=1}^{\inf(\alpha,k-\alpha)}  \Big(\frac{c^2 }{T^2}\Big)^{d}  \, {\alpha -1 \choose d-1}{k-\alpha-1 \choose d-1} \label{eq_l_optimal1}\\
&\leq  \frac Tc \,\sum_{\alpha=1}^{k-1} \ell^\alpha\,  \Big(\sum_{d=1}^\alpha  \Big(\frac{c^2 }{T^2}\Big)^{d/2}  \, {\alpha -1 \choose d-1}\Big) \Big(\sum_{d=1}^{k-\alpha}  \Big(\frac{c^2 }{T^2}\Big)^{d/2}  \, {k-\alpha -1 \choose d-1}\Big) \nonumber \\
&\leq \frac{c }{T} \, \Big(1+  \frac{c }{T}\Big)^{k-2}\, \sum_{\alpha=1}^{k-1} \ell^\alpha 
\leq  \frac{c \,\exp c}{T(1-\ell)}, \label{eq_l_optimal2}
\end{align}
The other forms, \eqref{eq_factorform2}--\eqref{eq_factorform4}, are very similar and yield upper bounds of the same type.
We can similarly expand
\[(\L_k \L_{k-1}\ldots \L_1 -\L^Q_k \L^Q_{k-1}\ldots \L^Q_1)Q_0\]
as $\L^P_k \L^P_{k-1} \ldots \L^P_1 Q_0$, plus a sum of terms similar to (\ref{eq_factorform1}--\ref{eq_factorform4}), with $P$ and $Q$ exchanged. The bounds and enumerations are similar to the ones given above. As $\|\L^P_k \L^P_{k-1} \ldots \L^P_1 Q_0\|\leq c/T$, we get the result.

\paragraph{Proof of Proposition\nobreakspace \ref {prop:general-ad}}
As previously mentioned, we follow the strategy of \cite{Tan11} with our different form of the adiabatic approximation, for $T\geq T_0$, $T_0$ depending on $c^P$ and $N_{\text{max}}$ only. For notational simplicity, we omit the $T$ subscripts, and will say that an expression is $O^P(T^{-\alpha})$ if there exists a universal numerical constant~$C$ depending on $c^P$ and $N_{\mathrm{max}}$ alone, such that the norm of the expression is bounded by $C T^{-\alpha}$. We want to show (see Equations\nobreakspace \textup {(\ref {eq:Kat})} and\nobreakspace  \textup {(\ref {eq:Phase})})
\begin{equation*}
\L^P_k \L^P_{k-1} \dotsm \L^P_1 \L^P_1 P_0 - \Kat{k} \Phase{k} \, P_0 =O^P({T}^{-1}).
\end{equation*}
By Equation\nobreakspace \textup {(\ref {eq_Kat2})}, this is equivalent, for $T\geq T_0$, to
\begin{align}
	\Omega_k := \Phase{k}^\dagger\Kat{k}^\dagger \L^P_k \L^P_{k-1} \dotsm \L^P_1 P_0 &=  P_0 + O^P({T}^{-1}). \label{eq:ultgoal}
\end{align}
Note that $\Omega_0 = P_0$, that $\Omega_k$ is uniformly bounded by equation\nobreakspace \textup {(\ref {eq_unifboundA})}, and that $P_0 \Omega_k = \Omega_k$.
We start by rearranging $\Omega_k$: with $\Theta_k := \Phase{k}^\dagger\Kat{k}^\dagger {\L}^P_k \Kat{k-1}\Phase{k-1}$ for $k \in \N$, we have $\Omega_k = \Theta_k \Omega_{k-1}$ and $\Theta_0 = P_0$, so that 
\begin{equation*}
	\Omega_k = P_0 + \sum_{n=1}^k (\Theta_n - P_{0})\, \Omega_{n-1} = P_0 + \sum_{n=1}^k (V_n - V_{n-1})\, \Omega_{n-1},
\end{equation*}
where we let $V_n := \sum_{m=1}^n (\Theta_m - P_{0})$ and $V_0 := 0$. By summation by parts, we have
\begin{align}
	\Omega_k &= P_0 + V_k \Omega_{k-1} - \sum_{n=1}^{k-1} V_n(\Omega_n - \Omega_{n-1}) = P_0 + V_k \Omega_{k} - \sum_{n=1}^{k-1} V_n(\Theta_n - \Theta_{0})\Omega_{n-1}. \label{eq:goal}
\end{align}
Therefore, if $\Theta_n - \Theta_{0} = O^P({T}^{-1})$ and $V_n = O^P({T}^{-1})$ for each $n \in \{0,\dotsc,k\}$, then we will have Equation\nobreakspace \textup {(\ref {eq:ultgoal})}. This is a consequence of the next two lemmas.
\begin{lemma} \label{lemma_boundThetas}
For any $k\in\{1,\ldots,T\}$, we have
\begin{gather*}
P_0^j (\Theta_k - \Theta_0)P_0^j = O^P({T}^{-2}) \mbox{ for any }j,\\
P_0^j (\Theta_k - \Theta_0)P_0^\ell = O^P({T}^{-1}) \mbox{ for any }j\neq\ell,\\
Q_0 (\Theta_k - \Theta_0) = (\Theta_k - \Theta_0) Q_0 = 0. 
\end{gather*}
\end{lemma}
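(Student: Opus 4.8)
The plan is to work directly from the definition $\Theta_k = \Phase{k}^\dagger\Kat{k}^\dagger \L^P_k \Kat{k-1}\Phase{k-1}$ and exploit the intertwining relations collected in Section~\ref{sec:unitary}. First I would establish the last identity, which is purely algebraic: since $\Kat{k}$ has $\Kat{k} Q_0 = 0$ and $\Kat{k}^\dagger$ satisfies $Q_0 \Kat{k}^\dagger = 0$ (these follow from $\Kat{k}^\dagger\Kat{k} = P_0$ together with the fact that everything is built out of the peripheral projectors), both $\Theta_k$ and $\Theta_0 = P_0$ annihilate $Q_0$ on either side, hence so does their difference. This reduces everything to the peripheral block, where I can write $\L^P_k = \sum_j e_{k}^j P_{k}^j$ (using that eigennilpotents vanish by H1, as recorded in the remark after H4).

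Next, for the diagonal and off-diagonal estimates I would insert the spectral resolution and use the intertwining relations. Using $\Kat{k-1}\Phase{k-1} = A_{k-1}$ and $A_{k-1} P_0^j = P_{k-1}^j A_{k-1}$, and similarly $P_{k}^j A_{k}^\dagger = A_{k}^\dagger P_0^j$ on the left, one sandwiches $\L^P_k = \sum_j e_k^j P_k^j$ between operators that move the index-$j$ peripheral subspace of $\L_0$ to that of $\L_{k-1}$ on the right and of $\L_k$ on the left. The point is that $\Theta_k$ acting on $\ran P_0^j$ produces, roughly, $(\prod_{n\le k}\bar e_n^j)(\prod_{n\le k-1} e_n^j) \cdot (\text{product of }\kat{}^\dagger\text{'s and }\kat{}\text{'s}) \cdot e_k^j P_k^j \cdot(\cdots)$, and the accumulated phases telescope so that $\Theta_k P_0^j = P_0^j + (\text{correction})$, with the correction governed entirely by how much the projectors $P_n^j$ fail to be constant in $n$. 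By Lemma~\ref{lem:projC2}, $\|P_n^j - P_{n-1}^j\| \le c^P/T$, so each "mismatch" factor like $P_n^j P_{n-1}^j - P_n^j$ is $O^P(T^{-1})$; for the off-diagonal $P_0^j(\Theta_k-\Theta_0)P_0^\ell$ with $j\ne\ell$ one such factor suffices, giving $O^P(T^{-1})$, whereas for the diagonal block $P_0^j(\Theta_k-\Theta_0)P_0^j$ the leading first-order terms cancel — exactly because $\kat{k}^\dagger$ is constructed (via the $(\one - (P_{k}^j-P_{k-1}^j)^2)^{-1/2}$ normalisation) to make $\kat{k}^\dagger\kat{k} = P_{k-1}$ exactly, so the $O(T^{-1})$ errors are quadratic — leaving $O^P(T^{-2})$. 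I would make this precise by Taylor-expanding $(\one-(P_k^j-P_{k-1}^j)^2)^{-1/2} = \one + \tfrac12(P_k^j-P_{k-1}^j)^2 + \dots$ and tracking that the only surviving linear contribution to $P_0^j \Theta_k P_0^j$ after the phase telescoping is $P_0^j\big(\text{something of the form } P_0^j(P_{k-1}^j - P_0^j)\cdots\big)$ which vanishes to first order.

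The bookkeeping for the constant $C$ is where H2 and H3 enter: the number of peripheral eigenvalues is bounded by $N_{\max}$ from \eqref{eq_unifboundNs}, the derivative bound $c^P$ from \eqref{eq_defCPCPp} controls every $\|P_n^j - P_{n-1}^j\|$ and $\|P_n^j - 2P_{n-1}^j + P_{n-2}^j\|$, and the uniform bound \eqref{eq_unifboundK} on $\Kat{k}, \Kat{k}^\dagger$ (valid for $T \ge T_0(c^P)$) keeps the operator norms of all intermediate products controlled by a constant depending only on $c^P$ and $N_{\max}$. I expect the main obstacle to be the diagonal estimate: showing the cancellation of the $O(T^{-1})$ contribution to $P_0^j(\Theta_k-\Theta_0)P_0^j$ cleanly, rather than by a brute-force expansion, requires choosing the right way to group the $\kat{n}^\dagger \L^P_n \kat{n-1}$ blocks so that the telescoping of the $P_n^j$'s is manifest — this is precisely the step for which Tanaka's argument in \cite{Tan11} is the template, and I would follow it, the only new feature being that our $\kat{n}^\dagger$ is a pseudo-adjoint rather than a genuine adjoint, so one must check that the identities $\kat{n}^\dagger\kat{n} = P_{n-1}$, $\kat{n}\kat{n}^\dagger = P_n$ (which hold by construction, Section~\ref{sec:unitary}) are all that is used and self-adjointness is never invoked.
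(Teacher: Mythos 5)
Your proposal takes essentially the same route as the paper: use the intertwining relations of $\Kat{k},\Kat{k}^{\dagger}$ and the phase cancellation to reduce $P_0^j(\Theta_k-\Theta_0)P_0^j$ to a single-step defect, and then observe that the defect is quadratic in $P_k^j-P_{k-1}^j$ because of the $(\one-(P_k^j-P_{k-1}^j)^2)^{-1/2}$ normalisation; the off-diagonal case is $O(T^{-1})$ because $P_k^jP_{k-1}^\ell=P_k^j(P_{k-1}^\ell-P_k^\ell)$ for $j\neq\ell$, and the third identity is algebraic. One small point: the diagonal estimate needs no cancellation between competing $O(T^{-1})$ terms, and no multi-step grouping or telescoping of the $\kat{n}^\dagger\L_n^P\kat{n-1}$ blocks -- after the intertwining relations, $P_0^j(\Theta_k-\Theta_0)P_0^j$ is literally $P_0^j\Kat{k}^\dagger\big(P_k^jP_{k-1}^j-P_k^jP_{k-1}^j(\one-(P_k^j-P_{k-1}^j)^2)^{-1/2}\big)\Kat{k-1}P_0^j$, which is $O^P(T^{-2})$ directly from the Taylor expansion; the vague ``surviving linear contribution $P_0^j(P_{k-1}^j-P_0^j)\cdots$'' does not actually arise, and the grouping/telescoping machinery you anticipate is reserved for Lemma~\ref{lemma_boundVs}, not here.
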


\begin{proof}
The third relation is obvious. To prove the first, fix $j$. Using relations \eqref{eq:Kat} and \eqref{eq_Kat2} we have
\begin{align*}
	P_0^j (\Theta_k - \Theta_{0}) P_0^j&=P^j_0 (\Phase{k}^\dagger\Kat{k}^\dagger \L^P_k \Kat{k-1}\Phase{k-1}-P_{0})P^j_0 \notag\\
	&= P^j_0 (\Kat{k}^\dagger P_k^j \Kat{k-1}-P_{0})P^j_0 \notag\\
	&= P_0^j \Kat{k}^\dagger P_k^j P_{k-1}^j \Kat{k-1}P_0^j - P_0^j \Kat{k}^\dagger \kat{k} \Kat{k-1} P_0^j \notag\\
	&= P_0^j \Kat{k}^\dagger \Big(P_k^jP_{k-1}^j - P_k^jP_{k-1}^j\,\big(\one-(P_k^j-P_{k-1}^j)^2\big)^{-1/2}\Big) \Kat{k-1} P_0^j 
\end{align*}
and using  $\|P_k^j-P_{k-1}^j\| \leq c^P/T$ and $\big(\one-(P_k^j-P_{k-1}^j)^2\big)^{-1/2} = \one  -\frac{1}{2}(P_k^j-P_{k-1}^j)^2 + O^P((P_k^j-P_{k-1}^j)^4) $ we have the first relation.

We now consider the second relation. Fix therefore $j$ and $\ell$, Note that for $j \neq \ell$,
\begin{equation*}
	P^j_0 (\Theta_k - \Theta_0) P^\ell_0 = Z_{k-1}^{j\ell}R_{k}^{j\ell},
\end{equation*}
where
\begin{equation*}
	Z_{k}^{j\ell} := \prod_{n=1}^k \overline e _n^j. e_n^\ell \quad \mbox{and}\quad R_{k}^{j\ell} = \Kat{k}^\dagger P_k^j P_{k-1}^\ell \Kat{k-1}.
\end{equation*}
We have by Lemma\nobreakspace \ref {lemma_UnifBoundLPn}
\begin{equation}
	\|R_{k}^{j\ell}\|
		\leq \| P_k^j\|\| P_{k-1}^\ell - P_{k}^\ell\|
		= O^P({T}^{-1}) \label{eq:R}
\end{equation}
Therefore $\|P_0^j(\Theta_k - \Theta_0)P_0^{\ell}\|= O^P({T}^{-1})$.
\end{proof}

\begin{lemma} \label{lemma_boundVs}
For any $k\in\{1,\ldots,T\}$, we have
\begin{gather*}
P_0^j\,V_k\,P_0^j = O^P({T}^{-1}) \mbox{ for any }j,\\
P_0^j\,V_k\,P_0^\ell = O^P({T}^{-1}) \mbox{ for any }j\neq\ell,\\
Q_0 \,V_k = V_k\, Q_0 = 0. 
\end{gather*}
\end{lemma}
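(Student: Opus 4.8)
\textbf{Proof proposal for Lemma~\ref{lemma_boundVs}.} The plan is to treat the three assertions in increasing order of difficulty, reading everything off Lemma~\ref{lemma_boundThetas} together with the uniform bounds \eqref{eq_unifboundK} and Lemma~\ref{lemma_UnifBoundLPn}. Since $\Theta_0=P_0$ we have $V_k=\sum_{m=1}^k(\Theta_m-\Theta_0)$, so the relation $Q_0V_k=V_kQ_0=0$ follows termwise from the third relation of Lemma~\ref{lemma_boundThetas}. The diagonal bound is almost as immediate: $P_0^j V_k P_0^j=\sum_{m=1}^k P_0^j(\Theta_m-\Theta_0)P_0^j$, and by the first relation of Lemma~\ref{lemma_boundThetas} each summand is $O^P(T^{-2})$; since $k\leq T$, the sum is $O^P(T^{-1})$.

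The substantive point is the off-diagonal estimate for $j\neq\ell$, where summing the $O^P(T^{-1})$ bound of Lemma~\ref{lemma_boundThetas} termwise gives only $k\cdot O^P(T^{-1})=O^P(1)$ and the gain must come from oscillation. I would recall from the proof of Lemma~\ref{lemma_boundThetas} that $P_0^j(\Theta_m-\Theta_0)P_0^\ell=Z_{m-1}^{j\ell}R_m^{j\ell}$, where $Z_m^{j\ell}=\prod_{n=1}^m\overline e_n^j e_n^\ell$ and $R_m^{j\ell}=\Kat{m}^\dagger P_m^j P_{m-1}^\ell\,\Kat{m-1}=\Kat{m}^\dagger P_m^j(P_{m-1}^\ell-P_m^\ell)\Kat{m-1}$, so $\|R_m^{j\ell}\|=O^P(T^{-1})$ by \eqref{cp} and the uniform boundedness of $\Kat{m},\Kat{m}^\dagger$. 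Setting $\omega_m:=\overline e_m^j e_m^\ell$, so $Z_m^{j\ell}=Z_{m-1}^{j\ell}\omega_m$ and, by \textup{H2}, $|\omega_m-1|=|e_m^\ell-e_m^j|>2\epsilon$, one has $Z_{m-1}^{j\ell}=(Z_m^{j\ell}-Z_{m-1}^{j\ell})/(\omega_m-1)$. Abel summation with $a_m:=R_m^{j\ell}/(\omega_m-1)$ then gives
\begin{equation*}
P_0^j V_k P_0^\ell=\sum_{m=1}^k Z_{m-1}^{j\ell}R_m^{j\ell}=\sum_{m=1}^k(Z_m^{j\ell}-Z_{m-1}^{j\ell})a_m=Z_k^{j\ell}a_k-Z_0^{j\ell}a_1-\sum_{m=1}^{k-1}Z_m^{j\ell}(a_{m+1}-a_m).
\end{equation*}
Since $|Z_m^{j\ell}|=1$ and $\|a_m\|\leq\|R_m^{j\ell}\|/(2\epsilon)=O^P(T^{-1})$, the two boundary terms are $O^P(T^{-1})$, and it remains (again using $k\leq T$) to prove $\|a_{m+1}-a_m\|=O^P(T^{-2})$.

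For this I would split $a_{m+1}-a_m=(\omega_{m+1}-1)^{-1}(R_{m+1}^{j\ell}-R_m^{j\ell})+R_m^{j\ell}(\omega_m-\omega_{m+1})\big[(\omega_{m+1}-1)(\omega_m-1)\big]^{-1}$. The second term is $O^P(T^{-2})$ because $|\omega_{m+1}-\omega_m|\leq|e_{m+1}^\ell-e_m^\ell|+|e_{m+1}^j-e_m^j|\leq 2c^P/T$ by Lemma~\ref{lem:projC2}, while $\|R_m^{j\ell}\|=O^P(T^{-1})$ and $|\omega_\bullet-1|\geq2\epsilon$. For the first term, the crucial observation is that in $R_m^{j\ell}=\Kat{m}^\dagger P_m^j(P_{m-1}^\ell-P_m^\ell)\Kat{m-1}$ the only factor that is not merely Lipschitz in $m$ with the others bounded is $(P_{m-1}^\ell-P_m^\ell)$, whose own increment $(P_m^\ell-P_{m+1}^\ell)-(P_{m-1}^\ell-P_m^\ell)=-(P_{m+1}^\ell-2P_m^\ell+P_{m-1}^\ell)$ is a \emph{second} difference of $s\mapsto P^\ell(s)$ at scale $1/T$, hence $O^P(T^{-2})$ by the $C^2$ bound of Lemma~\ref{lem:projC2}. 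Expanding $R_{m+1}^{j\ell}-R_m^{j\ell}$ as a telescoping sum over its four factors, and using $\|\Kat{m+1}^\dagger-\Kat{m}^\dagger\|$, $\|\Kat{m}-\Kat{m-1}\|$, $\|P_{m+1}^j-P_m^j\|=O^P(T^{-1})$ (from \eqref{cp} and $\kat{m+1,T}^\dagger=P_m+O^P(T^{-1})$, $\kat{m,T}=\mathbf 1+O^P(T^{-1})$ on $\ran\Kat{m-1}$), with all factors uniformly bounded and $\|P_{m-1}^\ell-P_m^\ell\|=O^P(T^{-1})$, every resulting term carries at least two factors $T^{-1}$, so $\|R_{m+1}^{j\ell}-R_m^{j\ell}\|=O^P(T^{-2})$. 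This gives $\|a_{m+1}-a_m\|=O^P(T^{-2})$ and hence $\sum_{m=1}^{k-1}\|a_{m+1}-a_m\|=O^P(T^{-1})$, closing the argument.

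I expect the main obstacle to be precisely this last telescoping estimate on $R_{m+1}^{j\ell}-R_m^{j\ell}$: one must recognize the hidden second difference of the spectral projectors to extract the extra $T^{-1}$ (the naive Lipschitz bound only yields $O^P(T^{-1})$, which after summation is useless), and one must verify carefully that the $\kat{k,T}^\dagger$- and $\kat{k,T}$-factors entering $\Kat{k,T}^\dagger,\Kat{k,T}$ are $O^P(T^{-1})$-close to $P_{k-1,T}$, resp. $P_{k,T}$, on the relevant ranges, so that the increments $\|\Kat{m+1}^\dagger-\Kat{m}^\dagger\|$ are genuinely $O^P(T^{-1})$.
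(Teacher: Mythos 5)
Your proof is essentially the same as the paper's: the third assertion is immediate, the diagonal bound follows by summing the $O^P(T^{-2})$ estimate of Lemma~\ref{lemma_boundThetas}, and the off-diagonal bound is obtained by the same Abel summation with weights $Z_m^{j\ell}/(\omega_m-1)$, followed by the crucial observation that $R_{m+1}^{j\ell}-R_m^{j\ell}=O^P(T^{-2})$ because it contains the second difference $(P_m^\ell-P_{m+1}^\ell)-(P_{m-1}^\ell-P_m^\ell)$ of the $C^2$ projector-valued function $s\mapsto P^\ell(s)$. The only cosmetic difference is that you bound $R_{m+1}^{j\ell}-R_m^{j\ell}$ by telescoping over its four factors, whereas the paper first re-sandwiches both terms between $\Kat{m}^\dagger$ and $\Kat{m-1}$ before isolating the second-difference term; the substance and all the needed estimates (including the point you rightly flag, that $\|\Kat{m+1}^\dagger-\Kat{m}^\dagger\|=O^P(T^{-1})$ because $\Kat{m}^\dagger Q_m=0$) are the same.
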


\begin{proof}
Again the third relation is obvious. The first follows from the first relation in Lemma\nobreakspace \ref {lemma_boundThetas} and the definition of $V_n$. To prove the second relation, fix again $j$ and $\ell$. We have
\begin{equation*}
		P_0^j \, V_k\, P_0^\ell = P_0^j \sum_{n=1}^k (\Phase{n}^\dagger\Kat{n}^\dagger \L^P_n \Kat{n-1}\Phase{n-1} - P_0) P_0^\ell = \sum_{n=1}^k Z_{n-1}^{j\ell} R_{n}^{j\ell}.
\end{equation*}
Note that $Z_{n-1}^{j\ell} = \frac{Z_{n}^{j\ell}-Z_{n-1}^{j\ell}}{\overline e _n^j.e_n^\ell-1}$, so that summation by parts yields
	\begin{align*}
		P_0^j \sum_{n=1}^k (\Theta_n - P_0 ) P_0^\ell &= \sum_{n=1}^k (Z_{n}^{j\ell} - Z_{n-1}^{j\ell})\frac{R_{n}^{j\ell}}{\overline e _n^j.e_n^\ell-1} \\
		&= \frac{Z_{k}^{j\ell}R_{k}^{j\ell}}{\overline e _k^j.e_k^\ell-1} - \frac{Z_{0}^{j\ell}R_{1}^{j\ell}}{\overline e _1^j.e_1^\ell-1} - \sum_{n=1}^{k-1} Z_{n}^{j\ell}\bigg(\frac{R_{n+1}^{j\ell}}{\overline e _{n+1}^j.e_{n+1}^\ell-1}-\frac{R_{n}^{j\ell}}{\overline e _n^j.e_n^\ell-1}\bigg).
	\end{align*}
By Lemma\nobreakspace \ref {lemma_UnifBoundLPn} and our gap assumption H2, the first two (boundary) terms are $O^P({T}^{-1})$. Moreover, the remaining summand is
\begin{equation}
	Z_{n}^{j\ell}\bigg(\frac{R_{n+1}^{j\ell}}{\overline e _{n+1}^j.e_{n+1}^\ell-1}-\frac{R_{n}^{j\ell}}{\overline e _{n}^j.e_{n}^\ell-1}\bigg) = \frac{Z_{n}^{j\ell}}{\overline e _{n}^j.e_{n}^\ell-1}(R_{n+1}^{j\ell}-R_{n}^{j\ell}) + O^P({T}^{-2}) \label{eq:summand}
\end{equation}
by Lemma\nobreakspace \ref {lem:eigC2} which implies that $\bar{e}_{n+1}^j e^\ell_{n+1} = \bar{e}^j_n e^\ell_n + O^P({T}^{-1})$. We have
\begin{align*}
		R_{n+1}^{j\ell}-R_{n}^{j\ell} &= \Kat{n+1}^\dagger P_{n+1}^j P_{n}^\ell \,\Kat{n} - \Kat{n}^\dagger P_{n}^j P_{n-1}^\ell \Kat{n-1} \notag \\
			&= \Kat{n}^\dagger  (\kat{n+1}^\dagger P_{n+1}^jP_n^\ell\,\kat{n}-P_n^j P_{n-1}^\ell) \Kat{n-1} \notag\\
			&= \Kat{n}^\dagger  P_n^j\big( P_{n+1}^j P_n^\ell  -P_n^j P_{n-1}^\ell\big)  P_{n-1}^\ell\Kat{n-1} + O^P({T}^{-2}) \notag\\
			&= \Kat{n}^\dagger  P_{n}^j \big( (P_n^\ell-P_{n+1}^\ell)- (P_{n-1}^\ell-P_n^\ell)\big) P_{n-1}^\ell\Kat{n-1} + O^P({T}^{-2}). \notag
\end{align*}
By Lemma\nobreakspace \ref {lem:projC2} and the Taylor-Lagrange formula, the quantity 
\[  (P_n^\ell-P_{n+1}^\ell)- (P_{n-1}^\ell-P_n^\ell)= \big(P^\ell(\tfrac{n}{T})-P^\ell(\tfrac{n+1}{T})\big)- \big(P^\ell(\tfrac{n-1}{T})-P^\ell(\tfrac{n}{T})\big)\]
is an $O^P({T}^{-2})$. Summation over $n$ gives the second relation in Lemma\nobreakspace \ref {lemma_boundVs}.
\end{proof}

\section{Proofs for Section\nobreakspace \ref {sec:Relative-entropy} \label{App:RelativeEntropy}}

To prove Proposition \ref{eq_perturbentropy} we need two technical results. Assume $\eta$ has spectral decomposition $\eta = \sum_{j} \mu_j p_j $,  and denote by $R_0$ its resolvent, i.e. $R_0(z) = (\eta-z)^{-1}$ for $z\not\in\sp \eta$. 
\begin{lemma} \label{lem:perturb}
Let $\eta$ and $R_0(z)$ be as above and let $D$ and $M$ be two matrices on $\H$. Let $f$ be holomorphic in an open domain $\Omega\subset \C$ such that $\sp\eta\subset \Omega$ and let $\Gamma$ be a positively oriented contour in $\Omega$ encircling $\sp \eta$. Let
\begin{align*}
	T_n(M,D,f) := \tr \Big( -\frac{1}{2\i\pi} M \int_{\Gamma} R_0(\zeta)(D R_0(\zeta))^n f(\zeta) \d\zeta \Big)
\end{align*}
Then,
\begin{align}	
T_1(M,D,f) = - \sum_i \tr(M p_i D p_i) f'(\mu_i) - \sum_{ i< j} \tr\big(M (p_i D p_j + p_j D p_i)\big) \frac{f(\mu_i) - f(\mu_j)}{\mu_i-\mu_j}. \label{eq:2R0full}
\end{align}
and, when $[M,\eta]=0$, 
\begin{align}
T_2(M,D,f) &=   \sum_{i} \tr(M D p_i D p_i) \frac{f''(\mu_i)}{2} \notag \\ 
		& \qquad {}+ \sum_{ i \neq j} \tr(M D p_j D p_i) \frac{f'(\mu_i)}{\mu_i-\mu_j} + \tr(MD p_j D p_i) \frac{f(\mu_j)-f(\mu_i)}{(\mu_i-\mu_j)^2}.
\end{align}

\end{lemma}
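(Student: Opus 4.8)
The plan is to compute both $T_1$ and $T_2$ by the holomorphic functional calculus, inserting the spectral decomposition $\eta=\sum_i\mu_ip_i$ into each resolvent and evaluating the resulting contour integrals via Cauchy's formula (for simple poles) and the residue theorem at higher-order poles. The starting point is the identity $R_0(\zeta)=\sum_i\frac{p_i}{\mu_i-\zeta}$, so that $DR_0(\zeta)=\sum_i\frac{Dp_i}{\mu_i-\zeta}$ and products expand into sums over multi-indices of terms of the form $p_{i_0}Dp_{i_1}\cdots Dp_{i_n}$ divided by $\prod_{k}(\mu_{i_k}-\zeta)$.

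First I would treat $T_1(M,D,f)$. Expanding $R_0(\zeta)DR_0(\zeta)=\sum_{i,j}\frac{p_iDp_j}{(\mu_i-\zeta)(\mu_j-\zeta)}$, I split into the diagonal part $i=j$, which contributes $\frac{p_iDp_i}{(\mu_i-\zeta)^2}$ and whose integral $-\frac1{2\i\pi}\int_\Gamma\frac{f(\zeta)}{(\mu_i-\zeta)^2}\d\zeta=f'(\mu_i)$ by the residue theorem (note the sign from the orientation and from $\mu_i-\zeta$), and the off-diagonal part $i\neq j$, where partial fractions give $\frac1{(\mu_i-\zeta)(\mu_j-\zeta)}=\frac1{\mu_i-\mu_j}\big(\frac1{\mu_j-\zeta}-\frac1{\mu_i-\zeta}\big)$ hence integral $\frac{f(\mu_i)-f(\mu_j)}{\mu_i-\mu_j}$ (symmetric in $i,j$). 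Taking the trace against $M$ and regrouping the $i\neq j$ terms into $i<j$ pairs using $\tr(Mp_iDp_j)+\tr(Mp_jDp_i)=\tr(M(p_iDp_j+p_jDp_i))$ gives exactly \eqref{eq:2R0full}; the overall minus sign is already present in the definition of $T_n$.

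For $T_2(M,D,f)$ with $[M,\eta]=0$, I expand $R_0(\zeta)(DR_0(\zeta))^2=\sum_{i,j,k}\frac{p_iDp_jDp_k}{(\mu_i-\zeta)(\mu_j-\zeta)(\mu_k-\zeta)}$. The commutation $[M,\eta]=0$ means $Mp_i=p_iM$, so after cyclicity $\tr(Mp_iDp_jDp_k)=\tr(p_iMp_iDp_jDp_k)\,[\![i\text{'s match}]\!]$; more usefully it forces the trace to vanish unless the first and last projector indices agree, i.e. only $k=i$ survives: $\tr(Mp_iDp_jDp_i)$, with a threefold pole at $\mu_i$ when $j=i$ and a double-plus-simple pole structure $\frac1{(\mu_i-\zeta)^2(\mu_j-\zeta)}$ when $j\neq i$. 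Computing $-\frac1{2\i\pi}\int_\Gamma\frac{f(\zeta)\d\zeta}{(\mu_i-\zeta)^3}=\frac{f''(\mu_i)}{2}$ handles the $j=i$ case, and partial-fractioning $\frac1{(\mu_i-\zeta)^2(\mu_j-\zeta)}$ into $\frac{A}{(\mu_i-\zeta)^2}+\frac{B}{\mu_i-\zeta}+\frac{C}{\mu_j-\zeta}$ with $A=\frac1{\mu_i-\mu_j}$, $C=\frac1{(\mu_i-\mu_j)^2}$, $B=-C$ gives, upon integration, $\frac{f'(\mu_i)}{\mu_i-\mu_j}+\frac{f(\mu_j)-f(\mu_i)}{(\mu_i-\mu_j)^2}$ (the $B$ and $C$ terms combine to a first-difference-of-$f$). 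Assembling these yields the stated formula for $T_2$.

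The main obstacle is purely bookkeeping: getting every sign right (the orientation of $\Gamma$, the $\mu_i-\zeta$ versus $\zeta-\mu_i$ convention, and the leading minus in the definition of $T_n$), correctly identifying which pole orders occur for each index pattern, and carefully justifying that $[M,\eta]=0$ kills all $T_2$ terms except those with matching outer indices (via $Mp_i=p_iM$ and cyclicity of the trace). There is no analytic difficulty — everything is a finite sum of elementary residue computations — so the proof is a careful but routine calculation, and I would present it as such, spelling out $T_1$ in full and then $T_2$ by the same method.
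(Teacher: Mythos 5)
Your approach matches the paper's exactly: insert the spectral decomposition $R_0(\zeta)=\sum_i(\mu_i-\zeta)^{-1}p_i$, use $[M,\eta]=0$ together with cyclicity of the trace to collapse the $T_2$ triple sum so that only the terms with matching outer index ($k=i$) survive, and evaluate the resulting contour integrals by Cauchy's formula and partial fractions. The plan is sound and the final residue values you report are correct. One bookkeeping caution before you write it out in full: your intermediate claim that $-\tfrac{1}{2\i\pi}\int_\Gamma(\mu_i-\zeta)^{-2}f(\zeta)\,\d\zeta=f'(\mu_i)$ has the wrong sign (it equals $-f'(\mu_i)$, since $(\mu_i-\zeta)^2=(\zeta-\mu_i)^2$, so the ``$\mu_i-\zeta$'' orientation produces no extra minus here), and the partial-fraction coefficient you call $A$ should be $(\mu_j-\mu_i)^{-1}$, not $(\mu_i-\mu_j)^{-1}$; these two slips happen to compensate in the quantities you report, but they would surface as errors if each step were spelled out literally.
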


\begin{proof}
We write the resolvent $R_0(\zeta) = \sum_i (\mu_i-\zeta)^{-1} p_i$.  We compute
\begin{align*}
	T_1(M,D,f)&= -\tr\Bigg(\frac{1}{2\i \pi} \sum_{ij} \int_\Gamma  M p_i D p_j (\mu_i -\zeta)^{-1}(\mu_j - \zeta)^{-1} f(\zeta) \d \zeta\Bigg). 
\end{align*}
A standard application of Cauchy's integral formula shows
\begin{align*}
	\frac{1}{2\i \pi} \int_\Gamma (\zeta-\mu_i)^{-1}(\zeta-\mu_j)^{-1} f(\zeta) \d\zeta = \Bigg\{ 
		\begin{array}{c l}
			f'(\mu_i) & \text{if } i=j \\
			\tfrac{f(\mu_i)-f(\mu_j)}{\mu_i-\mu_j} & \text{if } i\neq j 
		\end{array}
\end{align*}
and the result follows from the symmetry
\begin{align*}
	\sum_{i \neq j}\tr (Mp_i D p_j) \frac{f(\mu_i)-f(\mu_j)}{\mu_i-\mu_j} = \sum_{i < j}\tr \big(M(p_i D p_j + p_j D p_i)\big) \frac{f(\mu_i)-f(\mu_j)}{\mu_i-\mu_j}.
\end{align*}
Similarly, assuming $[M,\eta]=0$, and using the cyclicity of the trace,
\begin{align*}
	T_2(M,D,f)
		 &= \tr \Big( -\sum_{i,j,k} \frac{1}{2\i \pi} \int_\Gamma M p_i D p_j D p_k \,  (\mu_i-\zeta)^{-1}(\mu_j-\zeta)^{-1}(\mu_k-\zeta)^{-1}f(\zeta) \d\zeta \Big) \\
		 &= \tr \Big( \sum_{i,j} \frac{1}{2\i \pi} \int_\Gamma M D p_j D p_i\, (\zeta-\mu_i)^{-2}(\zeta-\mu_j)^{-1}f(\zeta) \d\zeta \Big). 
\end{align*}
A standard computation shows
\begin{align*}
	\frac{1}{2\i \pi} \int_\Gamma (\zeta-\mu_i)^{-2}(\zeta-\mu_j)^{-1} f(\zeta) \d\zeta = \Bigg\{ 
		\begin{array}{c l}
			\tfrac{f''(\mu_i)}{2} & \text{if } i=j \\
			\tfrac{f(\mu_j)-f(\mu_i)}{(\mu_j-\mu_i)^2} + \tfrac{f'(\mu_i)}{(\mu_i-\mu_j)} & \text{if } i\neq j 
		\end{array}
\end{align*}
and the result follows.
\end{proof}

\begin{corollary} \label{cor:perturb}
In the setup of Lemma\nobreakspace \ref {lem:perturb}, if $[M,\eta]=0$, we have
\begin{gather}
T_1(M,D,f)
	= -\tr\left(M D f'(\eta)\right) \label{eq:2R0commute}, \\
T_2(\one,D,f) = \sum_i \tr\big((D p_i)^2\big) \frac{f''(\mu_i)}{2} + \sum_{ i < j} \tr( D p_j D p_i) \frac{f'(\mu_i)-f'(\mu_j)}{\mu_i-\mu_j}. \label{eq:3R0id}
\end{gather}
\end{corollary}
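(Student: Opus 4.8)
The plan is to obtain both identities directly from Lemma~\ref{lem:perturb} by specialization, the only inputs being that the hypothesis $[M,\eta]=0$ forces $M$ to commute with every spectral projection $p_i$ of $\eta$ (each $p_i$ being a contour integral of the resolvent of $\eta$, hence a polynomial in $\eta$), together with cyclicity of the trace and the orthogonality $p_i p_j=\delta_{ij}p_i$.

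First I would prove \eqref{eq:2R0commute}. Starting from the expression \eqref{eq:2R0full} for $T_1(M,D,f)$, I note that when $M$ commutes with all $p_i$ the off-diagonal terms vanish: for $i\neq j$, cyclicity gives $\tr(M p_i D p_j)=\tr(p_j M p_i D)=\tr(M p_j p_i D)=0$, and likewise $\tr(M p_j D p_i)=0$. This leaves $T_1(M,D,f)=-\sum_i f'(\mu_i)\,\tr(M p_i D p_i)$. Using $[M,p_i]=0$ and $p_i^2=p_i$ once more, $\tr(M p_i D p_i)=\tr(p_i M D p_i)=\tr(M D p_i)$, so that $T_1(M,D,f)=-\sum_i f'(\mu_i)\,\tr(M D p_i)=-\tr\big(M D\,f'(\eta)\big)$ since $f'(\eta)=\sum_i f'(\mu_i)p_i$ by the holomorphic functional calculus.

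Next I would prove \eqref{eq:3R0id} by setting $M=\one$ in the formula for $T_2(M,D,f)$ of Lemma~\ref{lem:perturb}, the commutation hypothesis being automatic. The first sum is already $\sum_i\tr\big((Dp_i)^2\big)\frac{f''(\mu_i)}{2}$. To turn the remaining sum over ordered pairs $i\neq j$ into a sum over $i<j$, I would use the symmetry $\tr(D p_j D p_i)=\tr\big((Dp_j)(Dp_i)\big)=\tr\big((Dp_i)(Dp_j)\big)=\tr(D p_i D p_j)$ that follows from cyclicity of the trace. Adding the $(i,j)$ and $(j,i)$ contributions for $i<j$, the pieces containing $f(\mu_i)$ and $f(\mu_j)$ cancel because $(\mu_i-\mu_j)^2=(\mu_j-\mu_i)^2$, while $\frac{f'(\mu_i)}{\mu_i-\mu_j}+\frac{f'(\mu_j)}{\mu_j-\mu_i}=\frac{f'(\mu_i)-f'(\mu_j)}{\mu_i-\mu_j}$, giving precisely the term $\tr(D p_j D p_i)\frac{f'(\mu_i)-f'(\mu_j)}{\mu_i-\mu_j}$ summed over $i<j$.

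I do not expect any real obstacle here, since the corollary is bookkeeping layered on Lemma~\ref{lem:perturb}. The only steps that need a little care are the translation of $[M,\eta]=0$ into $[M,p_i]=0$ for every $i$, and the symmetrization in the $T_2$ computation, where both the cancellation of the $f$-terms and the recombination of the $f'$-terms into a divided difference rely on $\tr(D p_j D p_i)=\tr(D p_i D p_j)$.
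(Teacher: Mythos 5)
Your proof is correct, and it is the natural verification the paper leaves to the reader: the paper states Corollary~\ref{cor:perturb} without an explicit proof, treating it as an immediate specialization of Lemma~\ref{lem:perturb}. Your steps — passing from $[M,\eta]=0$ to $[M,p_i]=0$, killing the off-diagonal terms in \eqref{eq:2R0full} by cyclicity and $p_ip_j=0$, recognizing $f'(\eta)=\sum_i f'(\mu_i)p_i$, and symmetrizing the $T_2$ sum using $\tr(Dp_jDp_i)=\tr(Dp_iDp_j)$ so the $f$-terms cancel and the $f'$-terms combine into a divided difference — are exactly the intended bookkeeping.
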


We can now compute the expansion of the relative entropy of a state with respect to its perturbation up to second order in the perturbation to end the proof of Equation\nobreakspace \textup {(\ref {eq_perturbentropy})}.

With $\epsilon(\eta) := \inf\sp\eta$, let~$\Gamma$ be a (positively oriented) rectangular contour satisfying $\operatorname{dist}(\Gamma, \sp \eta) \geq \tfrac{1}{2}\epsilon(\eta)$ and $\operatorname{dist}(\Gamma, 0) \geq \tfrac{1}{2}\epsilon(\eta)$. For some $D_\eta > 0$ small enough and $\|D_\ell\| \leq D_\eta$, $\Gamma$ encloses the spectrum of $\eta+D_\ell$. Then, for $\|D_\ell\| \leq \tfrac{1}{4}\epsilon(\eta)$ and all $\zeta \in \Gamma$, we have $\|R_0(\zeta) D_\ell \| \leq 1/2$ and the Neumann series
\[
	R_\ell(\zeta)=R_0(\zeta) \big(\one + D_\ell R_0(\zeta)\big)^{-1}= R_0(\zeta) \sum_{n\geq0}(-D_\ell R_0(\zeta))^n,
\]
where $R_\ell$ denotes the resolvent of $\eta + D_\ell$. Therefore using linearity of the trace
\begin{align*}
	S(\eta+D_1|\eta+D_2) &= \tr\Big(-\frac{1}{2\i \pi}\int_\Gamma\zeta\log\zeta R_1(\zeta) \d\zeta + (\eta+D_1) \frac{1}{2\i \pi}\int_\Gamma\log\zeta R_2(\zeta) \d\zeta\Big) \\
	&= \tr\Big( -\frac{1}{2\i \pi}\int_\Gamma\zeta\log\zeta R_0(\zeta)\sum_{n\geq0}(-D_1 R_0(\zeta))^n \d\zeta \\
	&\qquad\qquad+ (\eta+D_1)\, \frac{1}{2\i \pi}\int_\Gamma\log\zeta R_0(\zeta)\sum_{n\geq0}(-D_2 R_0(\zeta))^n \d\zeta\Big)  \\
	&= \tr \Big( \frac{1}{2\i \pi} \int_\Gamma\zeta \log \zeta R_0(\zeta)  D_1 R_0(\zeta) - \zeta \log \zeta R_0(\zeta)(D_1 R_0(\zeta))^2 \d \zeta \Big) -\tr(D_1\log\eta)\\
	&\qquad{}-\tr \Big(
	(\eta + D_1) \frac{1}{2\i \pi}\int_\Gamma \big(\log \zeta R_0(\zeta) D_2 R_0(\zeta) - \log \zeta R_0(\zeta) D_2 R_0(\zeta) D_2 R_0(\zeta)\big)\d \zeta \Big) \\
	&\qquad{} + r(\Gamma, \eta, D_1,D_2)\\
	&= - T_1(\one, D_1, f) + T_2(\one, D_1, f) - \tr(D_1\log\eta) + T_1(\eta,D_2,g) + T_1(D_1,D_2, g) \\
	&\qquad{} \qquad{}  - T_2(\eta,D_2,g)-T_2(D_1,D_2,g)  + r(\Gamma, \eta, D_1,D_2), 
\end{align*}
where
\begin{align*}
r(\Gamma, \eta,D_1,D_2) :&{\!}= \tr\Big(-\frac1{2\i\pi} \int_{\Gamma} \zeta \log \zeta \, R_0(\zeta)\, \sum_{n\geq 3} \big(-D_1 R_0(\zeta)\big)^n\, \d \zeta\Big)\\
 &\qquad + \tr\Big(-\frac{\eta+D_1}{2\i\pi} \int_{\Gamma} \zeta \log \zeta \, R_0(\zeta)\, \sum_{n\geq 3} \big(-D_2 R_0(\zeta)\big)^n\, \d \zeta\Big).
\end{align*}
Then, using and $\|D_\ell R_0(\zeta)\|^3 \leq \tfrac{8}{\epsilon(\eta)^3} \|D_\ell\|^3$ and $|\log \zeta| \leq c \log(\tfrac{1}{2}\epsilon(\eta))$ for all $\zeta \in \Gamma$, we have the estimate
\begin{align*}
	|r(\Gamma, \eta,D_1,D_2)| &\leq c \dim\H \frac{|\log(\epsilon(\eta))| + \log 2}{\epsilon(\eta)^4} (\|D_1\| + \|D_2\|)^3,
\end{align*}
where $c$ is a numerical constant (possibly changing form estimate to estimate). Similarly, we get that $|T_2(D_1,D_2,g)|$ is bounded above by the same quantity, which defines $C_\eta$.

Finally, Lemma\nobreakspace \ref {lem:perturb} and Corollary\nobreakspace \ref {cor:perturb} yield
\begin{align*}
 T_1(\one,D_1,f) &= -\tr\big(D_1(\log \eta + \one)\big) = -\tr(D_1 \log \eta),\\
 T_1(\eta,D_2,g) &= -\tr(\eta D_2\eta\inv) = -\tr(D_2) =0,\\
 T_1(D_1,D_2,g) &= -\sum_i \tr (D_1 p_i D_2 p_i) \frac{1}{\mu_i} - \sum_{i< j}\tr\big(D_1(p_iD_2p_j + p_j D_2 p_i)\big) \frac{\log(\mu_i)-\log(\mu_j)}{\mu_i-\mu_j},\\
T_2(\one,D_1,f) &= \sum_i \tr\big((D_1p_i)^2\big) \frac{1}{2 \mu_i} + \sum_{i<j} \tr(D_1p_j D_1p_i) \frac{\log(\mu_i)-\log(\mu_j)}{\mu_i-\mu_j},\\
T_2(\eta,D_2,g) &= -\sum_{i}  \tr((D_2 p_i)^2)\frac{1}{2 \mu_i} +\sum_{i < j}  \tr(D_2 p_j D_2 p_i)\frac{\log(\mu_j)-\log(\mu_i)}{\mu_i-\mu_j},
\end{align*}
and putting things together,
\begin{align}  \nonumber
S(\eta+D_1|\eta+D_2) &= \sum_i\tr \Big(\big((D_1-D_2)p_i\big)^2\Big)(2\mu_i)\inv \label{eq_Sperturbed}\\
	&\qquad{} + \sum_{i<j} \tr((D_1 -D_2)p_j (D_1 - D_2)p_i) \frac{\log(\mu_i) -\log(\mu_j)}{\mu_i-\mu_j}  + O_\eta(\|D\|^3),
\end{align}
where $O_\eta(\|D\|^3)$ denotes a term that is bounded in absolute value by $C_\eta (\|D_1\|+\|D_2\|)^3$ for $\|D_1\|$ and $\|D_2\|$ small enough.

\section{Proofs for Section\nobreakspace \ref {sec:Application} \label{App:Application}}

\paragraph{Proof of Proposition\nobreakspace \ref {prop:normTospr}}
By the discussion in Section\nobreakspace \ref {ssec:RISmathdesc}, every CPTP map is a contraction, and therefore $s\mapsto \L(s)$ satisfies \textup{H1}. By the Perron-Frobenius theorem, peripheral eigenvalues are simple. In addition, they must be a finite subgroup of $S^1$; since their number is bounded by $N_{\max}$ as defined in \eqref{eq_unifboundNs}, there is a gap between them, so, using continuity, \textup{H2} is satisfied. 

Then there exists $z(s)$ such that $\sp\,\L(s) \cap S^1= S_{z(s)}$. The gap between peripheral eigenvalues and \textup{wH4} (and in particular, \textup{H4}) imply that for any $e\in S_{z(s)}$, there exists a neighbourhood $O$ of $e$ such that for any $s'\in[0,1]$, the only possible eigenvalue of $\L(s')$ in $O$ is $e$. Therefore, if $e$ is a peripheral eigenvalue of $\L(s)$ for some $s$, then the results of \S 5, chapter 2 of \cite{Kato} imply that $e$ is still an eigenvalue of $\L(s')$ for $s'$ in a neighbourhood of $s$. A connectedness argument then shows that the peripheral spectrum of $\L(s)$ does not depend on $s$, and it is then necessarily of the form~$S_z$. 

Assume first \textup{H4}. Because peripheral eigenvalues are simple eigenvalues of $\L(s)$, which is a $C^2$ function of $s$, the same argument as for Lemma\nobreakspace \ref {lem:projC2} shows \textup{H3}, and the proof is complete. If we now assume \textup{wH4} in place of \textup{H4}, then the proof of Lemma\nobreakspace \ref {lemma_wH4H4} shows that there exists $m_0$ such that for $m\geq m_0$, the map $s\mapsto \L^m(s)$ satisfies \textup{H1}, \textup{H4}, and the peripheral part $(\L^{m})^P(s)$ is continuous. Again the proof of Lemma\nobreakspace \ref {lem:projC2} shows that $s\mapsto \L^m(s)$ satisfies \textup{H3}. Therefore, we only need to make sure that we choose $m$ so that the peripheral eigenvalues of every $\L^m(s)$ are still simple. Since the peripheral eigenvalues are the set $S_z$ for some $z$, it is enough to consider $m\geq m_0$ such that $\gcd(m,z)=1$.

\section{Proofs for Section\nobreakspace \ref {sec:scl} \label{App:scl}}

\paragraph{Proof of Lemma\nobreakspace \ref {lem:expandUwU}}

Let us drop the subscripts $k$ and $T$ for notational simplicity. Since $U = \exp( -\i\tau (\ham_0 + \lambda v))$, this $U$ is analytic in $\lambda$. Since we assume $\omega := \sysstate\invar \otimes \envstate$ admits a second order expansion  in $\lambda$ as well, we have
\begin{align*}	
\omega &= \omega^\lambda = \omega^{(0)} + \lambda \omega^{(1)} + O(\lambda^2), &
U &= U^{\lambda} = U^{(0)} + \lambda U^{(1)} + O(\lambda^2).
\end{align*}

Set $R^\lambda(z) = (\ham_0 + \lambda v - z)\inv$ to be the resolvent of the coupled Hamiltonian and $R_0(z) = (\ham_0 - z)\inv$. The holomorphic functional calculus yields
\begin{align*}
	U^\lambda \omega^\lambda (U^\lambda)^* &= \frac{1}{(2\i \pi)^2} \int_\Gamma \int_{\Gamma'} \exp(-\i\tau(\zeta-\zeta'))R^\lambda(\zeta)\omega^\lambda R^\lambda(\zeta') \d\zeta' \d\zeta,
	\end{align*}
	where $\Gamma'$ is a contour contained in the interior of $\Gamma$, and both contain the spectrum of the coupled Hamiltonian $h$. We substitute the Neumann expansion $R^\lambda(z) = R_0(z) [ \one + \lambda v R_0(z)]\inv = R_0(z) [ \one - \lambda v R_0(z)  + O(\lambda^2)]$ for $z\in\Gamma$, to obtain
	\begin{align*}
	U^\lambda \omega^\lambda (U^\lambda)^* &= \frac{1}{(2\i \pi)^2} \int_\Gamma \int_{\Gamma'} \exp(-\i\tau(\zeta-\zeta'))R_0(\zeta)[\one - \lambda v R_0(\zeta)]\omega^\lambda R_0(\zeta')[\one - \lambda v R_0(\zeta')] \d\zeta' \d\zeta + O(\lambda^2),
	\end{align*}
	which we rearrange as, using $U^{(0)} \omega^{(0)} (U^{(0)})^* = \omega^{(0)}$,
	\begin{align*}
	\omega^{(0)} + \lambda U^{(0)} \omega^{(1)} (U^{(0)})^* - \frac{\lambda}{(2\i \pi)^2} \int_\Gamma \int_{\Gamma'} \exp(-\i\tau(\zeta-\zeta'))R_0(\zeta)[v R_0(\zeta) \omega^{(0)} + \omega^{(0)} R_0(\zeta') v ] R_0(\zeta') \d\zeta' \d\zeta + O(\lambda^2).
\end{align*}
We  compute these integrals using standard techniques. For example, the first term is
\begin{align*}
	I :=\frac{1}{(2\i \pi)^2} \int_\Gamma \int_{\Gamma'} \exp(-\i\tau(\zeta-\zeta'))R_0(\zeta)v R_0(\zeta) R_0(\zeta') \d\zeta' \d\zeta.
	\end{align*}
	We apply the first resolvent identity on the last factor $R_0(\zeta) R_0(\zeta')$, then perform the $\zeta'$ integral. Next, we write remaining resolvents using the spectral representation $\ham_0 = \sum_i \pi_i E_i$, and use Cauchy's integral formula to obtain
	 \begin{align*}
 I =- \sum_{i} \pi_i v  \pi_i (-\i\tau) - \sum_{i \neq j} \pi_i v  \pi_j \left( \frac{\exp(-\i\tau(E_i-E_j))}{E_i-E_j}-\frac{1}{E_i-E_j}\right).
\end{align*}
We deal with the other term in the same way to obtain, using $[\omega^{(0)}, R_0(z)]=0$,
\begin{align*}
U^\lambda \omega^\lambda (U^\lambda)^* &= 
	\omega^{(0)} + \lambda U^{(0)} \omega^{(1)} (U^{(0)})^* \\
	&\qquad- \lambda \Big[\omega^{(0)}, \sum_{i} \pi_i v  \pi_i (-\i\tau) + \sum_{i \neq j} \pi_i v  \pi_j \Big( \frac{\exp(-\i\tau(E_i-E_j))-1}{E_i-E_j}\Big)\Big] + O(\lambda^2).
\end{align*}

\paragraph{Proof of Lemma\nobreakspace \ref {expp1} }
 We drop the variable $s$ in the notation.
The map $\lambda\mapsto \L^\lambda$ is analytic, and $\lambda=0$ is an (isolated) exceptional point in the sense of \cite{Kato} for the degenerate  eigenvalue 1. For $\lambda\neq 0$ in $D_0\subset \C$,  a sufficiently small neighbourhood of zero independent of $s\in[0,1]$ (see Lemma \ref{unifpert}), the one dimensional projector $\tilde P_1^1(\lambda)$ onto the invariant 
subspace of $ \L^\lambda$ admits a Puiseux series expansion that is single valued, since the constant eigenvalue $1$ is single valued. Moreover, the equality $\|\tilde P_1^1(\lambda)\|=1$ established in Lemma\nobreakspace \ref {lemma_UnifBoundLPn} shows that $\lambda\mapsto \tilde P_1^1(\lambda)$ is actually analytic in~$D_0$. Let $\lambda_0>0$ be small enough so that $\tilde \sysstate(\lambda)=\tilde P_1^1(\lambda)\sysstate\invar(\lambda_0)\neq 0$ and is analytic for $\lambda\in D_0$, possibly shrinking $D_0$. Thanks to the fact that $\L^\lambda$ is CPTP, $\tilde \sysstate(\lambda)\geq 0$ for $\lambda$ real, so that
$\sysstate\invar(\lambda)=\tilde \sysstate(\lambda)/\tr (\tilde \sysstate(\lambda))$ is the unique invariant state of $\L^\lambda$ for $\lambda\in \R$ and admits an analytic extension for $\lambda\in D_0$.

\paragraph{Proof of Lemma\nobreakspace \ref {regproj1}}	
We drop the indices $j$ in the proof. We know that $P(\lambda,s )$ is analytic in $\lambda\in D_0$, $D_0$ in a fixed neighbourhood of the origin. Moreover, $\L^\lambda(s)$ being $C^2$ in $(\lambda,s)\in D_0\times [0,1]$, $P(\lambda,s)$ is~$C^2$ in $(\lambda,s)\in (D_0\setminus\{0\})\times [0,1]$, thanks to its expression as a Riesz integral. Let $\Gamma_r$ be a circle centred at the origin of radius $r$, such that $\Gamma_r\subset D_0$. We have for all $s\in [0,1]$, $\lambda\in D_0\setminus\{0\}$, 
\begin{align}\label{pexplam}
P(\lambda,s )=\sum_{j\geq 0}p_j(s)\lambda^j, \quad \text{with} \quad p_j(s)=\frac{1}{2\i \pi} \int_{\Gamma_r} \frac{P(\lambda,s )}{\lambda^{j+1}}\d\lambda,\ \ j\in \N,
\end{align}
and we can take up to two derivatives with respect to $s$ under the integral sum.
As the derivatives $\partial_s^k P(\lambda,s)$, $k=0,1,2$, are bounded on the compact set $\Gamma_r\times [0,1]$ there exists a $K>0$ such that 
\[
\| p_j^{(k)}(s) \|\leq \sup_{\lambda\in\Gamma_r \atop s\in [0,1]}\|\partial_s^k P(\lambda,s)\|/r^j\leq K/r^j \quad \text{for} \quad j\in \N,\, k\in \{0,1,2\}.
\]
Thus, for $0<|\lambda| <r$, $\partial_s^k P(\lambda,s )$, $k=1,2$ are obtained by taking derivatives in the series (\ref{pexplam}), and the convergence of the resulting series is normal. We have for any $s\in [0,1]$, $0<|\lambda| <r/2$
\[
\Big\|\partial_s^kP(\lambda,s)\Big\|\leq \sum_{j\geq 0}\|p_j^{(k)}(s)\| |\lambda|^j\leq \frac{K}{1-|\lambda|/r}\leq 2K . \qedhere
\]

\paragraph{Proof of Lemma\nobreakspace \ref {lemessy}}

For $\lambda\in \R^*$, we have the spectral decomposition
\[
{\L^\lambda(s)}^mQ(\lambda,s)=\sum_{j}{\tilde e^j(\lambda, s)}^m\tilde P^j(\lambda,s)+\sum_{j>1}{\tilde e_1^j(\lambda, s)}^m(\tilde P_1^j(\lambda,s)+\tilde N_1^j(\lambda,s)),
\]
where $\tilde N_1^j(\lambda,s)$ are the  eigennilpotents associated with degenerate eigenvalues. Hence
\begin{align}\label{upbo}
\| {\L^\lambda(s)}^mQ(\lambda,s)\| & \leq \big(\sup_{t\in[0,1]}\spr \big({\L^\lambda(t)}^mQ(\lambda,t)\big)\big)\Big(\sum_j\|\tilde P^j(\lambda,s)\|+\sum_{j>1}(\|\tilde P_1^j(\lambda,s)\|+\|\tilde N_1^j(\lambda,s)\|)\Big)\nonumber \\
& \leq  \sup_{t\in[0,1]}\big(\spr (\L^\lambda(t)Q(\lambda,t))\big)^m \tilde Q(\lambda)=S(\lambda)^m\tilde Q(\lambda),
\end{align}
where $1<\tilde Q(\lambda):=\sup_{s\in[0,1]}\sum_j\|\tilde P^j(\lambda,s)\|+\sum_j(\|\tilde P_1^j(\lambda,s)\|+\|\tilde N_1^j(\lambda,s)\|)$ and 
\[
S(\lambda):=\sup_{t\in[0,1]}\spr (\L^\lambda(t)Q(\lambda,t))<1.
\]
The eigenvalues $\tilde e_j(\lambda, s)$ are analytic at $\lambda=0$, and $\tilde e_1^j(\lambda, s)$ are given by a converging Puiseux series of the form 
\begin{align}\label{puiseux}
\tilde e_1^j(\lambda, s)-1=\sum_{k=1}^\infty \lambda^{k/p}\alpha_j(k), \quad \text{where} \quad p\in\{1,\dots,\dim \H_\sys\}.
\end{align}
The moduli of these eigenvalues is strictly inferior to $1$ for $\lambda\in\R^*$ small, and we get (\ref{estspr}), with~$r>0$ the exponent corresponding to the largest non-zero leading term in the expansion (\ref{puiseux}). Correspondingly, the eigenprojectors and eigennilpotents admit Puiseux series expansions of the form (\ref{puiseux}), with finitely many negative powers of $\lambda^{1/p}$. Therefore, $\tilde Q(\lambda)$ 
is bounded above, for~$|\lambda|$ small, by a constant times $1/\lambda^{r'}$, with $r'\geq 0$.

Hence, the left hand side of (\ref{upbo}) is bounded above by $1-G(\lambda)$ for any  $0<G(\lambda)\leq1$, if 
\[
m\geq \frac{\log \big(\tilde Q(\lambda)/(1-G(\lambda))\big)}{|\log S(\lambda)|}.
\]
This will be true if we choose $m(\lambda)$ as stated, for a suitable $M_0>0$, when $|\lambda| \neq 0$ is small enough.

In case $\dim \H_\sys =2$, all spectral data are analytic around $\lambda=0$. For the simple eigenvalues $\tilde e^1(\lambda,s),$ $\tilde e^2(\lambda,s)=\overline{\tilde e^1(\lambda,s)}$ that satisfy $|\tilde e^j(\lambda,s)|<1$ for $\lambda\neq 0$ real, one gets, generically,  $|\tilde e^j(\lambda,s)|=1-f^j(s)\lambda^2+O(\lambda^3)$, for some $f_j(s)>0$. The same is true for the real analytic eigenvalues $\tilde e_1^2(\lambda,s)<1$, so that generically, $|\tilde e_1^j(\lambda,s)|=1-f_1^j(s)\lambda^2+O(\lambda^3)$ as well. This yields the exponent $r=2$. The eigenprojections being analytic, and the eigennilponent being identically  zero, the exponent $r'$ equals $0$, which means the lower bound for $m(\lambda)$ contains no logarithmic factor.

{\small
\bibliographystyle{abbrv}
\bibliography{risref}

\providecommand{\MR}[1]{}
\begin{thebibliography}{10}

\bibitem{ASF}
W.~Abou~Salem and J.~Fr{\"o}hlich.
\newblock Adiabatic theorems and reversible isothermal processes.
\newblock {\em Lett. Math. Phys.}, 72:153--163, 2005.

\bibitem{OQS1}
S.~Attal, A.~Joye, and C.-A. Pillet, editors.
\newblock {\em Open quantum systems. {I}}, volume 1880 of {\em Lecture Notes in
  Mathematics}.
\newblock Springer-Verlag, Berlin, 2006.
\newblock The Hamiltonian approach, Lecture notes from the Summer School held
  in Grenoble, June 16--July 4, 2003.

\bibitem{OQS2}
S.~Attal, A.~Joye, and C.-A. Pillet, editors.
\newblock {\em Open quantum systems. {II}}, volume 1881 of {\em Lecture Notes
  in Mathematics}.
\newblock Springer-Verlag, Berlin, 2006.
\newblock The Markovian approach, Lecture notes from the Summer School held in
  Grenoble, June 16--July 4, 2003.

\bibitem{OQS3}
S.~Attal, A.~Joye, and C.-A. Pillet, editors.
\newblock {\em Open quantum systems. {III}}, volume 1882 of {\em Lecture Notes
  in Mathematics}.
\newblock Springer-Verlag, Berlin, 2006.
\newblock Recent developments, Lecture notes from the Summer School held in
  Grenoble, June 16--July 4, 2003.

\bibitem{AttalPautrat}
S.~Attal and Y.~Pautrat.
\newblock From repeated to continuous quantum interactions.
\newblock {\em Ann. Henri Poincar\'e}, 7(1):59--104, 2006.

\bibitem{AE99}
J.~E. Avron and A.~Elgart.
\newblock Adiabatic theorem without a gap condition.
\newblock {\em Comm. Math. Phys.}, 203(2):445--463, 1999.

\bibitem{AFGG}
J.~E. Avron, M.~Fraas, G.~M. Graf, and P.~Grech.
\newblock Adiabatic theorems for generators of contracting evolutions.
\newblock {\em Comm. Math. Phys.}, 314(1):163--191, 2012.

\bibitem{ASY}
J.~E. Avron, R.~Seiler, and L.~G. Yaffe.
\newblock Adiabatic theorems and applications to the quantum hall effect.
\newblock {\em Commun. Math. Phys.}, 110:33--49, 1987.

\bibitem{BornFock}
M.~{Born} and V.~{Fock}.
\newblock {Beweis des Adiabatensatzes}.
\newblock {\em Zeitschrift fur Physik}, 51:165--180, Mar. 1928.

\bibitem{BJM06}
L.~Bruneau, A.~Joye, and M.~Merkli.
\newblock Asymptotics of repeated interaction quantum systems.
\newblock {\em J. Funct. Anal.}, 239(1):310--344, 2006.

\bibitem{BJM08}
L.~Bruneau, A.~Joye, and M.~Merkli.
\newblock Random repeated interaction quantum systems.
\newblock {\em Comm. Math. Phys.}, 284(2):553--581, 2008.

\bibitem{BJM14}
L.~Bruneau, A.~Joye, and M.~Merkli.
\newblock Repeated interactions in open quantum systems.
\newblock {\em J. Math. Phys.}, 55(7):075204, 2014.

\bibitem{BruPil}
L.~Bruneau and C.-A. Pillet.
\newblock Thermal relaxation of a {QED} cavity.
\newblock {\em J. Stat. Phys.}, 134(5-6):1071--1095, 2009.

\bibitem{Cipriani}
F.~Cipriani.
\newblock Dirichlet forms on noncommutative spaces.
\newblock In {\em Quantum potential theory}, volume 1954 of {\em Lecture Notes
  in Math.}, pages 161--276. Springer, Berlin, 2008.

\bibitem{Crooks}
G.~E. Crooks.
\newblock Quantum operation time reversal.
\newblock {\em Phys. Rev. A}, 77:034101, Mar 2008.

\bibitem{DerFru}
J.~Derezi{\'n}ski and R.~Fr{\"u}boes.
\newblock Fermi golden rule and open quantum systems.
\newblock In {\em Open quantum systems. {III}}, volume 1882 of {\em Lecture
  Notes in Math.}, pages 67--116. Springer, Berlin, 2006.

\bibitem{DKS98}
A.~Dranov, J.~Kellendonk, and R.~Seiler.
\newblock Discrete time adiabatic theorems for quantum mechanical systems.
\newblock {\em J. Math. Phys.}, 39(3):1340--1349, 1998.

\bibitem{EHK}
D.~E. Evans and R.~H{\o}egh-Krohn.
\newblock Spectral properties of positive maps on {$C^*$}-algebras.
\newblock {\em J. London Math. Soc. (2)}, 17(2):345--355, 1978.

\bibitem{FagUma}
F.~Fagnola and V.~Umanit{\`a}.
\newblock Generators of kms symmetric markov semigroups on {$\mathcal B(h)$}
  symmetry and quantum detailed balance.
\newblock {\em Comm. Math. Phys.}, 298(2):523--547, 2010.

\bibitem{GolLin}
S.~Goldstein and J.~M. Lindsay.
\newblock Kms-symmetric markov semigroups.
\newblock {\em Mathematische Zeitschrift}, 219(1):591--608, 1995.

\bibitem{Gro}
U.~Groh.
\newblock The peripheral point spectrum of {S}chwarz operators on {$C^{\ast}
  $}-algebras.
\newblock {\em Math. Z.}, 176(3):311--318, 1981.

\bibitem{HoroParr}
J.~M. Horowitz and J.~M.~R. Parrondo.
\newblock Entropy production along nonequilibrium quantum jump trajectories.
\newblock {\em New Journal of Physics}, 15(8):085028, 2013.

\bibitem{JOPP12}
V.~Jak{\v{s}}i{\'{c}}, Y.~Ogata, Y.~Pautrat, and C.-A. Pillet.
\newblock Entropic fluctuations in quantum statistical mechanics. an
  introduction.
\newblock {\em Quantum Theory from Small to Large Scales}, pages 213--410,
  2012.

\bibitem{JP14}
V.~Jak{\v{s}}i{\'{c}} and C.-A. Pillet.
\newblock A note on the {Landauer} principle in quantum statistical mechanics.
\newblock {\em J. Math. Phys.}, 55(7):075210--75210:21, jul 2014.

\bibitem{JPW}
V.~Jak{\v{s}}i{\'c}, C.-A. Pillet, and M.~Westrich.
\newblock Entropic fluctuations of quantum dynamical semigroups.
\newblock {\em J. Stat. Phys.}, 154(1-2):153--187, 2014.

\bibitem{Joye}
A.~Joye.
\newblock General adiabatic evolution with a gap condition.
\newblock {\em Commun. Math. Phys.}, 275:139--162, 2007.

\bibitem{KatoPaper}
T.~Kato.
\newblock On the adiabatic theorem of quantum mechanics.
\newblock {\em J. Phys. Soc. Japan}, 5:435, Nov. 1950.

\bibitem{Kato}
T.~Kato.
\newblock {\em Perturbation Theory for Linear Operators}.
\newblock Classics in mathematics. Springer, 1976.

\bibitem{Kummerer}
B.~K{\"u}mmerer.
\newblock Quantum {M}arkov processes and applications in physics.
\newblock In {\em Quantum independent increment processes. {II}}, volume 1866
  of {\em Lecture Notes in Math.}, pages 259--330. Springer, Berlin, 2006.

\bibitem{La}
R.~Landauer.
\newblock Irreversibility and heat generation in the computing process.
\newblock {\em IBM Journal of Research and Development}, 5:183--191, 1961.

\bibitem{Marden}
M.~Marden.
\newblock {\em The {G}eometry of the Zeros of a Polynomial in a Complex
  Variable}.
\newblock Mathematical Surveys, No. 3. American Mathematicaliety, New York, N.
  Y., 1949.

\bibitem{Nen}
G.~Nenciu.
\newblock On the adiabatic theorem of quantum mechanics.
\newblock {\em J. Phys. A: Math. Gen.}, 13:15--18, 1980.

\bibitem{OP}
M.~Ohya and D.~Petz.
\newblock {\em Quantum entropy and its use}.
\newblock Texts and Monographs in Physics. Springer-Verlag, Berlin, 1993.

\bibitem{Ras}
A.~Rastegin.
\newblock Relations for certain symmetric norms and anti-norms before and after
  partial trace.
\newblock {\em J. Stat. Phys.}, (148):1040--1053, 2012.

\bibitem{RW13}
D.~Reeb and M.~M. Wolf.
\newblock An improved {Landauer} principle with finite-size corrections.
\newblock {\em New J. Phys.}, 16(10):103011, 2014.

\bibitem{RD}
B.~Russo and H.~Dye.
\newblock A note on unitary operators in {$C^{\ast} $}-algebras.
\newblock {\em Duke Math. J.}, 33:413--416, 1966.

\bibitem{Schmid}
J.~Schmid.
\newblock Adiabatic theorems with and without spectral gap condition for non-
  semisimple spectral values.
\newblock {\em arXiv:1401.0089}, 2014.

\bibitem{Sch}
R.~Schrader.
\newblock Perron-{F}robenius theory for positive maps on trace ideals.
\newblock In {\em Mathematical physics in mathematics and physics ({S}iena,
  2000)}, volume~30 of {\em Fields Inst. Commun.}, pages 361--378. Amer. Math.
  Soc., Providence, RI, 2001.

\bibitem{Tan11}
A.~Tanaka.
\newblock Adiabatic theorem for discrete time evolution.
\newblock {\em J. Phys. Soc. Japan}, 80(12), 2011.

\bibitem{Teufel}
S.~Teufel.
\newblock A note on the adiabatic theorem without gap condition.
\newblock {\em Lett. Math. Phys.}, 58:261--266, 2001.

\bibitem{wolftour}
M.~M. Wolf.
\newblock Quantum channels \& operations: Guided tour.
\newblock
  \url{http://www-m5.ma.tum.de/foswiki/pub/M5/Allgemeines/MichaelWolf/QChannelLecture.pdf},
  2012.
\newblock Lecture notes based on a course given at the Niels-Bohr Institute.

\end{thebibliography}
}

\end{document}